\documentclass[11pt]{article}

\usepackage{times}
\usepackage[compact]{titlesec}
\usepackage{enumitem}
\setlist{nosep}
\usepackage{microtype}
\usepackage{etoolbox}
\AtBeginDocument{
  \setlength{\abovedisplayskip}{6pt}
  \setlength{\belowdisplayskip}{6pt}
  \setlength{\abovedisplayshortskip}{3pt}
  \setlength{\belowdisplayshortskip}{3pt}
}

\usepackage{graphicx}
\usepackage{amsmath}
\usepackage{amssymb}
\usepackage{braket}
\usepackage{quantikz}
\usepackage{amsthm}
\usepackage[top=0.8in, bottom=0.8in, left=0.8in, right=0.8in]{geometry}
\usepackage{xcolor}
\definecolor{lightblue}{RGB}{80,150,255}
\usepackage[
  colorlinks=true,
  linkcolor=lightblue,
  citecolor=lightblue,
  urlcolor=lightblue,
  filecolor=lightblue
]{hyperref}
\usepackage{tcolorbox}

\theoremstyle{plain}
\newtheorem{theorem}{Theorem}[section]
\newtheorem{lemma}[theorem]{Lemma}
\newtheorem{proposition}[theorem]{Proposition}
\newtheorem{corollary}[theorem]{Corollary}

\theoremstyle{definition}
\newtheorem{definition}[theorem]{Definition}
\newtheorem{example}[theorem]{Example}

\theoremstyle{remark}
\newtheorem{remark}[theorem]{Remark}
\newtheorem{note}[theorem]{Note}

\theoremstyle{definition}
\newtheorem{assumption}[theorem]{Assumption}

\newcommand{\Tr}{\operatorname{Tr}}

\title{Modeling Quantum Error Detection with Transition Matrices}

\author{
    Rohan S. Kumar\thanks{Yale Quantum Institute \& Department of Computer Science, rohan.s.kumar@yale.edu} \and
    Ben Foxman\thanks{Yale Quantum Institute \& Department of Computer Science, Yale University} \and
    Yongshan Ding\thanks{Yale Quantum Institute \& Department of Computer Science, Yale University}
}

\date{}

\begin{document}

\maketitle
\vspace{-1em}

\begin{abstract}
We construct an explicit transition matrix that exactly describes code-adapted basis-population dynamics of stabilizer codes under circuit-level stochastic Pauli noise. The matrix is expressed in a code-adapted basis that captures probability flow between logical-basis labels and syndrome sectors. For quantum error detection (QED), we incorporate post-selection by aggregating rejected outcomes into an absorbing state, so that a single transition matrix represents a full clock cycle of gates, syndrome extraction, and post-selection. Additionally, we characterize when protocol symmetries permit exact lumping to simpler, more interpretable models.

The transition matrix framework enables direct application of classical stochastic-matrix techniques to analyze multi-cycle QED protocols. First, we identify a leading-order obstruction to a stationary logical-only accepted-map description, and quantify the accepted leakage injection from QED-check imperfections that causes it. Second, we express leading-order QED efficiency as a function of check frequency in terms of physically meaningful parameters, and prove that less frequent checks improve efficiency to leading order. We evaluate the framework on the $[[4,2,2]]$ and $[[5,1,3]]$ codes under depolarizing gate noise and readout errors: transition-matrix predictions agree with million-shot Stim Monte Carlo at the scale of the reported confidence intervals, and matrix-computed results illustrate the first-cycle transient and interval-dependent efficiency. More broadly, the transition-matrix formalism provides both an analytical foundation for QED analysis and an interpretable tool for understanding code behavior under realistic noise.
\end{abstract}

\section{Introduction \label{sec:intro}}

Fault-tolerant quantum computation protects logical information via redundant encoding and repeated error detection or correction \cite{shor1995scheme,steane1996error, terhal2015quantum}. Stabilizer codes are the dominant framework for realizing this protection, using syndrome measurements to identify and address errors. A fundamental modeling question arises: at what level of granularity should we
describe the dynamics of an encoded system under noise? One option is to model the quantum channel of the full $n$-qubit
physical register, but the resulting $4^n \times 4^n$ superoperators become
unwieldy even for modest code sizes. The opposite extreme models the system purely at the logical
level, treating the encoded register as an effective $k$-qubit channel. This is analytically
convenient but implicitly assumes that physical noise manifests only as logical noise. In practice
this is a poor approximation: it misses how probability flows between the code space and syndrome
sectors, and how post-selection or correction conditions the ensemble.

In this work we develop an intermediate approach: an explicit transition matrix acting on a
code-adapted basis of the data and ancilla that separates logical states from syndrome sectors. Under
\emph{circuit-level stochastic Pauli noise}, we prove this transition matrix provides an \emph{exact}
description of basis-population dynamics. The mechanism is structural: in our code-adapted basis,
indexed by stabilizer syndrome and logical labels, both data-preserving Clifford syndrome-extraction
circuits and Pauli faults act as signed permutations and therefore commute with dephasing, implying
that populations evolve autonomously according to a classical Markov process.

Before protocol-specific reductions, the matrix is exponentially large, so this framework is not
proposed as an asymptotically efficient simulator; efficient simulation of stabilizer codes under
stochastic Pauli noise is already well established via Gottesman--Knill~\cite{aaronson2004improved}. Rather, our contribution
lies in providing an explicit, interpretable Markov kernel on operationally meaningful labels, which
makes the probability flow through a code transparent and amenable to classical stochastic-matrix
proof techniques. Related stochastic-matrix perspectives appear in prior and concurrent
work~\cite{ziyad2025emergent,kwiatkowski2025constructing,kumar2026co}.
Here we prove exact population closure on the full code-adapted label space,
formalize composable reductions for post-selected QED and interleaved gate layers,
and derive their perturbative consequences.

\begin{tcolorbox}[title=\textbf{Contributions: Framework and Tools}]
\begin{itemize}[leftmargin=*]
\item \textbf{Code-adapted transition matrix.} We construct an \textit{explicit} Markov kernel on syndrome and logical-eigenvalue labels that exactly describes population dynamics under circuit-level stochastic Pauli noise.

\item \textbf{QED reduction.} In QED, rejected runs terminate, so naive iteration on the joint state space fails to capture multi-cycle post-selection. We handle this by aggregating rejected outcomes into an absorbing state, yielding a $(2^n+1)$-dimensional matrix exact over arbitrarily many cycles.

\item \textbf{Clock-cycle composition.} When QED checks are interleaved with noisy gates, we model a full clock cycle, including gates, syndrome extraction, and post-selection, in a single matrix admitting a clean perturbative expansion.
\end{itemize}
\begin{itemize}[leftmargin=*]
\item \textbf{Symmetry-based reduction.} To obtain interpretable reduced models, we discuss how matrix symmetries can enable Markov lumping of the transition matrix. We then show that under specific implementation and noise assumptions, permutation symmetries of the code are preserved in the transition matrix, enabling exact orbit lumping. This enables reduction of the $[[4,2,2]]$ code from $17$ to $7$ states, and the Steane code from $129$ to $11$ states.
\end{itemize}
\end{tcolorbox}

Using this framework, we analyze phenomena of practical interest. Beyond their independent utility, these analyses demonstrate how our explicit construction enables new theoretical results.

\begin{tcolorbox}[title=\textbf{Contributions: Applications and Results}]
\begin{itemize}[leftmargin=*]
\item \textbf{Emergent logical non-Markovianity (Section~\ref{sec:nonmarkov}).} Conditioned logical statistics can exhibit history dependence under memoryless noise~\cite{ziyad2025emergent}. Our framework captures this effect naturally; we go further by quantifying it via an explicit accepted-leakage-injection term, proving its asymptotics, and showing intervening gate noise contributes only at second order.

\item \textbf{QED efficiency analysis (Section~\ref{sec:qed_efficiency}).} QED efficiency measures improvement in logical error rate per discarded shot~\cite{kumar2026co}. We derive a closed-form first-order expression for $\mathrm{SE}(m)$ and show that, to leading order, $\mathrm{SE}(m)$ increases with the number of gates between QED checks. The expansion yields physically interpretable parameters that admit closed-form expressions for leading-order efficiency scaling.

\item \textbf{Numerical evaluation (Section~\ref{sec:evaluation}).} We compare transition-matrix predictions with $10^6$-shot Stim Monte Carlo for the $[[4,2,2]]$ and $[[5,1,3]]$ codes, and evaluate the first-cycle transient and interval-dependent QED efficiency under depolarizing gate noise and readout errors.
\end{itemize}
\end{tcolorbox}

We hope the general framework will be extended to study other aspects of Quantum Error Detection codes and stabilizer codes in general (see Section~\ref{sec:discussion}).

\paragraph{Organization.} Section~\ref{sec:related_work} discusses related work. Section~\ref{sec:preliminaries} reviews stabilizer codes and basis-adapted channels. Sections~\ref{sec:modeling_tm}--\ref{sec:clock_cycle} construct the transition matrix, QED reduction, and clock-cycle composition. We apply the framework to logical non-Markovianity (Section~\ref{sec:nonmarkov}) and QED efficiency (Section~\ref{sec:qed_efficiency}). Section~\ref{sec:symmetry_model_reduction} develops symmetry-based reductions. Section~\ref{sec:evaluation} evaluates the construction and both applications numerically. Section~\ref{sec:discussion} concludes with extensions and practical considerations.

\section{Related Work \label{sec:related_work}}

Several concurrent and prior works address related questions; we summarize the key distinctions here, with extended discussion in Appendix~\ref{app:related_works}.

\textbf{Emergent logical non-Markovianity.}
Ziyad~\emph{et al.}~\cite{ziyad2025emergent} define logical Markovianity for error-corrected registers and show that Markovian physical operations can induce history-dependent logical statistics; they also introduce a ``stochastic picture'' in which repeated syndrome extraction under Pauli noise is modeled as a Markov chain via an explicit transition matrix (worked out for the 3-qubit repetition code).
Our work builds on this viewpoint by giving an explicit, exact, and \emph{general} kernel on a code-adapted label basis for data-preserving Clifford syndrome-extraction circuits of arbitrary stabilizer codes under circuit-level Pauli-stochastic noise, and by incorporating post-selected QED via an exact absorbing rejection state that makes multi-cycle conditioning analytically tractable.

\textbf{Approximate logical Markovian models.}
Kwiatkowski~\emph{et al.}~\cite{kwiatkowski2025constructing} prove that under certain conditions, repeated QEC cycles are well-approximated by a memoryless logical channel. Our results are complementary, quantifying a first-cycle transient in post-selected QED caused by accepted leakage. Our work is procedurally distinct as we retain an explicit finite-state description to analyze finite-time transients, accepted-shot lifetimes, and post-selection effects in QED.

\textbf{Resource destroying maps.}
Liu, Hu, and Lloyd~\cite{liu2017resource} introduced resource destroying maps (RDMs) to characterize resource-free operations. We adopt this framework with dephasing in a code-adapted basis, providing explicit constructions tailored to fault-tolerant stabilizer protocols. In particular, we use monomial Kraus representations as a simple certificate of $\mathcal{B}$-commutation for the noisy stabilizer measurement circuits we consider, which makes the induced transition matrices explicit and composable.

\textbf{QED--PEC co-design.}
Kumar~\emph{et al.}~\cite{kumar2026co} study QED--PEC co-design
using transition-matrix models and discuss the physical origin of first-cycle transients.
Our work provides theoretical foundations for these models by proving exact
population-level closure under stochastic Pauli noise, and derives the transient's
leading-order magnitude, showing that only QED-check noise contributes at first order.

\textbf{Optimizing check frequency.}
Abu-Nada, Fortescue, and Byrd~\cite{abunada2015optfreq} study how often to apply fault-tolerant QEC operations, showing that correcting after every gate need not be optimal when the correction step itself introduces errors.
Our interval results are complementary but distinct, analyzing post-selected QED and the QED-efficiency metric directly through our transition matrix formalism.

\section{Preliminaries \label{sec:preliminaries}}

Quantum error correction (QEC) enables reliable quantum computation with imperfect, \textit{physical qubits} by encoding them into higher-dimensional \textit{logical qubits}. Logical error rates are suppressed using a process called \textit{syndrome extraction}, where errors on physical qubits are repeatedly measured and corrected. When the physical error rate is below a code-dependent threshold, logical error rates can be arbitrarily suppressed ~\cite{shor1995scheme,steane1996error,aliferis2006accuracy,terhal2015quantum}.

Quantum error detection (QED) is a closely related paradigm in which syndromes are used to detect, rather than immediately correct, errors. For suitable circuits, QED codes can suppress errors in accepted runs with low qubit overhead, at the cost of increased sampling overhead~\cite{self2024protecting,jin2025iceberg}. These tradeoffs make QED an attractive option for near-term devices that lack the resources for full fault-tolerant error correction.

Most QED and QEC codes are \textit{stabilizer codes}, which implement syndrome extraction using stabilizer measurements and Clifford-dominated control~\cite{kitaev2003fault,fowler2012surface}. An $[[n,k,d]]$ stabilizer code encodes $k$ logical qubits into $n$ physical qubits with code distance $d$, defined by an abelian subgroup $\mathcal{S} \subset \mathcal{P}_n$ with $n-k$ independent generators; the code space is the simultaneous $+1$ eigenspace of all elements of $\mathcal{S}$, and any Pauli error of weight less than $d$ either acts trivially on the code space or has a nontrivial syndrome and can therefore be detected.

\subsection{The Stabilizer Formalism}
Let $\mathcal{P}_n$ denote the signed Pauli group on $n$ qubits. Throughout, stabilizer generators and logical Pauli representatives are chosen Hermitian (phase $\pm1$), so their eigenvalues are $\pm1$. In this work, we consider $[[n,k,d]]$ \textit{stabilizer codes}, which are specified by an abelian subgroup $\mathcal{S}=\langle S_1,\dots,S_{n-k}\rangle\subset\mathcal{P}_n$ which does not contain $-I$. The \textit{code space} is the joint $+1$ eigenspace of all operators in $\mathcal S$, i.e.\ the space spanned by \textit{codewords} $\ket{\psi}$ satisfying 
\begin{equation}
    P\ket{\psi} = \ket{\psi}, \;\; \forall P \in \mathcal{S}.
\end{equation}
For an $[[n,k,d]]$ code, the joint $+1$ eigenspace defines a $2^k$-dimensional code space~\cite{gottesman1997stabilizer,calderbankshor1996good}. To any code, we can associate a set of \textit{logical Pauli operators}, given by representatives of the quotient $(N(\mathcal S)\cap\mathcal P_n)/\mathcal S$. 
Measuring all $n-k$ stabilizer generators yields a syndrome $s\in\{0,1\}^{n-k}$, with the post-measurement state satisfying 
\begin{equation}
    (\forall i) \ \ \ \ S_i\ket{\phi} = (-1)^{s_i}\ket{\phi}
\end{equation}
Finally, throughout this work, \emph{leakage} refers to channels through which codestates transition to states outside the logical code space but still within the $n$-qubit
Hilbert space $(\mathbb{C}^2)^{\otimes n}$. This is distinct from the notion of \emph{physical
leakage} where individual qubits leave the two-level computational subspace entirely.

\subsection{Basis-Adapted States and Channels}
\label{sec:prelim_stoch_sim}
We now discuss when the action of quantum channels are describable as transition matrices over population vectors over a basis of Hilbert space. To do so, we adapt several notions defined in~\cite{liu2017resource}. 

Fix a Hilbert space $\mathcal{H}\cong\mathbb{C}^N$ with orthonormal basis $B=\{\ket{b_i}\}_{i=1}^N$. Let $\Pi_i := \ket{b_i}\!\bra{b_i}$. The \emph{$B$-dephasing channel} is given by
\begin{equation}
\Delta_B(\rho) := \sum_{i=1}^N \Pi_i \rho \Pi_i.
\end{equation}
States which are diagonal in the basis $B$ satisfy  $\Delta_B(\rho)=\rho$. We call such states \emph{$B$-incoherent}.

\begin{definition}[Nonactivating and Commuting Channels~\cite{liu2017resource}]
\label{def:RDM_classes}
A completely positive, trace non-increasing map $\mathcal{E}$ is
\begin{align}
\textit{$B$-nonactivating}
\quad\Longleftrightarrow\quad
\Delta_B\circ\mathcal{E}
&=
\Delta_B\circ\mathcal{E}\circ\Delta_B,
\label{eq:Bnonact_def}\\
\textit{$B$-commuting}
\quad\Longleftrightarrow\quad
\Delta_B\circ\mathcal{E}
&=
\mathcal{E}\circ\Delta_B.
\label{eq:Bcomm_def}
\end{align}
\end{definition}
Every $B$-commuting map is $B$-nonactivating.  The complementary notion $B$-nongenerating and its
relationship to these conditions are summarized in Appendix~\ref{app:rdm_terminology}.
\begin{definition}[$B$-population vector]
Define the $B$-population vector $p_B(\rho)\in\mathbb{R}^N$ by
\begin{equation}
(p_B(\rho))_i := \Tr(\Pi_i\rho).
\label{eq:B_pop_vec}
\end{equation}
\end{definition}

\begin{definition}[Induced $B$-transition matrix]
Given a completely positive, trace non-increasing map $\mathcal{E}$, define the transition matrix $T_B(\mathcal{E})\in\mathbb{R}^{N\times N}$ by
\begin{equation}
\bigl(T_B(\mathcal{E})\bigr)_{i,j} := \Tr\!\bigl(\Pi_i\,\mathcal{E}(\Pi_j)\bigr).
\label{eq:TB_def}
\end{equation}
The transition matrix is column-stochastic if $\mathcal{E}$ is trace-preserving, and column-substochastic otherwise.
\end{definition}

\begin{proposition}[$B$-nonactivation $\Rightarrow$ exact population evolution via a transition matrix]
\label{prop:nonact_implies_TM}
If $\mathcal{E}$ is $B$-nonactivating, then for every state $\rho$ and all $t\in\mathbb{N}$,
\begin{equation}
p_B\!\bigl(\mathcal{E}^{\,t}(\rho)\bigr) = \bigl(T_B(\mathcal{E})\bigr)^{t}\,p_B(\rho).
\label{eq:TB_exact_iter}
\end{equation}
\end{proposition}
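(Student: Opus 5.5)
The plan is to reduce the statement to the one-step identity
\begin{equation*}
p_B(\mathcal{E}(\sigma)) = T_B(\mathcal{E})\,p_B(\sigma)
\end{equation*}
for every $\sigma$, and then iterate. The one-step identity is a direct computation from the definitions. The iteration is an induction on $t$.

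For a single step, write $p_B(\mathcal{E}(\sigma))_i = \Tr(\Pi_i\mathcal{E}(\sigma))$. Since $\Pi_i$ is a projector with $\Pi_i\Pi_k\Pi_i=\delta_{ik}\Pi_i$, we have $\Tr(\Pi_i X)=\Tr(\Pi_i\Delta_B(X))$ for any operator $X$. Hence
\begin{equation*}
p_B(\mathcal{E}(\sigma))_i=\Tr\bigl(\Pi_i\,(\Delta_B\circ\mathcal{E})(\sigma)\bigr).
\end{equation*}
$B$-nonactivation (Eq.~\eqref{eq:Bnonact_def}) replaces $\Delta_B\circ\mathcal{E}$ by $\Delta_B\circ\mathcal{E}\circ\Delta_B$, and the outer $\Delta_B$ is absorbed back by the same identity. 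Expanding $\Delta_B(\sigma)=\sum_j \Pi_j\sigma\Pi_j=\sum_j \Tr(\Pi_j\sigma)\,\Pi_j$, which holds because the $\Pi_j$ are rank one, and using linearity of $\mathcal{E}$ gives
\begin{equation*}
\sum_j \Tr(\Pi_i\mathcal{E}(\Pi_j))\,(p_B(\sigma))_j=\bigl(T_B(\mathcal{E})\,p_B(\sigma)\bigr)_i .
\end{equation*}

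For the induction, the case $t=0$ is trivial. For the step, apply the one-step identity with $\sigma=\mathcal{E}^{t}(\rho)$ and then use the induction hypothesis:
\begin{equation*}
p_B(\mathcal{E}^{t+1}(\rho))=T_B\,p_B(\mathcal{E}^t(\rho))=T_B\,T_B^{t}\,p_B(\rho).
\end{equation*}
One point needs checking here. $\mathcal{E}^t(\rho)$ need not be a normalized state when $\mathcal{E}$ is only trace non-increasing. So the one-step identity should be stated for arbitrary operators $\sigma$, or at least arbitrary positive ones. Nothing in the computation above uses normalization, and it uses positivity only implicitly, so this costs nothing.

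The only mildly delicate step is the rank-one fact $\Pi_j\sigma\Pi_j=\Tr(\Pi_j\sigma)\Pi_j$. If a coarser orthogonal decomposition were intended, this would fail, and $T_B$ would not capture the dynamics. Since $B$ is an orthonormal basis with $\Pi_i=\ket{b_i}\!\bra{b_i}$, the fact holds, and the proof is a short, routine verification.
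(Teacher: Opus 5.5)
Your proposal is correct and follows essentially the same route as the paper's proof: the one-step identity via $\Tr(\Pi_i X)=\Tr(\Pi_i\Delta_B(X))$, then nonactivation, then the rank-one expansion $\Delta_B(\sigma)=\sum_j\Tr(\Pi_j\sigma)\Pi_j$ with linearity of $\mathcal{E}$, then induction on $t$. Your remark that the one-step identity must be applied to possibly unnormalized iterates $\mathcal{E}^t(\rho)$ is a useful clarification that the paper leaves implicit; in fact the argument uses neither normalization nor positivity, only linearity.
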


\begin{proof}[Proof sketch]
$B$-nonactivation gives $\Delta_B\!\circ\!\mathcal{E}=\Delta_B\!\circ\!\mathcal{E}\!\circ\!\Delta_B$,
so for each $i$,
\begin{equation}
    (p_B(\mathcal{E}(\rho)))_i=\Tr(\Pi_i\mathcal{E}(\rho))
=\Tr(\Pi_i\mathcal{E}(\Delta_B(\rho)))
=\sum_j \Tr(\Pi_j\rho)\,\Tr(\Pi_i\mathcal{E}(\Pi_j)).
\end{equation}

Thus $p_B(\mathcal{E}(\rho))=T_B(\mathcal{E})\,p_B(\rho)$, and iterating yields
$p_B(\mathcal{E}^t(\rho))=T_B(\mathcal{E})^t p_B(\rho)$.
\end{proof}

The full proof appears in Appendix~\ref{app:prelim_proofs}.

\begin{remark}
    A $B$-nonactivating channel is equivalent to the statement that the population vector of the output state depends only on a (sub)stochastic matrix applied to the population vector of the input state (see Proposition~\ref{prop:nonact_implies_TM}). In this work we will verify that stabilizer codes under Pauli Stochastic noise satisfy the stronger $B$-commuting condition. When a quantum channel is $B$-commuting, any B-incoherent initial state remains B-incoherent under iteration, so on that sector the full density operator is determined by its population vector; see Appendix~\ref{app:rdm_terminology} for more information. Accordingly, throughout we focus on cycle-boundary ensembles that are $\mathcal{B}$-incoherent (i.e., diagonal in $\mathcal{B}$), so the transition-matrix model yields exact values for all diagonal-in-$\mathcal{B}$ observables we study (e.g., acceptance and ASC).
\end{remark}

\begin{proposition}[Closure of $B$-nonactivating and $B$-commuting classes]
    The $B$-nonactivating and $B$-commuting classes are closed under convex combination~\cite{liu2017resource}.
\label{prop:bcomm_closure}
\end{proposition}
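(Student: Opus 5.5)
The plan is to verify both defining identities directly, using only that $\Delta_B$ is linear and that composition distributes over convex combination. Fix maps $\mathcal{E}_1,\dots,\mathcal{E}_m$ and weights $\lambda_1,\dots,\lambda_m\ge 0$ with $\sum_k \lambda_k = 1$, and set $\mathcal{E} := \sum_k \lambda_k \mathcal{E}_k$.

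First I will check that $\mathcal{E}$ is again an admissible map. Complete positivity is preserved because nonnegative combinations of CP maps are CP. The trace non-increasing property is preserved because
\begin{equation}
\Tr\bigl(\mathcal{E}(\rho)\bigr) = \sum_k \lambda_k \Tr\bigl(\mathcal{E}_k(\rho)\bigr) \le \sum_k \lambda_k \Tr(\rho) = \Tr(\rho)
\end{equation}
for every positive $\rho$. So $\mathcal{E}$ lies in the domain of Definition~\ref{def:RDM_classes}.

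Next, both conditions are linear identities between superoperators, and they survive the combination because $\Delta_B$ is a linear map and composition is bilinear. Suppose each $\mathcal{E}_k$ is $B$-nonactivating. Then
\begin{equation}
\Delta_B\circ\mathcal{E} = \sum_k \lambda_k\, \Delta_B\circ\mathcal{E}_k = \sum_k \lambda_k\, \Delta_B\circ\mathcal{E}_k\circ\Delta_B = \Delta_B\circ\mathcal{E}\circ\Delta_B .
\end{equation}
The $B$-commuting case is identical: each $\Delta_B\circ\mathcal{E}_k$ is replaced by $\mathcal{E}_k\circ\Delta_B$, and the terms are recombined.

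There is no genuine obstacle here. The only subtle point is noting that pre- and post-composition with the fixed linear map $\Delta_B$ commute with finite linear combinations. The same argument also covers countable convex combinations, such as the sums over Pauli faults in a stochastic noise model. There, the series converges in the finite-dimensional superoperator norm, and composition with $\Delta_B$ is continuous, so the identities pass to the limit.
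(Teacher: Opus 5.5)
Your proof is correct and follows the same route as the paper, which cites Liu, Hu, and Lloyd and notes in Appendix~\ref{app:rdm_terminology} that closure under convex combinations follows from linearity of the dephasing map $\Delta_B$; your argument, based on that linearity together with bilinearity of composition, is exactly this. Your extra checks that the combination stays completely positive and trace non-increasing, and your remark on countable sums, are valid, though the paper only needs the finite sums over fault patterns.
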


When proving that channels are $B$-commuting, a useful definition is that of a \emph{Monomial operator}.

\begin{definition}[$B$-Monomial operator]
An operator $K$ is \emph{monomial in basis $B=\{\ket{b_i}\}$} if
$K\ket{b_i} = c_i\ket{b_{\pi(i)}}$ for some permutation $\pi$ and nonzero scalars $c_i$.
Equivalently, the matrix of $K$ in $B$ has exactly one nonzero entry in each row and column.
\end{definition}

\begin{lemma}[$B$-Monomial channels commute with $B$-dephasing]
\label{lem:monomial_commutes_dephase}
If a completely positive, trace non-increasing map $\mathcal{E}$ admits a Kraus representation in which every Kraus operator is monomial in B, then $\mathcal{E}$ is B-commuting: $\Delta_B\circ\mathcal{E} = \mathcal{E}\circ\Delta_B$. Consequently, $\mathcal{E}$ is
$B$-nonactivating (Definition~\ref{def:RDM_classes}).
\end{lemma}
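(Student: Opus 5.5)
The plan is to reduce the statement to a single Kraus operator, verify commutation on the basis operators $\ket{b_j}\!\bra{b_l}$, and then extend by linearity. Write $\mathcal{E}(\rho)=\sum_k K_k\rho K_k^\dagger$, where each $K_k$ is monomial in $B$: $K_k\ket{b_i}=c^{(k)}_i\ket{b_{\pi_k(i)}}$ with $\pi_k$ a permutation. Both $\Delta_B\circ\mathcal{E}$ and $\mathcal{E}\circ\Delta_B$ are linear maps. It therefore suffices to check that they agree on every matrix unit $\ket{b_j}\!\bra{b_l}$, since these span all operators on $\mathcal{H}$. Moreover, both composites are sums over $k$ of the corresponding composites for the single-Kraus map $\rho\mapsto K_k\rho K_k^\dagger$, because $\Delta_B$ is linear. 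So it is enough to prove commutation for one monomial $K$ with permutation $\pi$ and scalars $c_i$.

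The key computation is the following. For $\ket{b_j}\!\bra{b_l}$ we have $K\ket{b_j}\!\bra{b_l}K^\dagger = c_j\overline{c_l}\,\ket{b_{\pi(j)}}\!\bra{b_{\pi(l)}}$.

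\emph{Case $j=l$.} The input is already $B$-diagonal, so $\Delta_B$ leaves it unchanged. The output $|c_j|^2\Pi_{\pi(j)}$ is also diagonal. Both orders therefore give $|c_j|^2\Pi_{\pi(j)}$.

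\emph{Case $j\neq l$.} Here $\Delta_B(\ket{b_j}\!\bra{b_l})=0$, so $\mathcal{E}\circ\Delta_B$ gives $0$. On the other side, $\pi$ is injective, so $\pi(j)\neq\pi(l)$. The output is then purely off-diagonal, and $\Delta_B$ kills it. Both orders give $0$.

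Summing over $k$ and extending linearly yields $\Delta_B\circ\mathcal{E}=\mathcal{E}\circ\Delta_B$.

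For the consequence, I use that $\Delta_B$ is idempotent: $\Delta_B\circ\Delta_B=\Delta_B$, since $\Pi_i\Pi_j=\delta_{ij}\Pi_i$. Then
\[
\Delta_B\circ\mathcal{E}\circ\Delta_B=\mathcal{E}\circ\Delta_B\circ\Delta_B=\mathcal{E}\circ\Delta_B=\Delta_B\circ\mathcal{E},
\]
which is exactly $B$-nonactivation (Definition~\ref{def:RDM_classes}).

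I do not expect a real obstacle. The only essential point is the injectivity of $\pi$, which ensures that off-diagonal inputs cannot be mapped onto the diagonal. The nonzero scalars and the trace-non-increasing condition play no role beyond making $\mathcal{E}$ a valid map.
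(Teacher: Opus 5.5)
Your proof is correct and follows the paper's argument essentially verbatim: you check $\Delta_B\circ\mathcal{E}=\mathcal{E}\circ\Delta_B$ on matrix units $\ket{b_j}\!\bra{b_l}$, split into the diagonal case and the off-diagonal case, and use injectivity of $\pi$ to keep off-diagonal outputs off-diagonal. Your one-line derivation of $B$-nonactivation from the idempotence $\Delta_B\circ\Delta_B=\Delta_B$ is a cleaner, fully explicit version of the paper's informal appeal to the general theory of resource-destroying maps.
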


\begin{proof}[Proof sketch]
Let $\mathcal{E}(\rho)=\sum_\alpha K_\alpha\rho K_\alpha^\dagger$ with each $K_\alpha$ monomial in
$B=\{\ket{b_i}\}$.  It suffices to check $\Delta_B\circ\mathcal{E}=\mathcal{E}\circ\Delta_B$ on matrix
units $\ket{b_i}\!\bra{b_j}$.  Monomiality implies
$K_\alpha\ket{b_i}\!\bra{b_j}K_\alpha^\dagger \propto
\ket{b_{\pi_\alpha(i)}}\!\bra{b_{\pi_\alpha(j)}}$.  If $i=j$ the output is diagonal and unchanged by
$\Delta_B$; if $i\neq j$ it is off-diagonal and annihilated by $\Delta_B$.  Hence
$\Delta_B\circ\mathcal{E}=\mathcal{E}\circ\Delta_B$.
\end{proof}

The full proof appears in Appendix~\ref{app:monomial_commutes}.

\begin{remark}[$B$-commutation vs Gottesman--Knill simulatability]
The criterion ``$\mathcal{E}$ commutes with a fixed dephasing map $\Delta_B$'' is basis-dependent and
is neither necessary nor sufficient for efficient stabilizer simulation.  For example, the Hadamard
gate is Clifford, but is not monomial in the computational basis, and so
does not commute with computational dephasing. Conversely, classical reversible gates such as
Toffoli are monomial in the computational basis but are not Clifford.
In this work we use monomiality as a certificate of population closure in a
specific code-adapted basis~\eqref{eq:joint_basis_def}.
\end{remark}

\section{Modeling Stabilizer Codes as Transition Matrices \label{sec:modeling_tm}}

\subsection{A Code-Specific Basis}
\label{sec:code_specific_basis}
We now construct the code-adapted basis that will serve as the state space for our transition matrix model. Given a stabilizer code $\mathcal{C}$ and a choice of $k$ independent, mutually commuting logical Pauli operators $\{\overline{P}_1,\dots,\overline{P}_k\}\subset N(\mathcal{S})$ that are independent modulo $\mathcal{S}$, we define the basis as follows. First, define a basis $\{\ket{\ell}_L\}_{\ell\in\{0,1\}^k}$ for the code space satisfying
\begin{equation}
\overline{P}_j\ket{\ell}_L = (-1)^{\ell_j}\ket{\ell}_L.
\end{equation}
Then, for each syndrome $s \in \{0, 1\}^{n-k}$, choose a Pauli representative $E_s$ such that applying $E_s$ to a codestate produces that syndrome, i.e., maps the codestate to the $s$-th coset of the code space. The full code-adapted basis is
\begin{equation}
\ket{s,\ell} := E_s\ket{\ell}_L, \qquad s\in\{0,1\}^{n-k},\ \ell\in\{0,1\}^k.
\label{eq:code_adapted_basis_def}
\end{equation}
\noindent For $s\neq 0^{n-k}$, the label $\ell$ indicates which codespace vector $\ket{\ell}_L$ was acted upon by $E_s$, rather than the eigenvalues of $\{\overline{P}_j\}$ on $\ket{s,\ell}$ (which may differ if $E_s$ anticommutes with some logical operators). This is purely a labeling convention and does not affect the transition-matrix structure. These $2^n$ vectors form an orthonormal basis $\mathcal{B}_{\mathcal{C}}$ of the $n$-qubit physical Hilbert space.
When the code uses $a$ ancilla qubits for syndrome extraction, with computational basis $\{\ket{y}: y\in\{0,1\}^a\}$, the joint code-adapted basis is
\begin{equation}
\mathcal{B} = \{\ket{s,\ell}\otimes\ket{y}\}, \qquad s\in\{0,1\}^{n-k},\ \ell\in\{0,1\}^k,\ y \in \{0, 1\}^a.
\label{eq:joint_basis_def}
\end{equation}

\subsection{Syndrome Extraction Circuits under Stochastic Pauli Noise are $\mathcal{B}$-commuting}
\begin{lemma}
\label{lem:paulis_monomial}
Every Pauli $P\in\mathcal{P}_{n+a}$ is monomial in the joint basis $\mathcal{B}$.
\end{lemma}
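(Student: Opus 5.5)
Since $\mathcal{B}$ is a tensor product of the data basis $\mathcal{B}_{\mathcal{C}}=\{\ket{s,\ell}\}$ and the ancilla computational basis $\{\ket{y}\}$, and every $P\in\mathcal{P}_{n+a}$ factors as $P=\omega\,P_D\otimes P_A$ with $\omega\in\{\pm1,\pm i\}$, $P_D\in\mathcal{P}_n$, $P_A\in\mathcal{P}_a$, the plan is to show separately that $P_D$ is monomial in $\mathcal{B}_{\mathcal{C}}$ and $P_A$ is monomial in the computational basis. The tensor product of two monomial operators is monomial in the product basis, with permutation $\pi_D\times\pi_A$ and scalars $c_ic_j\neq 0$. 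Multiplying by the global phase $\omega$ keeps the operator monomial.

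The ancilla factor is the easy case. Up to phase, $P_A$ is a product of single-qubit $X$'s and $Z$'s. On computational basis states, $X$ flips bits, so it acts as a permutation. $Z$ contributes a sign. Hence $P_A\ket{y}=c_y\ket{y\oplus x}$, where $x$ is the $X$-support of $P_A$ and $|c_y|=1$.

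For the data factor, fix a basis vector $\ket{s,\ell}=E_s\ket{\ell}_L$. The key computation is as follows.
\begin{enumerate}
\item \emph{Syndrome of the output.} Write the commutation $S_iP_D=(-1)^{\sigma_i(P_D)}P_DS_i$. Then $P_DE_s\ket{\ell}_L$ is a joint eigenvector of the $S_i$ with syndrome $s'=s\oplus\sigma(P_D)$.
\item \emph{Recasting as a logical action.} Because $E_{s'}$ has syndrome $s'$, the operator $L:=E_{s'}^{\dagger}P_DE_s$ commutes with every $S_i$. Hence $L\in N(\mathcal{S})$, since Paulis either commute or anticommute. So $P_DE_s=E_{s'}L$ with $L$ a logical Pauli, up to a phase.
\item \emph{Action on logical labels.} Every element of $N(\mathcal{S})$ equals, up to phase, a product of a stabilizer and a product of the logical $\overline{X}_j$'s and $\overline{Z}_j$'s associated with the chosen $\overline{P}_j$. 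Here I take the $\overline{P}_j$ as $\overline{Z}$-type and complete them to a symplectic logical basis, i.e.\ choose conjugate partners $\overline{Q}_j$ anticommuting only with $\overline{P}_j$. The stabilizer acts as $+1$ on the code space. $\overline{P}_j$ contributes a sign, and $\overline{Q}_j$ flips $\ell_j$, because it anticommutes with $\overline{P}_j$ and commutes with the other $\overline{P}_{j'}$. So $L\ket{\ell}_L=c\ket{\ell\oplus t}_L$ with $|c|=1$.
\end{enumerate}
Combining these gives $P_D\ket{s,\ell}=c'\ket{s\oplus\sigma,\ \ell\oplus t(s)}$. The map $(s,\ell)\mapsto(s\oplus\sigma,\ell\oplus t(s))$ is a bijection, since it is invertible for each fixed $s$ and the shift in $s$ is a bijection. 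Therefore $P_D$ is monomial. One could also see this from unitarity: $P_D$ maps each basis vector to a unit multiple of a basis vector, so the image map is injective.

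The main obstacle I expect is step 3. One must justify that $\{\overline{P}_j\}$ can be completed by partners $\overline{Q}_j$, which is the standard symplectic Gram--Schmidt argument on $N(\mathcal{S})/\mathcal{S}$. One must also check that the labeling convention (the $E_s$ being arbitrary representatives) does not spoil monomiality. It does not, because any discrepancy between representatives with the same syndrome is absorbed into the logical operator $L$.
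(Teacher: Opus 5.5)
Your proposal is correct and follows the paper's proof essentially step for step. You split $P$ into data and ancilla factors, use that $E_{s'}^{\dagger}P_DE_s$ has trivial syndrome and so lies in $N(\mathcal{S})$, conclude $P_D\ket{s,\ell}\propto\ket{s',\ell'}$, and note that ancilla Paulis permute computational basis states up to phase. The symplectic completion you flag as the main obstacle is not actually needed, and the paper does not use it: any $L\in N(\mathcal{S})$ preserves the code space and commutes or anticommutes with each $\overline{P}_j$, so $L\ket{\ell}_L$ lies in the one-dimensional joint eigenspace with labels $\ell\oplus t$. Your explicit bijectivity check is a detail the paper leaves implicit.
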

\begin{proof}
Up to phase, write $P=P_D\otimes P_A$ with $P_D$ acting on data qubits and $P_A$ on ancillas. Fix a data basis vector $\ket{s,\ell}=E_s\ket{\ell}_L$. The product $P_D E_s$ has some definite syndrome $s'$, and $E_{s'}^\dagger P_D E_s$ has trivial syndrome and therefore lies in the normalizer $N(\mathcal{S})$ up to phase. Hence $P_D E_s = \omega\,E_{s'}\,L\,S$ for some phase $\omega\in\{\pm 1,\pm i\}$, some logical Pauli representative $L\in N(\mathcal{S})\cap\mathcal{P}_n$, and some stabilizer $S\in\mathcal{S}$. The stabilizer fixes the codestate and $L$ permutes $\{\ket{\ell}_L\}$ up to phase, so $P_D\ket{s,\ell}=\omega'\ket{s',\ell'}$ for some phase $\omega'$ and label $(s',\ell')$. On the ancilla register, every Pauli maps computational basis states to computational basis states up to phase. Tensoring, every $P\in\mathcal{P}_{n+a}$ acts as a signed permutation on $\mathcal{B}$, hence is monomial in $\mathcal{B}$. The full proof appears in Appendix~\ref{app:paulis_monomial}.
\end{proof}

\begin{definition}[Data-preserving syndrome-extraction unitary]
\label{def:data_preserving}
An ideal syndrome-extraction unitary $U_0$ on $n+a$ qubits is \emph{data-preserving} if there exists a syndrome readout function $g:\{0,1\}^{n-k}\to\{0,1\}^a$ such that, for every joint basis state $\ket{s,\ell}\otimes\ket{y}\in\mathcal{B}$,
\[
    U_0\,\ket{s,\ell}\otimes\ket{y} \;=\; e^{i\phi(s,\ell,y)}\,\ket{s,\ell}\otimes\ket{y\oplus g(s)}
\]
for some phase $\phi(s,\ell,y)\in\mathbb{R}$.
\end{definition}

\begin{lemma}
\label{lem:ideal_syn_monomial}
The ideal data-preserving syndrome extraction circuit is monomial in the joint basis $\mathcal{B}$.
\end{lemma}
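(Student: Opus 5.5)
The claim follows almost directly from Definition~\ref{def:data_preserving}, so the plan is to read off the action of $U_0$ on $\mathcal{B}$ and check the two conditions in the definition of a $\mathcal{B}$-monomial operator. These are: each basis vector is sent to a nonzero multiple of a basis vector, and the induced index map is a permutation.

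First, fix a joint basis vector $\ket{s,\ell}\otimes\ket{y}$. By data preservation, $U_0$ sends it to $e^{i\phi(s,\ell,y)}\ket{s,\ell}\otimes\ket{y\oplus g(s)}$. This is again an element of $\mathcal{B}$, multiplied by the nonzero scalar $c_{(s,\ell,y)}=e^{i\phi(s,\ell,y)}$ of unit modulus. So define the index map $\pi(s,\ell,y) := (s,\ell,y\oplus g(s))$.

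Second, I would verify that $\pi$ is a bijection of the finite label set $\{0,1\}^{n-k}\times\{0,1\}^k\times\{0,1\}^a$. The map leaves $(s,\ell)$ unchanged and, for each fixed $s$, acts on $y$ by XOR with the fixed string $g(s)$. That action is an involution, so $\pi\circ\pi=\mathrm{id}$ and $\pi$ is a permutation. Equivalently, the matrix of $U_0$ in $\mathcal{B}$ is $D P_\pi$ with $D$ diagonal unitary and $P_\pi$ a permutation matrix. It therefore has exactly one nonzero entry in each row and column, which is the monomial condition.

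Third, I would note the consequence the paper needs next. The ideal syndrome-extraction channel $\rho\mapsto U_0\rho U_0^\dagger$ has the single Kraus operator $U_0$, so by Lemma~\ref{lem:monomial_commutes_dephase} it is $\mathcal{B}$-commuting. There is no real obstacle here. The only point requiring care is that the phases $\phi$ may depend on all of $(s,\ell,y)$; this is harmless because the monomial definition allows arbitrary nonzero scalars $c_i$. Unitarity of $U_0$ is not even needed beyond the fact that these scalars are nonzero.
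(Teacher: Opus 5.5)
Your proposal is correct and follows the paper's proof: read off the action of $U_0$ from Definition~\ref{def:data_preserving} and observe that $(s,\ell,y)\mapsto(s,\ell,y\oplus g(s))$ is a bijection, so $U_0$ is a signed permutation on $\mathcal{B}$. Your argument that the ancilla update is a bijection for each fixed $s$ (indeed an involution) is slightly more precise than the paper's description of the action as a tensor product of two bijections, since the ancilla shift $g(s)$ depends on the data label.
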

\begin{proof}
By Definition~\ref{def:data_preserving}, $U_0$ leaves each data label $(s,\ell)$ unchanged up to phase and updates the ancilla deterministically as $y\mapsto y\oplus g(s)$. Both maps are bijections on their respective labels, so their tensor product is a signed permutation on $\mathcal{B}$, hence $U_0$ is monomial in $\mathcal{B}$.
\end{proof}

\begin{remark}
Individual gates in the syndrome-extraction circuit need not be $\mathcal{B}$-monomial; for example, the Hadamard on ancillas is not monomial in the computational basis. This does not contradict the \emph{composite} action of the full circuit being $\mathcal{B}$-monomial.
\end{remark}

\begin{corollary}[Fault paths act as signed permutations on $\mathcal{B}$]
\label{cor:fault_path_signed_perm}
Let $\sigma$ be a stochastic Pauli fault pattern, and let $P(\sigma)\in\mathcal{P}_{n+a}$ denote the effective output-side Pauli that $\sigma$ propagates to through the Clifford circuit. The faulty unitary $U_\sigma:=P(\sigma)\,U_0$ acts as a signed permutation on $\mathcal{B}$: there exist a permutation $\pi_\sigma$ of $\mathcal{B}$-indices and phases $\theta_{\sigma,j}\in\mathbb{R}$ such that
\[
    U_\sigma\ket{b_j} \;=\; e^{i\theta_{\sigma,j}}\ket{b_{\pi_\sigma(j)}} \qquad \text{for every }\ket{b_j}\in\mathcal{B}.
\]
\end{corollary}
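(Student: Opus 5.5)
The plan is to treat the corollary as a composition statement: I will show that the faulty unitary $U_\sigma = P(\sigma)\,U_0$ is a product of two operators, each already known to be monomial in $\mathcal{B}$, and that monomiality is preserved under products. The first ingredient is Lemma~\ref{lem:ideal_syn_monomial}. It says the ideal data-preserving extraction unitary $U_0$ acts as a signed permutation on $\mathcal{B}$: there is a permutation $\pi_0$, namely $(s,\ell,y)\mapsto(s,\ell,y\oplus g(s))$, and phases $e^{i\phi}$ with $U_0\ket{b_j}=e^{i\phi_j}\ket{b_{\pi_0(j)}}$. The second ingredient is Lemma~\ref{lem:paulis_monomial}, applied to the propagated Pauli $P(\sigma)\in\mathcal{P}_{n+a}$. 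It gives a permutation $\pi_P$ and phases $\omega_j$ with $P(\sigma)\ket{b_j}=\omega_j\ket{b_{\pi_P(j)}}$.

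Before composing, I will justify the rewriting of the faulty circuit as $P(\sigma)U_0$. Write the circuit as $C_T\cdots C_1$ with Clifford layers $C_t$, and let a fault pattern $\sigma$ insert Paulis $Q_t$ after layer $t$. I then commute each $Q_t$ through the remaining layers using $C Q = (C Q C^\dagger) C$. Since each $C_t$ is Clifford, each conjugate $C Q C^\dagger$ is again a Pauli. Collecting all of these at the output gives $\prod_t Q_t' \cdot C_T\cdots C_1 = P(\sigma)\,U_0$, where $P(\sigma)$ is a Pauli up to a global phase. The corollary defines $P(\sigma)$ this way, so I only need to note that $P(\sigma)$ lies in $\mathcal{P}_{n+a}$, possibly with a phase in $\{\pm1,\pm i\}$. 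That phase is harmless, because it can be absorbed into $\theta$.

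The main step is the composition itself. For every $j$,
\[
U_\sigma\ket{b_j} = P(\sigma)\,e^{i\phi_j}\ket{b_{\pi_0(j)}} = e^{i\phi_j}\,\omega_{\pi_0(j)}\ket{b_{\pi_P(\pi_0(j))}}.
\]
I therefore set $\pi_\sigma := \pi_P\circ\pi_0$. This is a composition of bijections on the index set, so it is a permutation. I also set $e^{i\theta_{\sigma,j}} := e^{i\phi_j}\omega_{\pi_0(j)}$. This is a unit-modulus phase because both factors have modulus one: $U_0$ and $P(\sigma)$ are unitary, and they map normalized basis vectors to scalar multiples of basis vectors. This completes the argument.

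I do not expect any step to be a serious obstacle. The only points needing care are bookkeeping. First, the phases must truly be unimodular, so that the result is a signed permutation and not merely monomial with arbitrary nonzero scalars; unitarity settles this. Second, the fault propagation to a single output-side Pauli relies on every gate in the circuit being Clifford, including any ancilla Hadamards. Individual gates need not be $\mathcal{B}$-monomial; only the propagated Pauli and the composite $U_0$ are required to be.
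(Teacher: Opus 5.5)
Your proposal is correct and follows the paper's own proof: both write $U_\sigma=P(\sigma)U_0$ as a product of two signed permutations on $\mathcal{B}$ (Lemma~\ref{lem:paulis_monomial} for $P(\sigma)$, Lemma~\ref{lem:ideal_syn_monomial} for $U_0$) and conclude by closure of signed permutations under composition. Your additional steps — propagating the Clifford faults to a single output-side Pauli, absorbing its global phase into $\theta_{\sigma,j}$, and tracking the composed permutation $\pi_P\circ\pi_0$ with unimodular phases — simply spell out details the paper leaves implicit.
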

\begin{proof}
By Lemmas~\ref{lem:paulis_monomial} and~\ref{lem:ideal_syn_monomial}, both $P(\sigma)$ and $U_0$ are signed permutations on $\mathcal{B}$. Signed permutations are closed under composition.
\end{proof}

\begin{lemma}
    A data-preserving Clifford syndrome-extraction circuit for any stabilizer code is $\mathcal{B}$-commuting under stochastic Pauli noise.
\label{lem:syn_b_comm}
\end{lemma}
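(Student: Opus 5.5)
The plan is to exhibit a Kraus representation of the noisy syndrome-extraction channel in which every Kraus operator is monomial in $\mathcal{B}$, and then invoke Lemma~\ref{lem:monomial_commutes_dephase}. First I fix the noise model: circuit-level stochastic Pauli noise means that after (or before) each gate location, and possibly at state preparation and idle steps, a Pauli $Q$ is applied with some probability. Expanding over all locations, the noisy circuit is a convex combination
\[
\mathcal{E}(\rho)=\sum_{\sigma}\Pr(\sigma)\,U^{(\sigma)}\rho\, U^{(\sigma)\dagger},
\]
where $\sigma$ ranges over fault patterns and $U^{(\sigma)}$ is the ideal Clifford circuit with the Paulis of $\sigma$ inserted at their locations. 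The Kraus operators are therefore $\sqrt{\Pr(\sigma)}\,U^{(\sigma)}$.

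Next I rewrite each $U^{(\sigma)}$ in the form of Corollary~\ref{cor:fault_path_signed_perm}. A Pauli $Q$ inserted after gate $t$ can be pushed to the end of the circuit: $V_{\mathrm{after}}Q=(V_{\mathrm{after}}QV_{\mathrm{after}}^\dagger)V_{\mathrm{after}}$, and conjugation by a Clifford maps Paulis to Paulis. Doing this for every fault gives $U^{(\sigma)}=P(\sigma)U_0$ up to a global phase, with $P(\sigma)\in\mathcal{P}_{n+a}$. By Corollary~\ref{cor:fault_path_signed_perm}, which combines Lemma~\ref{lem:paulis_monomial} and Lemma~\ref{lem:ideal_syn_monomial}, each such operator is a signed permutation on $\mathcal{B}$, hence monomial. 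Rescaling by $\sqrt{\Pr(\sigma)}>0$ preserves monomiality, so Lemma~\ref{lem:monomial_commutes_dephase} applies directly. Alternatively, each term is $\mathcal{B}$-commuting and Proposition~\ref{prop:bcomm_closure} gives closure under the convex combination.

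The main obstacle is the modeling of non-unitary pieces, namely ancilla reset/preparation and measurement, together with readout errors, so that every step stays within the Pauli-propagation picture. I would treat the ancilla as entering in a computational basis state $\ket{y}$ that is part of $\mathcal{B}$, so no reset channel is needed inside the circuit. Measurement in the $Z$ basis is itself a diagonal (projective) operation on $\mathcal{B}$, and its Kraus operators $\ket{y}\!\bra{y}$ are not monomial. Instead I would note that $Z$-basis measurement commutes with $\Delta_{\mathcal{B}}$, since it equals dephasing on the ancilla factor followed by classical recording. A readout flip is equivalent to an $X$ Pauli fault just before an ideal measurement, and preparation errors are equivalent to Pauli faults right after ideal preparation, so both are absorbed into $\sigma$.

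If the statement is meant only for the unitary part, this step reduces to noting that such faults are again Paulis at circuit boundaries. The remaining care is that the data-preserving hypothesis holds for $U_0$ exactly as stated in Definition~\ref{def:data_preserving}, including the phases, which the monomial argument tolerates.
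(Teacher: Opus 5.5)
Your proof is correct and follows the paper's argument. You write the noisy circuit as a convex mixture of fault-conditioned unitaries $U_\sigma=P(\sigma)U_0$, use Corollary~\ref{cor:fault_path_signed_perm} to get that each Kraus operator $\sqrt{p_\sigma}\,U_\sigma$ is monomial in $\mathcal{B}$, and conclude via Lemma~\ref{lem:monomial_commutes_dephase} and Proposition~\ref{prop:bcomm_closure}. Your extra remarks on ancilla preparation, readout flips, and $Z$-basis measurement (whose Kraus operators are $\mathcal{B}$-diagonal rather than monomial but still commute with $\Delta_{\mathcal{B}}$) go slightly beyond what the paper's proof addresses, and they are sound.
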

\begin{proof}
Under stochastic Pauli noise, the noisy syndrome extraction channel can be written as
\[\mathcal{E}(\rho)=\sum_{\sigma} p_\sigma\,U_\sigma\,\rho\,U_\sigma^\dagger,\qquad U_\sigma := P(\sigma)\,U_0,\]
where each $U_\sigma$ is a fault-conditioned unitary. By Corollary~\ref{cor:fault_path_signed_perm}, each $U_\sigma$ is monomial in $\mathcal{B}$, so each Kraus operator $K_\sigma:=\sqrt{p_\sigma}\,U_\sigma$ is monomial in $\mathcal{B}$, and hence $\mathcal{B}$-commuting by Lemma~\ref{lem:monomial_commutes_dephase}. The result follows from Proposition~\ref{prop:bcomm_closure}, since convex combinations preserve $\mathcal{B}$-commutation.
\end{proof}

\subsection{Constructing the Noisy Syndrome Extraction Transition Matrix}
\label{sec:construct_noisy_TM}
By Lemma~\ref{lem:syn_b_comm} and Proposition~\ref{prop:nonact_implies_TM}, a syndrome extraction circuit under stochastic Pauli noise has exact population dynamics given by a $\mathcal{B}$-transition matrix. We now give an explicit construction for this transition matrix.
Related fault-path averaging constructions appear in~\cite{kumar2026co};
here the construction is obtained as the induced $\mathcal{B}$-transition
matrix of the noisy channel, with exact population closure following
from Proposition~\ref{prop:nonact_implies_TM}.

Let $\mathcal{U}_0(\rho) := U_0 \rho U_0^\dagger$ denote the ideal syndrome extraction channel. The transition matrix for the ideal circuit is
\begin{equation}
T_0' := T_\mathcal{B}(\mathcal{U}_0), \qquad (T_0')_{i,j} = \Tr(\Pi_i\,\mathcal{U}_0(\Pi_j)),
\label{eq:T0prime_def}
\end{equation}
where $\Pi_j=\ket{b_j}\!\bra{b_j}$ and we use column-stochastic conventions.

Next, we model the impact of stochastic Pauli noise. Throughout this work, we consider circuit-level stochastic Pauli noise: Pauli errors occurring independently at discrete locations in the circuit (for example, around gates). Let $\mathcal{L}$ be the set of $F=|\mathcal{L}|$ fault locations, and let $\Lambda(\varepsilon)$ specify a Pauli-stochastic noise model of strength $\varepsilon$. For each fault location $\ell$ acting on $r_\ell$ qubits, let $\mathcal{P}_{r_\ell}^{\mathrm{proj}}$ denote the projective Pauli set on those qubits (the quotient of $\mathcal{P}_{r_\ell}$ by the global phases $\{\pm 1,\pm i\}$). A fault pattern $\sigma=(P_\ell)_{\ell\in\mathcal{L}}$, with $P_\ell\in\mathcal{P}_{r_\ell}^{\mathrm{proj}}$, occurs with probability $\Pr_{\Lambda(\varepsilon)}(\sigma)$. Since Clifford circuits conjugate Paulis to Paulis, any fault pattern $\sigma$ propagates to an effective Pauli $P(\sigma)\in\mathcal{P}_{n+a}$ at the circuit output. Let $M_{P(\sigma)}$ denote the permutation matrix on $\mathcal{B}$-indices induced by $P(\sigma)$, i.e., $(M_{P(\sigma)})_{i,j} = 1$ if $P(\sigma)\ket{b_j} \propto \ket{b_i}$ and $0$ otherwise. The fault-conditioned transition matrix is $T'(\sigma) := M_{P(\sigma)}\,T_0'$, and averaging yields the noisy transition matrix:
\begin{equation}
T'_{\Lambda(\varepsilon)} := \sum_{\sigma} \Pr_{\Lambda(\varepsilon)}(\sigma)\,M_{P(\sigma)}\,T_0'.
\label{eq:Tprime_Lambda_def}
\end{equation}
By Proposition~\ref{prop:nonact_implies_TM}, this provides an \emph{exact} population-level model.

For the perturbative results below, we assume that each nonidentity Pauli
$P$ at a fault location $\ell$ has probability $\varepsilon c_{\ell,P}$
for check noise, or $\eta c_{\ell,P}$ for gate-layer noise, with fixed
nonnegative coefficients $c_{\ell,P}$. The identity probability is the
remainder needed for normalization.

When the number of fault locations is large, restricting the sum~\eqref{eq:Tprime_Lambda_def} to patterns of weight $|\sigma|\le z$ and renormalizing by $Z_z := \sum_{|\sigma|\le z}\Pr_{\Lambda(\varepsilon)}(\sigma)$ yields a truncation correct through $z$th order with $O(F^z)$ enumeration. For i.i.d.\ fault rate $\varepsilon$, the discarded tail satisfies $1-Z_z=O(\varepsilon^{z+1})$, keeping the transition matrix interpretable and buildable even when an exhaustive sum over fault paths is infeasible. Additional practical-construction options (e.g.\ empirically estimated entries) appear in Section~\ref{sec:discussion_practical_tm}.

\begin{remark}[What the transition matrix tracks]
\label{rem:what_TM_tracks}
The construction above does not require the input state to be diagonal in $\mathcal{B}$. Coherences in $\mathcal{B}$ may be present, but under the noisy syndrome-extraction channels considered here they cannot affect later $\mathcal{B}$-basis populations. Thus $T'_{\Lambda(\varepsilon)}$ exactly predicts acceptance probabilities, rejection probabilities, syndrome/logical-label distributions, and all observables diagonal in $\mathcal{B}$ for arbitrary input states. The matrix does not, however, track coherences themselves, so it should not be used to compute observables with off-diagonal matrix elements in $\mathcal{B}$, such as fidelity to a general coherent logical superposition.
\end{remark}

\begin{note}
    Because global phases do not affect conjugation, $M_{P(\sigma)}$ depends only on
$P(\sigma)$ modulo $\{\pm1,\pm i\}$ (i.e., on the projective Pauli).
Equivalently, while $P(\sigma)$ is a monomial (signed-permutation) operator in $\mathcal{B}$,
the induced transition matrix on populations is the corresponding $0$--$1$ permutation matrix.

\end{note}

\begin{note}
    Henceforth, we drop the $\Lambda$ for convenience when the specific form of the Stochastic Pauli noise is not relevant, and write $T'_\varepsilon$ for $T'_{\Lambda(\varepsilon)}$.
\end{note}

\subsection{Constructing the Noisy QED Transition Matrix \label{sec:qed_lumping}}

The transition matrices from Section~\ref{sec:construct_noisy_TM} describe the action of the syndrome extraction circuit for any code. However, iterating this matrix does not model QED, since it fails to specify what happens to ancillas between cycles. In QED, rejected branches are \emph{discarded} between successive syndrome extraction circuits. Without taking this into account, rejected branches at the end of one QED cycle can leak back into the code space in subsequent cycles. In order to accurately model post-selection in our formalism, we make rejection terminal and aggregate rejected outcomes to obtain the exact reduced QED matrix $T_\varepsilon$.
An absorbing rejection state was also used to model post-selected QED
in~\cite{kumar2026co}; here we formalize this reduction on the full
code-adapted state space and prove its exact multi-cycle population dynamics.

\begin{definition}[Reduced QED state space]
The \emph{reduced} QED state space adjoins a single absorbing rejection state to the code-adapted basis:
\begin{equation}
\overline{\mathcal{Q}} \;:=\; \mathcal{B}_{\mathcal{C}} \cup \{R\},
\qquad
|\overline{\mathcal{Q}}| = 2^n+1.
\end{equation}
\end{definition}

\begin{definition}[Reduction map on transition matrices]
\label{def:reduction_map}
For any column-stochastic matrix $T'$ on $\mathcal{B}$, the \emph{QED reduction} $\mathcal{R}(T')$ is the $(2^n+1)\times(2^n+1)$ matrix $T:=\mathcal{R}(T')$ on $\overline{\mathcal{Q}}$ defined by:
\begin{align}
T[q',q] &:= T'[(q',0^a),(q,0^a)] \qquad (q,q'\in\mathcal{B}_{\mathcal{C}}), \label{eq:reduce_accept_block} \\
T[R,q] &:= 1-\sum_{q'} T'[(q',0^a),(q,0^a)] \qquad (q\in\mathcal{B}_{\mathcal{C}}), \label{eq:reduce_reject_row} \\
T[R,R] &:= 1, \quad T[q,R] := 0 \quad (q\in\mathcal{B}_{\mathcal{C}}). \label{eq:reduce_absorb}
\end{align}
Where the joint basis is indexed with pairs $(q,y)$ with $q\in\mathcal{B}_{\mathcal{C}}$ and $y\in\{0,1\}^a$.
\end{definition}
Intuitively, this lumping collapses all rejected outcomes into a single terminal state $R$, so the reduced matrix tracks only probability flow among accepted configurations. Equation~\eqref{eq:reduce_accept_block} retains probability mass that both lands on data label $q'$ and yields $0^a$ (accepted). Equation~\eqref{eq:reduce_reject_row} aggregates complementary mass into $R$. Equation~\eqref{eq:reduce_absorb} makes rejection absorbing.

\begin{proposition}[Exact reduced Markov model for post-selected QED]
\label{prop:qed_reduction_exact}
Let $T'$ be the unreduced transition matrix for one noisy syndrome extraction cycle. Under the operational QED procedure that reinitializes ancillas to $0^a$, applies the noisy circuit, measures ancillas, and discards unless the outcome is $0^a$, the population update on $\overline{\mathcal{Q}}$ is exactly $v \mapsto \mathcal{R}(T')v$. Iteration gives exact accepted-shot statistics: $v^{(t)} = (\mathcal{R}(T'))^t v^{(0)}$.
\end{proposition}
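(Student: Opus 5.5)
The plan is to track, at each cycle boundary, the sub-normalized accepted data state together with the total rejected mass, and show that one operational cycle acts on this pair exactly as $\mathcal{R}(T')$. I interpret the vector $v\in\mathbb{R}^{2^n+1}$ as having entries $v_q=\Tr(\Pi_q\tilde\rho)$ for $q\in\mathcal{B}_{\mathcal{C}}$ and $v_R=1-\Tr\tilde\rho$. Here $\tilde\rho$ is the unnormalized data state conditioned on all previous checks having accepted, so $\Tr\tilde\rho$ is the cumulative acceptance probability. Rejected runs are terminated, so their mass only accumulates and never re-enters; this is why a single absorbing state suffices.

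For one cycle, write the operational map as $\tilde\rho\mapsto \tilde\rho_{\rm acc}:=\Tr_A\!\big[(I\otimes\ket{0^a}\!\bra{0^a})\,\mathcal{E}(\tilde\rho\otimes\ket{0^a}\!\bra{0^a})\,(I\otimes\ket{0^a}\!\bra{0^a})\big]$, where $\mathcal{E}$ is the noisy syndrome-extraction channel of Lemma~\ref{lem:syn_b_comm}. Then for $q'\in\mathcal{B}_{\mathcal{C}}$,
\[
\Tr(\Pi_{q'}\tilde\rho_{\rm acc})=\Tr\big(\Pi_{(q',0^a)}\,\mathcal{E}(\tilde\rho\otimes\ket{0^a}\!\bra{0^a})\big),
\]
with $\Pi_{(q',0^a)}=\Pi_{q'}\otimes\ket{0^a}\!\bra{0^a}$ the joint-basis projector. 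Since $\mathcal{E}$ is $\mathcal{B}$-commuting, hence $\mathcal{B}$-nonactivating, the argument of Proposition~\ref{prop:nonact_implies_TM} applies for a single step and gives
\[
\Tr(\Pi_{(q',0^a)}\mathcal{E}(\sigma))=\sum_{j}T'[(q',0^a),j]\,p_{\mathcal{B}}(\sigma)_j
\]
for any input $\sigma$. The joint populations of $\tilde\rho\otimes\ket{0^a}\!\bra{0^a}$ are $v_q$ at indices $(q,0^a)$ and zero elsewhere, since the ancilla is in a basis state. This yields $\Tr(\Pi_{q'}\tilde\rho_{\rm acc})=\sum_q T'[(q',0^a),(q,0^a)]\,v_q$, which is exactly \eqref{eq:reduce_accept_block}.

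For the rejection coordinate, the new rejected mass is the old $v_R$ plus the mass rejected this cycle. Because $\mathcal{E}$ is trace preserving, $T'$ is column-stochastic, and the mass rejected this cycle equals
\[
\sum_q v_q\Big(1-\sum_{q'}T'[(q',0^a),(q,0^a)]\Big),
\]
which matches \eqref{eq:reduce_reject_row}. The term $T[R,R]=1$ together with $T[q,R]=0$ accounts for the old rejected mass being carried over unchanged and never returning, as in \eqref{eq:reduce_absorb}.

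Iterating then only needs one further observation: each cycle depends on the previous one only through the conditional data state $\tilde\rho$, and only its $\mathcal{B}$-populations matter for future populations. This holds because the data-marginal map $\tilde\rho\mapsto\tilde\rho_{\rm acc}$ is itself nonactivating; the one-step identity above holds for arbitrary $\tilde\rho$, including inputs with coherences, as noted in Remark~\ref{rem:what_TM_tracks}. Induction on $t$ therefore gives $v^{(t)}=\mathcal{R}(T')^t v^{(0)}$. I expect the main care point to be exactly this step: ensuring that post-selection plus the partial trace over the ancillas does not reintroduce dependence on coherences. The one-step identity settles it, because it expresses the accepted output populations linearly in the input populations alone, with no coherence terms, for every input.
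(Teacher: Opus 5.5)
Your proposal is correct and follows the same skeleton as the paper's proof. The accepted block is read off from $T'$ on $(\cdot,0^a)$ labels, the rejected mass is the column complement, $R$ is absorbing, and the result follows by induction. The one difference is that you work with the unnormalized post-selected data state, using $I\otimes\ket{0^a}\!\bra{0^a}$, the partial trace, and $\mathcal{B}$-nonactivation of the joint channel to show that accepted populations depend only on input populations even for coherent inputs; the paper instead checks basis-state inputs classically and appeals to linearity, leaving that coherence step implicit, so your version makes explicit why the basis-state check suffices.
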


\begin{proof}[Proof sketch]
Consider a basis state at a cycle boundary. If accepted with data label $q$, ancillas reinitialize to $0^a$, so the joint starting label is $(q,0^a)$. After applying $T'$, the probability to reach $(q',y)$ is $T'[(q',y),(q,0^a)]$. Post-selection accepts iff $y=0^a$, passing data label $q'$ to the next cycle; otherwise the trajectory terminates in $R$. Equations~\eqref{eq:reduce_accept_block}--\eqref{eq:reduce_absorb} encode precisely these transitions, including that $R$ is absorbing. Linearity extends the result to arbitrary population vectors, and iteration follows by induction. See Appendix~\ref{app:qed_reduction_exact} for details.
\end{proof}

Using this reduction, the noisy and noiseless QED-cycle matrices are:
\begin{equation}
T_{\Lambda(\varepsilon)} := \mathcal{R}(T'_{\Lambda(\varepsilon)}), \qquad T_0 := \mathcal{R}(T_0').
\label{eq:T_reduced_defs}
\end{equation}

\begin{lemma}[Noiseless QED Matrix]
    Assume the QED readout is \emph{complete}: $g(s)=0^a$ iff $s=0^{n-k}$, where $g$ is the syndrome readout function of Definition~\ref{def:data_preserving}. For an ideal QED cycle, ordering states as (logical $L$) $\oplus$ (leakage $X$) $\oplus$ (rejection $R$) with $L=2^k$ and $X=2^n-2^k$, the noiseless matrix has block form:
\begin{equation}
T_0 = \begin{bmatrix} I_L & 0 & 0 \\ 0 & 0 & 0 \\ 0 & \mathbf{1}_X^T & 1 \end{bmatrix}.
\label{eq:T0_block_form}
\end{equation}
\end{lemma}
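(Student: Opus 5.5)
The plan is to compute $T_0=\mathcal{R}(T_0')$ entry by entry from Definition~\ref{def:reduction_map}, using the explicit action of $U_0$ from Definition~\ref{def:data_preserving}. First, for a data label $q=(s,\ell)\in\mathcal{B}_{\mathcal{C}}$, I will determine the column of $T_0'$ indexed by $(q,0^a)$. By data preservation,
\[
U_0\ket{s,\ell}\otimes\ket{0^a}=e^{i\phi}\ket{s,\ell}\otimes\ket{g(s)},
\]
so $\mathcal{U}_0(\Pi_{(q,0^a)})=\Pi_{(q,g(s))}$. Hence, by \eqref{eq:T0prime_def},
\[
T_0'[(q',y),(q,0^a)]=\delta_{q',q}\,\delta_{y,g(s)}.
\]
Only the input columns with ancilla label $0^a$ are needed, because \eqref{eq:reduce_accept_block}--\eqref{eq:reduce_reject_row} only read those.

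Next, I apply the reduction formulas with the completeness assumption, splitting into two cases. In the case $s=0^{n-k}$, the state $q$ is one of the $L=2^k$ logical states, and $g(s)=0^a$. Then $T_0[q',q]=\delta_{q',q}$ and $T_0[R,q]=1-1=0$, which gives the $I_L$ block and zeros below it. In the case $s\neq 0^{n-k}$, the state $q$ is one of the $X=2^n-2^k$ leakage states, and completeness gives $g(s)\neq 0^a$. Then $T_0[q',q]=0$ for all $q'$, which gives the zero blocks in the middle column, and $T_0[R,q]=1$, which gives $\mathbf{1}_X^T$. The last column $(0,0,1)^T$ follows directly from \eqref{eq:reduce_absorb}. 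Assembling these in the ordering logical $\oplus$ leakage $\oplus$ $R$ yields \eqref{eq:T0_block_form}.

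There is no real obstacle; the one point needing care is the bookkeeping of which states count as "logical". The basis vectors with $s=0^{n-k}$ are $\ket{0,\ell}=E_0\ket{\ell}_L$, and $E_0$ is a Pauli with trivial syndrome. So I should note that these $2^k$ vectors are exactly the code space basis (possibly permuted in $\ell$ if $E_0$ is a nontrivial logical). This does not matter, since the block is the identity on those labels regardless. I should also note that the phase $e^{i\phi}$ drops out in $\Pi$-conjugation, so $T_0'$ is a $0$--$1$ permutation matrix.
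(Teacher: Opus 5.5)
Your proposal is correct and follows the same route as the paper's proof. The paper argues tersely that readout completeness makes zero-syndrome states accepted unchanged and nonzero-syndrome states rejected, with $R$ absorbing by Definition~\ref{def:reduction_map}; your computation $T_0'[(q',y),(q,0^a)]=\delta_{q',q}\,\delta_{y,g(s)}$, followed by the two-case application of the reduction formulas, is the explicit version of that argument.
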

\begin{proof}
By readout completeness, zero-syndrome (logical) states yield ancilla outcome $0^a$ and are accepted, while every nonzero-syndrome (leakage) state yields a nonzero ancilla outcome and is rejected. The rejection state is terminal by Definition~\ref{def:reduction_map}.
\end{proof}

\section{Modeling a Full Clock Cycle \label{sec:clock_cycle}}

In practice, QED checks are interleaved with logical gates. When modeling the added effect of noisy gates, one desires a matrix that describes the \emph{full clock cycle}: a noisy layer followed by a noisy QED check. We now discuss the regimes in which modeling this with a transition matrix is accurate.

\subsection{Noise Assumptions \label{sec:noise_ass}}

Since ideal gates need not be $\mathcal{B}$-nonactivating, most ideal gates cannot be modeled as transition matrices. Therefore, in our analysis, we model only the noise component of gate layers, not the ideal logical operations themselves. This suffices for characterizing how a code behaves under noise without requiring a full simulation of logical evolution.

For this approach to be valid, the aggregate noise, when commuted through both ideal gate layers and the ideal syndrome extraction circuit, must be $\mathcal{B}$-nonactivating. Stochastic Pauli noise satisfies the stronger $\mathcal{B}$-commuting condition (Lemma~\ref{lem:syn_b_comm}), so it suffices to verify that the aggregate noise remains stochastic Pauli. This holds when the ideal gates are Clifford, since Clifford conjugation maps Paulis to Paulis. We therefore adopt the assumption of \emph{Clifford gates under stochastic Pauli noise} throughout this work. Outside this regime, the same state space can be used to build an empirical Markov model from cycle-boundary transition statistics with rejection lumped as in Section~\ref{sec:qed_lumping}, but the exactness guarantees of Lemma~\ref{lem:syn_b_comm} no longer apply unless the relevant population dynamics close on $\mathcal{B}$.

For clarity, we do not model the Clifford gates themselves; indeed, most Cliffords are not $\mathcal{B}$-monomial (e.g., the Hadamard gate). The Clifford assumption serves only to ensure that modeling per-gate noise as stochastic Pauli is consistent with the aggregate noise being stochastic Pauli.

\paragraph{Stationary effective noise.}
For the repeated-cycle and interval analyses below, we assume that the
effective gate-noise channel and QED-check instrument are stationary in
the chosen toggling frame, with kernels $G_\eta$ and $T_\varepsilon$
independent of the cycle index and interval length $m$.
Clifford conjugation preserves the Pauli noise model; this additional
homogeneity assumption gives the repeated-kernel form
$T_\varepsilon(G_\eta)^m$.

\subsection{Noise Matrix}

Assume a gate layer with noise strength $\eta$ and transition matrix $G_{\Lambda'(\eta)}$ (abbreviated $G_\eta$ for fixed Pauli Stochastic channel $\Lambda'$). Since we model gate layers in the toggling frame where ideal logical gates are factored out (see Section~\ref{sec:noise_ass}), $G_{\eta}$ represents the effective Pauli-stochastic noise channel conjugated into a fixed frame. Hence $G_0 = I$ on $B$-populations.

Because ancillas are reset to $\ket{0^a}$ before each QED cycle, any ancilla errors introduced during the gate layer are erased and do not affect subsequent dynamics. The gate layer therefore acts nontrivially only on the data subspace $\mathcal{B}_\mathcal{C}$. Moreover, since the gate layer involves no measurement, it cannot induce transitions to the rejection state. Consequently, $G_\eta$ has the block structure
\begin{equation}
G_\eta = \begin{bmatrix} G_\eta^{(D)} & 0 \\ 0 & 1 \end{bmatrix},
\label{eq:G_eta_block}
\end{equation}
where $G_\eta^{(D)}$ is a $2^n \times 2^n$ column-stochastic matrix on $\mathcal{B}_\mathcal{C}$, and the second block corresponds to the absorbing rejection state $R$.

\subsection{QED Clock Cycle Matrix}

We now construct a transition matrix describing a \emph{QED clock cycle}: a sequence of $m$ logical gates followed by a stabilizer check with post-selection.

\begin{definition}[QED Clock Cycle Matrix]
    Consider a stabilizer QED code with noise strength $\varepsilon$ in the QED cycle and noise strength $\eta$ per logical gate. The QED clock cycle matrix for $m$ logical gates is
    \begin{equation}
        M^{(m)}_{\varepsilon, \eta} := T_\varepsilon (G_\eta)^m.
    \end{equation}
When $m$ is omitted, we assume $m=1$ unless otherwise stated.
\end{definition}

We next express $M^{(m)}_{\varepsilon, \eta}$ as a perturbation from the ideal.

\begin{definition}[Perturbation Matrices]
    Define the perturbations
    \begin{equation}
\Delta^{(T)}_\varepsilon := T_\varepsilon - T_0, \qquad \Delta^{(G,m)}_\eta := (G_\eta)^m - I, \qquad \Delta^{(M,m)}_{\varepsilon,\eta} := M^{(m)}_{\varepsilon,\eta} - T_0.
\label{eq:perturbation_notation}
\end{equation}
\end{definition}

\begin{lemma}[Perturbative Expansion of $M^{(m)}_{\varepsilon,\eta}$]
\label{lem:perturbative_expansion}
For any $m \geq 1$,
    \begin{equation}
   M^{(m)}_{\varepsilon,\eta} = T_0 + \Delta_\varepsilon^{(T)} + T_0\Delta_\eta^{(G,m)} + \Delta_\varepsilon^{(T)}\Delta_\eta^{(G,m)}.
\label{eq:perturbative_expansion}
\end{equation}
\end{lemma}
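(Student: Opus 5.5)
This is a purely algebraic identity, so the plan is direct substitution using the definitions in~\eqref{eq:perturbation_notation}, with no appeal to the stochastic or Markov structure. By the definition of the QED clock cycle matrix, $M^{(m)}_{\varepsilon,\eta} = T_\varepsilon (G_\eta)^m$. We first rewrite each factor in terms of its ideal part plus a perturbation:
\begin{equation*}
T_\varepsilon = T_0 + \Delta^{(T)}_\varepsilon,
\qquad
(G_\eta)^m = I + \Delta^{(G,m)}_\eta .
\end{equation*}
These rearrangements are immediate from the definitions.

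Here $I$ is the identity on the full reduced space $\overline{\mathcal{Q}}$, of dimension $2^n+1$. This is consistent with the block form~\eqref{eq:G_eta_block} and with $G_0 = I$. It guarantees that all the matrices involved have the same dimensions, so the products below are well defined.

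Next, substitute and expand the product $(T_0 + \Delta^{(T)}_\varepsilon)(I + \Delta^{(G,m)}_\eta)$ using distributivity of matrix multiplication. This yields four terms:
\begin{itemize}
\item $T_0 I = T_0$;
\item $\Delta^{(T)}_\varepsilon I = \Delta^{(T)}_\varepsilon$;
\item $T_0 \Delta^{(G,m)}_\eta$;
\item $\Delta^{(T)}_\varepsilon \Delta^{(G,m)}_\eta$.
\end{itemize}
Summing these gives~\eqref{eq:perturbative_expansion}. The order of the factors must be preserved throughout: the $T$-factor always stands on the left, because the matrices need not commute. No commutativity is assumed at any point.

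There is no genuine obstacle. The only point needing care is to confirm that $I$ in $\Delta^{(G,m)}_\eta$ is the same identity matrix that multiplies $T_0$. The identity then holds for every $m \ge 1$, and indeed for $m = 0$ trivially. As a byproduct, $\Delta^{(M,m)}_{\varepsilon,\eta}$ equals the sum of the last three terms.
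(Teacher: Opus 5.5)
Your proposal is correct and follows the paper's own proof exactly: substitute $T_\varepsilon = T_0 + \Delta^{(T)}_\varepsilon$ and $(G_\eta)^m = I + \Delta^{(G,m)}_\eta$ into $M^{(m)}_{\varepsilon,\eta} = T_\varepsilon (G_\eta)^m$ and expand, keeping the $T$-factor on the left. Your remark that $I$ is the identity on the full $(2^n+1)$-dimensional reduced space, consistent with the block form of $G_\eta$, is the one point worth stating, and you stated it.
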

\begin{proof}
    Substituting the definitions from~\eqref{eq:perturbation_notation} into $M^{(m)}_{\varepsilon,\eta} = T_\varepsilon (G_\eta)^m$ and expanding yields the result.
\end{proof}

\begin{remark}
    In Appendix~\ref{app:clock_cycle_spectrum}, we show that $\|\Delta_\varepsilon^{(T)}\|_1=O(\varepsilon)$ and $\|\Delta_\eta^{(G,m)}\|_1=O(m\eta)$, so $\|\Delta_\varepsilon^{(T)}\Delta_\eta^{(G,m)}\|_1=O(m\varepsilon\eta)$ is second-order in noise strength for constant $m$. Consequently, Equation~\eqref{eq:perturbative_expansion} shows that to first order, the total error decomposes additively: contributions from a noisy QED cycle (with noiseless gates) and from $m$ noisy gates (followed by an ideal QED check) enter independently.
\end{remark}

When using the QED transition matrix as a proof tool, it is often convenient to work directly with the \emph{accepted-space block}, defined by
\begin{equation}
A_{\varepsilon,\eta} := PM_{\varepsilon,\eta}P^T, \qquad P := [I_{L+X} \; 0],
\label{eq:accepted_block_def}
\end{equation}
which encodes the same information as $M_{\varepsilon,\eta}$ while removing the trivial absorbing eigenvalue. Writing $v^{(t)} = [w^{(t)}; r^{(t)}]$, the accepted population evolves as $w^{(t)} = A_{\varepsilon,\eta}^t w^{(0)}$. Spectral properties of the full clock-cycle matrix $M^{(m)}_{\varepsilon,\eta}$ and its accepted block $A^{(m)}_{\varepsilon,\eta}$, including spectrum reduction, diagonalization of $T_0$, and small-noise eigenvalue localization, are collected in Appendix~\ref{app:clock_cycle_spectrum}.

\section{Emergent Logical Non-Markovianity}

\label{sec:nonmarkov}
We now use the framework developed in Sections~\ref{sec:modeling_tm} and~\ref{sec:clock_cycle} to analyze emergent logical non-Markovianity.
Under memoryless physical noise, one might expect logical error statistics to also be memoryless. Concurrent work~\cite{ziyad2025emergent} shows this expectation fails: conditioned logical statistics can exhibit history dependence even when physical operations are Markovian. This can have notable impacts on characterizing logical qubits; characterization techniques that do not take this behavior into account can fail to capture steady-state code dynamics \cite{kumar2026co}. Building on this viewpoint, we use the explicit accepted-space kernel $A$ to isolate and quantify a leading-order, experimentally visible transient in acceptance-conditioned logical diagnostics.

The transition-matrix framework provides a clean picture. The dynamics on the expanded state space (logical $\cup$ leakage $\cup$ rejection) \emph{is} Markovian: it evolves by left-multiplication by a fixed stochastic matrix. Accepted syndrome populations carry information between cycles and influence subsequent acceptance and fidelity. Writing $w^{(t)}$ for the subnormalized accepted population, the conditional distribution $\widehat{w}^{(t)} := w^{(t)} / \|w^{(t)}\|_1$ evolves as
\begin{equation}
    \widehat{w}^{(t+1)} = \frac{A \, \widehat{w}^{(t)}}{\|A \, \widehat{w}^{(t)}\|_1}.
\end{equation}
The denominator depends on the current state, making this a nonlinear map.
We use a gate-composability notion of logical Markovianity,
following~\cite{ziyad2025emergent}: each application of the same logical
operation is represented by the same logical map. For post-selected QED,
we test this requirement using a single near-identity, completely positive,
trace-nonincreasing accepted map acting only on the logical code space.

\subsection{Accepted-State Correctness}
To characterize this phenomenon using the transition matrix, we use the
accepted-state correctness (ASC) diagnostic of~\cite{kumar2026co}, i.e.\
the fidelity conditioned on the QED checks passing.
Fix a target logical basis state $\ket{\psi}_L\in\mathcal{B}_\mathcal{C}$,
let $e_\psi$ denote its coordinate vector in the accepted space, and
initialize $w^{(0)}=e_\psi$.
Throughout, we use the same symbol for its restriction to the logical
sector or zero extension to the full reduced space, as required by the
block dimensions.

\begin{definition}[Accepted-State Correctness]
For $t$ such that $p_{\mathrm{acc}}(t) > 0$, define
\begin{equation}
    \mathrm{ASC}(t) := \frac{e_\psi^T A^t e_\psi}{\bigl\|A^t e_\psi\bigr\|_1},
\end{equation}
where $A$ is the accepted block of the QED transition matrix.
\end{definition}

\noindent Because the QED cycle is $\mathcal{B}$-nonactivating, the transition matrix $A$ gives the exact evolution of the relevant diagonal elements when the initial state is diagonal. Fidelity to any basis state is the corresponding diagonal element, so
\begin{equation}
    \mathrm{ASC}(t) = F\!\left(\ket{\psi}_L\!\bra{\psi}_L, \, \widehat{\rho}^{(t)}_{\mathrm{acc}}\right),
\end{equation}
Where $F(\rho, \ket{\psi}\bra{\psi}) = \braket{\psi|\rho|\psi}$ is the fidelity of $\rho$ with respect to $|\psi\rangle$ and $\widehat{\rho}^{(t)}_{\mathrm{acc}}$ is the normalized post-selected state after $t$ cycles.

\subsection{First-Cycle Transient in ASC}

We now show that, to leading order in noise strength, ASC exhibits logical non-Markovianity near the beginning of the code evolution. At $t=0$, the system is perfectly in the code space. After one noisy cycle, some population leaks to non-code-space sectors while remaining accepted (a ``false negative''). Subsequent cycles reject this leakage while adding fresh leakage from their own noisy QED checks, producing a deviation from naive memoryless predictions.

Let $A_0 = \mathrm{diag}(I_L, 0_X)$ be the accepted block of the noiseless QED matrix, and let $\Delta A := A - A_0$ with $\|\Delta A\|_1 = O(\varepsilon+\eta)$.

\begin{proposition}[First-Order Expansion of ASC]
\label{prop:ASC_expansion}
For any fixed $t \geq 1$,
\begin{equation}
    \mathrm{ASC}(t) = 1 + t(e_\psi^T \Delta A \, e_\psi) - (\mathbf{1}_{L+X}^T \Delta A \, e_\psi) - (t-1)(\mathbf{1}_L^T (\Delta A)_{LL} e_\psi) + O(\|\Delta A\|_1^2).
\label{eq:ASC_expansion}
\end{equation}
\end{proposition}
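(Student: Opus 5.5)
We expand $A^t=(A_0+\Delta A)^t$ to first order, then expand the ratio defining $\mathrm{ASC}(t)$. Because $A_0=\mathrm{diag}(I_L,0_X)$ is a projection ($A_0^2=A_0$), the first-order term of the binomial expansion simplifies:
\begin{equation}
A^t = A_0 + \sum_{j=0}^{t-1} A_0^{\,j}\,\Delta A\,A_0^{\,t-1-j} + O(\|\Delta A\|_1^2).
\end{equation}
The remainder is controlled as follows. There are at most $2^t$ words in the expansion, each with at least two factors of $\Delta A$. Since $\|A_0\|_1=1$ in the induced column-sum norm, each such word is bounded by $\|\Delta A\|_1^2(1+\|\Delta A\|_1)^{t-2}$. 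For fixed $t$ this is $O(\|\Delta A\|_1^2)$.

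Next we apply the expansion to $e_\psi$, which lies in the logical sector, so $A_0e_\psi=e_\psi$. The term $j=t-1$ (with $t-1\ge 0$ factors $A_0$ on the left and none on the right) needs care.

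\begin{itemize}
\item For $j=0$ the term is $\Delta A\,e_\psi$.
\item For $1\le j\le t-1$ the term is $A_0\Delta A\,e_\psi$, whose logical block is $(\Delta A)_{LL}e_\psi$ and whose leakage block vanishes.
\end{itemize}

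This gives
\begin{equation}
A^t e_\psi = e_\psi + \Delta A\,e_\psi + (t-1)A_0\Delta A\,e_\psi + O(\|\Delta A\|_1^2).
\end{equation}

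From this we read off the numerator and the denominator. For the numerator, $e_\psi^T A_0\Delta A e_\psi=e_\psi^T\Delta A e_\psi$, so
\begin{equation}
e_\psi^TA^te_\psi = 1+t\,e_\psi^T\Delta A e_\psi + O(\|\Delta A\|_1^2).
\end{equation}
For the denominator, we need all entries of $A^te_\psi$ to be nonnegative, so that the $1$-norm equals $\mathbf{1}^T A^te_\psi$. This holds exactly because $A$ is an entrywise nonnegative substochastic block. Hence
\begin{equation}
\|A^te_\psi\|_1 = 1+\mathbf{1}_{L+X}^T\Delta A e_\psi+(t-1)\mathbf{1}_L^T(\Delta A)_{LL}e_\psi+O(\|\Delta A\|_1^2).
\end{equation}

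Finally, we write $\mathrm{ASC}(t)=(1+a)/(1+b)$, where $a$ and $b$ are the first-order corrections above, each $O(\|\Delta A\|_1)$. Since $(1+a)/(1+b)=1+a-b+O(|a|^2+|b|^2)$, this yields exactly Equation~\eqref{eq:ASC_expansion}. The division is valid because $p_{\mathrm{acc}}(t)=1-O(\|\Delta A\|_1)>0$ for small noise.

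The main obstacle is bookkeeping rather than anything conceptual, and it sits in two places. The first is getting the correct count of terms in which $\Delta A$ is followed only by copies of $A_0$: these survive because $A_0e_\psi=e_\psi$. The distinction is between the one term with no $A_0$ on the left, which keeps the leakage row and contributes $\mathbf{1}_{L+X}^T$, and the $t-1$ projected terms, which contribute only $\mathbf{1}_L^T(\Delta A)_{LL}$. The second is confirming that the hidden constant in the remainder depends on $t$ but not on the noise, which is why the statement is for fixed $t$. I would double-check the first point by verifying the formula directly at $t=1$, where it reduces to $\mathrm{ASC}(1)=(1+e_\psi^T\Delta Ae_\psi)/(1+\mathbf{1}^T\Delta Ae_\psi)$.
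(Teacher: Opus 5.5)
Your proposal is correct and follows the paper's proof essentially step for step: you expand $(A_0+\Delta A)^t$ to first order and apply it to $e_\psi$. You then separate the single term with no $A_0$ to the left of $\Delta A$ from the $t-1$ projected terms $A_0\Delta A\,e_\psi$, and expand the numerator, the denominator (using nonnegativity, so that $\|A^te_\psi\|_1=\mathbf{1}^TA^te_\psi$) and the ratio. Your explicit $2^t$-word remainder bound supplies a detail the paper leaves implicit.

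One small slip: in your indexing $A_0^{\,j}\Delta A\,A_0^{\,t-1-j}$ the special term is $j=0$, as your bullets correctly state. It is not $j=t-1$, as the preceding sentence says.
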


\begin{proof}
We expand $\mathrm{ASC}(t) = (e_\psi^T A^t e_\psi)/(\bigl\|A^t e_\psi\bigr\|_1)$ to first order in $\Delta A$. Writing $A = A_0 + \Delta A$ and expanding $(A_0 + \Delta A)^t$, the first-order terms sum to $\sum_{j=0}^{t-1} A_0^{t-1-j} \Delta A \, A_0^j$. Applying to $e_\psi$ and using $A_0^j e_\psi = e_\psi$ for all $j \geq 0$ yields $\sum_{j=0}^{t-1} A_0^{t-1-j} \Delta A \, e_\psi$. Since $A_0^2 = A_0$, the $j = t-1$ term contributes $\Delta A \, e_\psi$ while the remaining $t-1$ terms each contribute $A_0 \Delta A \, e_\psi$. Thus
\begin{equation}
    A^t e_\psi = e_\psi + \Delta A \, e_\psi + (t-1) A_0 \Delta A \, e_\psi + O(\|\Delta A\|_1^2).
\label{eq:At_expansion}
\end{equation}
For the numerator of ASC, apply $e_\psi^T$ to~\eqref{eq:At_expansion}. Using $e_\psi^T A_0 = e_\psi^T$ (since $A_0$ acts as identity on the logical sector containing $e_\psi$):
\begin{equation}
    e_\psi^T A^t e_\psi = 1 + (e_\psi^T \Delta A \, e_\psi) + (t-1)(e_\psi^T \Delta A \, e_\psi) + O(\|\Delta A\|_1^2) = 1 + t(e_\psi^T \Delta A \, e_\psi) + O(\|\Delta A\|_1^2).
\end{equation}
For the denominator, apply $\mathbf{1}_{L+X}^T$ to~\eqref{eq:At_expansion}. Using $\mathbf{1}_{L+X}^T A_0 = (\mathbf{1}_L^T, 0_X)$ (since $A_0$ annihilates the leakage sector):
\begin{equation}
    \bigl\|A^t e_\psi\bigr\|_1 = 1 + (\mathbf{1}_{L+X}^T \Delta A \, e_\psi) + (t-1)(\mathbf{1}_L^T (\Delta A)_{LL} e_\psi) + O(\|\Delta A\|_1^2).
\end{equation}
Taking the ratio and expanding to first order yields~\eqref{eq:ASC_expansion}.
\end{proof}

The inner products appearing in~\eqref{eq:ASC_expansion} have natural interpretations. For convenience, define
\begin{align}
    a &:= e_\psi^T (\Delta A) e_\psi, &
    b_{\mathrm{acc}} &:= \mathbf{1}_{L+X}^T (\Delta A) e_\psi, &
    b_{\mathrm{log}} &:= \mathbf{1}_L^T (\Delta A)_{LL} e_\psi,
\end{align}
so that $\mathrm{ASC}(t) = 1 + ta - b_{\mathrm{acc}} - (t-1)b_{\mathrm{log}} + O(\|\Delta A\|_1^2)$. Here $a$ is the first-order change in the correct-logical-state probability, $b_{\mathrm{acc}}$ is the first-order change in total accepted probability, and $b_{\mathrm{log}}$ is the corresponding change restricted to the logical sector. The structure of~\eqref{eq:ASC_expansion} shows that $a$ accumulates every cycle, while $b_{\mathrm{log}}$ begins accumulating only after the first cycle. The difference $\ell := b_{\mathrm{acc}} - b_{\mathrm{log}} = \bigl\|(\Delta A)_{XL} e_\psi\bigr\|_1 \geq 0$ is the \emph{accepted leakage injection}: the probability that a run starting in the correct logical state lands in an accepted leakage sector after one cycle.

\begin{corollary}[Per-Cycle Drift]
\label{cor:per_cycle_drift}
Define $h(t) := \mathrm{ASC}(t+1) - \mathrm{ASC}(t)$. Then
\begin{equation}
    h(0) = a - b_{\mathrm{acc}} + O(\|\Delta A\|_1^2), \qquad h(t) = a - b_{\mathrm{log}} + O(\|\Delta A\|_1^2) \;\; \text{for } t \geq 1,
\end{equation}
and consequently $h(0) - h(1) = -\ell + O(\|\Delta A\|_1^2)$.
\end{corollary}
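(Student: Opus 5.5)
The plan is to read off $h(t)$ directly from the first-order formula of Proposition~\ref{prop:ASC_expansion}, $\mathrm{ASC}(t) = 1 + ta - b_{\mathrm{acc}} - (t-1)b_{\mathrm{log}} + O(\|\Delta A\|_1^2)$. The only subtlety is that this formula is stated for $t\ge 1$, so $\mathrm{ASC}(0)$ needs separate treatment.

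\emph{Base value.} At $t=0$ we have $A^0 e_\psi = e_\psi$. The numerator is $e_\psi^T e_\psi = 1$ and the denominator is $\|e_\psi\|_1=1$, so $\mathrm{ASC}(0)=1$ exactly, with no error term.

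\emph{Case $t=0$.} Setting $t=1$ in the expansion gives $\mathrm{ASC}(1) = 1 + a - b_{\mathrm{acc}} + O(\|\Delta A\|_1^2)$, because the $(t-1)b_{\mathrm{log}}$ term vanishes. Hence $h(0)=\mathrm{ASC}(1)-\mathrm{ASC}(0) = a - b_{\mathrm{acc}} + O(\|\Delta A\|_1^2)$.

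\emph{Case $t\ge1$.} Both $t$ and $t+1$ are at least $1$, so Proposition~\ref{prop:ASC_expansion} applies to each. Subtracting, the constant $1$ and $-b_{\mathrm{acc}}$ cancel, and what remains is
\[
h(t) = \bigl[(t+1)-t\bigr]a - \bigl[t-(t-1)\bigr]b_{\mathrm{log}} + O(\|\Delta A\|_1^2) = a - b_{\mathrm{log}} + O(\|\Delta A\|_1^2).
\]
The constant hidden in the $O(\cdot)$ may depend on $t$, but $t$ is fixed, as in the hypothesis of Proposition~\ref{prop:ASC_expansion}, so this is harmless.

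\emph{Difference.} Subtracting the two cases gives $h(0)-h(1) = b_{\mathrm{log}} - b_{\mathrm{acc}} + O(\|\Delta A\|_1^2) = -\ell + O(\|\Delta A\|_1^2)$, by the definition $\ell = b_{\mathrm{acc}} - b_{\mathrm{log}}$.

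\emph{Identity for $\ell$.} It remains to justify $\ell = \|(\Delta A)_{XL}e_\psi\|_1 \ge 0$. Splitting $\mathbf{1}_{L+X}^T = (\mathbf{1}_L^T,\mathbf{1}_X^T)$ gives
\[
\ell = b_{\mathrm{acc}} - b_{\mathrm{log}} = \mathbf{1}_X^T(\Delta A)_{XL}e_\psi .
\]
Since $(A_0)_{XL}=0$, we have $(\Delta A)_{XL} = A_{XL}$, which has nonnegative entries because it is a block of a substochastic matrix. Therefore $\mathbf{1}_X^T(\Delta A)_{XL}e_\psi$ equals the $1$-norm of $(\Delta A)_{XL}e_\psi$ and is nonnegative.

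The main point requiring care is the base case: the expansion must not be used at $t=0$, where it would wrongly contribute a $+b_{\mathrm{log}}$ term. Evaluating $\mathrm{ASC}(0)=1$ directly avoids this.
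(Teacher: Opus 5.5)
Your proposal is correct and takes the route the paper intends: the corollary is read off by differencing the first-order expansion of Proposition~\ref{prop:ASC_expansion}, with $\mathrm{ASC}(0)=1$ evaluated directly rather than by plugging $t=0$ into the expansion, which is the one point that needs care. Your side check that $\ell=\mathbf{1}_X^T(\Delta A)_{XL}e_\psi=\|(\Delta A)_{XL}e_\psi\|_1\ge 0$ (using $(A_0)_{XL}=0$ and nonnegativity of $A$) matches the paper's accompanying remark.
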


Since $\ell \geq 0$, the initial per-cycle drift is generically more negative than subsequent drifts.

\paragraph{Obstruction to a stationary logical-only model.}
For this comparison, take $\eta=r\varepsilon$ with fixed $r\geq0$
as $\varepsilon\to0$.
Let $\Phi_\varepsilon=\mathrm{id}+\varepsilon\mathcal{K}
+O(\varepsilon^2)$ be a candidate logical-only accepted map, initialized at
$\rho_\psi=\ket{\psi}_L\!\bra{\psi}_L$.
For fixed $t$, $\Phi_\varepsilon^t=\mathrm{id}+t\varepsilon\mathcal{K}
+O(\varepsilon^2)$, so its acceptance probability and acceptance-conditioned
fidelity to $\rho_\psi$ satisfy
\[
p_\Phi(t)=1+t\varepsilon r_\psi+O(\varepsilon^2),
\qquad
F_\Phi(t)=1+t\varepsilon f_\psi+O(\varepsilon^2),
\]
with $r_\psi$ and $f_\psi$ independent of $t$.
Thus each diagnostic has cycle-independent increments to first order.
By contrast, the denominator expansion above, with
$p_{\mathrm{acc}}(t):=\|A^t e_\psi\|_1$, gives
\[
p_{\mathrm{acc}}(2)-2p_{\mathrm{acc}}(1)+1
=-\ell+O(\varepsilon^2),
\]
and Corollary~\ref{cor:per_cycle_drift} gives
$h(0)-h(1)=-\ell+O(\varepsilon^2)$.
Whenever $\ell=\lambda\varepsilon+O(\varepsilon^2)$ with $\lambda>0$,
no such stationary logical-only map reproduces these diagnostics through
first order. This is a leading-order signature of logical non-Markovianity
in the gate-composability sense specified above.
The same accepted-leakage injection controls both transients; in particular,
$\ell=2p_{\mathrm{acc}}(1)-p_{\mathrm{acc}}(2)-1+O(\varepsilon^2)$
can be inferred from one- and two-cycle acceptance probabilities, without
logical-state readout.

We now identify the physical origin of $\ell$. From the perturbative expansion~\eqref{eq:perturbative_expansion}, taking accepted blocks gives $\Delta A = (\Delta_\varepsilon^{(T)})_{\mathrm{acc}} + (T_0 \Delta_\eta^{(G)})_{\mathrm{acc}} + O(\varepsilon\eta)$. Since $(T_0 \Delta_\eta^{(G)})_{XL} = 0$ (gate-induced leakage is immediately rejected by the ideal check), only QED-check noise contributes to $\ell$ at first order:
\begin{equation}
    \ell = \bigl\|((\Delta_\varepsilon^{(T)})_{\mathrm{acc}})_{XL} e_\psi\bigr\|_1 + O((\varepsilon + \eta)^2).
\end{equation}
Physically, QED-check imperfections produce \emph{false negatives}: the data is driven into a nontrivial-syndrome sector while the ancilla still returns $0^a$. These accepted leakage states persist briefly before being rejected, producing the first-cycle transient in conditional metrics like ASC.

\section{Optimizing QED Efficiency via the QED Interval \label{sec:qed_efficiency}}
In this section, we analyze the \emph{QED efficiency}, a recently proposed metric that quantifies the additive error reduction per shot due to a QED code \cite{kumar2026co}. For completeness of this manuscript, we re-derive the motivation for this metric in Appendix~\ref{app:se_efficiency_motivation}, originally demonstrated in \cite{kumar2026co}. Beyond the specific results, this section illustrates a broader benefit of the transition matrix formalism: perturbative expansion of stochastic matrices naturally yields closed-form parameters that are physically meaningful and useful for analysis (see Remark~\ref{rem:abc}).

\subsection{Definitions}

\begin{definition}[$m$-unit]
    For a QED code, an $m$-unit is a sequence of $m$ logical gates terminating in a stabilizer check and post-selection.
\end{definition}

\begin{assumption}
In this section, we standardize the structure of the $m$-units we consider.
We consider repeated applications of a fixed transversal Clifford gate with
a stationary effective Pauli-noise layer $G_\eta$, as in
Section~\ref{sec:noise_ass}. We assume that a weight-$w$ fault pattern
induces an effective Pauli of weight at most $c_{\text{prop}}w$,
with $c_{\text{prop}}$ independent of $m$.
A sequence of $m$ logical gates is therefore described by $(G_\eta)^m$.
Finally, we consider $m$-units run \textit{after} the initial QED transient
described in Section~\ref{sec:nonmarkov}, to characterize post-transient
behavior to first order.
\label{ass:eff_gate}
\end{assumption}

\begin{assumption}[Unified noise parameter]
Since the QED check and the logical gates operate on the same physical hardware, in this section we assume a fixed common noise strength $\varepsilon = \eta$. This models a realistic scenario where the hardware noise rate is given, and one wishes to determine the optimal QED check frequency. Notationally, in this section we drop the second noise parameter $\eta$ and write all noise strengths as $\varepsilon$.

\label{ass:unified_noise}
\end{assumption}

\begin{remark}
Nonidentical effective Pauli layers can be handled by replacing $(G_\eta)^m$
with their ordered product; for a fixed effective check, the first-order
gate contribution $mT_0G_1$ becomes $T_0\sum_{j=1}^m G_{1,j}$,
where $G_{1,j}$ is the first-order coefficient of layer $j$.
Distinct check and gate noise strengths can likewise be retained.
We use the homogeneous setting to obtain the closed-form interval
dependence below.
\end{remark}

\begin{remark}[Bookkeeping in $m$]
All asymptotics in this section are taken as $\varepsilon\to 0$ with $m$ a fixed integer.
When useful, we keep track of how the first neglected Taylor coefficients depend on $m$; because an
$m$-gate interval contains $O(m)$ fault locations, the order-$\varepsilon^2$ coefficients scale as
$O(m^2)$ under circuit-level Pauli-stochastic noise.
\end{remark}

As in Section~\ref{sec:nonmarkov}, fix a target logical basis state $\ket{\psi}_L \in \mathcal{B}_\mathcal{C}$ with basis vector $e_\psi$.

\begin{definition}[Error and rejection rate of an $m$-unit]
We now wish to define the error rate and rejection rate of an $m$-unit. Under Assumptions~\ref{ass:eff_gate} and~\ref{ass:unified_noise}, these depend only on $m$. To consider $m$-units that are run \emph{after} the first-cycle QED transient, we note that the transient appears only in the first cycle to first order, and so we define
\begin{align}
    p_{\text{rej}}^{(m)} &:= 1-\frac{p_\text{acc}^{(m)}(2)}{p_\text{acc}^{(m)}(1)}, & \varepsilon_\ell^{(m)} := 1-\frac{\text{ASC}^{(m)}(2)}{\text{ASC}^{(m)}(1)}
\end{align}
where $p_\text{acc}^{(m)}(t):= \bigl\|(A^{(m)})^t e_\psi\bigr\|_1$ is the probability of acceptance after $t$ applications of an $m$-unit, and $A^{(m)}$ is the accepted block of $M^{(m)}$. The choice of taking the ratio of the second cycle to the first is arbitrary to first order.
\label{def:metrics_m_unit}
\end{definition}
\noindent Here $\varepsilon_\ell^{(m)}$ is an operational logical error rate
extracted from target-basis-state fidelity decay after removal of the
first-order QED transient. To first order, it counts transitions to other
logical basis labels, as shown in Lemma~\ref{lem:se_two_cycle}.

\begin{definition}[Physical error rate for an $m$-unit]
\label{def:physical_error_rate}
For the physical baseline, consider a reference sequence of $m$
physical gates without QED checks, each with independent
Pauli-stochastic fault probability $\varepsilon$.
We use the probability of at least one fault:
\begin{align}
    \varepsilon_p^{(m)} &:= 1 - (1-\varepsilon)^m = m\varepsilon + O(m^2\varepsilon^2).
\end{align}
\end{definition}

\begin{definition}[QED Efficiency]
For $p_{\text{rej}}^{(m)} >0$, the \textit{QED efficiency} of an $m$-unit is
\begin{equation}
        \text{SE}(m) = \frac{\varepsilon_p^{(m)}-\varepsilon_\ell^{(m)}}{p_{\text{rej}}^{(m)}}
    \label{eq:se}
\end{equation}
\end{definition}
Equation~\eqref{eq:se} compares the physical fault-probability baseline
with the operational logical error rate per discarded shot.
Its interpretation in terms of PEC sampling savings~\cite{kumar2026co}
requires these rates to agree, to the retained order, with the
effective-channel parameters under the shot-cost assumptions of
Appendix~\ref{app:se_efficiency_motivation}.

In this section, we use our transition matrix formalism to derive a first-order expansion for $\mathrm{SE}(m)$ and show that, to leading order, increasing $m$ increases $\mathrm{SE}(m)$ (Section~\ref{sec:se_firstord}). This provides a concrete leading-order explanation for empirical trends observed in simulations~\cite{kumar2026co}.

\subsection{QED Efficiency to First Order \label{sec:se_firstord}}

The first-order behavior of $\mathrm{SE}(m)$ is controlled entirely by the first-order perturbation
of the accepted block $A^{(m)}$ of the length-$m$ unit.

\begin{lemma}[Linearization and affine $m$-dependence of $A^{(m)}$]
\label{lem:se_A1_affine}
Under circuit-level stochastic Pauli noise there exist matrices $T_1$ and $G_1$ (independent of
$\varepsilon$) such that, for each fixed $m\ge 1$,
\begin{equation}
A^{(m)}_\varepsilon
=
A_0 + \varepsilon A_1^{(m)} + O(m^2\varepsilon^2),
\qquad
A_1^{(m)} = (T_1)_{\mathrm{acc}} + m\,(T_0G_1)_{\mathrm{acc}},
\label{eq:se_A1_affine}
\end{equation}
where $A_0=\mathrm{diag}(I_L,0_X)$ is the accepted block of the ideal check $T_0$.
\end{lemma}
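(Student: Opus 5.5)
The plan is to expand each factor of $M^{(m)}_\varepsilon = T_\varepsilon (G_\varepsilon)^m$ to first order in $\varepsilon$, multiply out, and then take the accepted block.

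\textbf{Step 1: the check $T_\varepsilon$.} By \eqref{eq:Tprime_Lambda_def}, $T'_\varepsilon=\sum_\sigma \Pr_{\Lambda(\varepsilon)}(\sigma)\,M_{P(\sigma)}T_0'$. Under the parametrization of Section~\ref{sec:construct_noisy_TM}, each nonidentity Pauli at location $\ell$ has probability $\varepsilon c_{\ell,P}$. Hence $\Pr(\sigma)$ is a polynomial in $\varepsilon$:
\begin{itemize}
\item the empty pattern has probability $1-\varepsilon\sum_{\ell,P}c_{\ell,P}+O(\varepsilon^2)$;
\item a weight-one pattern has probability $\varepsilon c_{\ell,P}+O(\varepsilon^2)$;
\item patterns of weight $w\ge2$ have probability $O(\varepsilon^w)$.
\end{itemize}
Collecting terms gives $T'_\varepsilon=T'_0+\varepsilon T'_1+O(\varepsilon^2)$, where
\[
T'_1:=\sum_{\ell,P}c_{\ell,P}\bigl(M_{P(\ell,P)}-I\bigr)T'_0 .
\]
The reduction $\mathcal{R}$ of Definition~\ref{def:reduction_map} is affine in $T'$: the accepted block and absorbing entries are linear, and the rejection row is one minus a linear function. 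Therefore $T_\varepsilon=T_0+\varepsilon T_1+O(\varepsilon^2)$, with $T_1$ the linear part of $\mathcal{R}$ applied to $T'_1$.

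\textbf{Step 2: the gate layer $G_\varepsilon$.} The same argument gives $G_\varepsilon=I+\varepsilon G_1+O(\varepsilon^2)$, with $G_1$ having the block structure \eqref{eq:G_eta_block}. Both $T_1$ and $G_1$ are manifestly independent of $\varepsilon$ and of $m$; the latter uses the stationarity assumption.

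\textbf{Step 3: the product and the error term.} Expanding the power gives
\[
(G_\varepsilon)^m=I+m\varepsilon G_1+R_m(\varepsilon).
\]
The key point is to control the remainder uniformly in $m$. I would not bound it by powers of $\|G_\varepsilon-I\|_1$ naively. Instead I would view $(G_\varepsilon)^m$ itself as the transition matrix of a Pauli-stochastic channel with $mF_G$ fault locations, where $F_G$ is the number of fault locations in one gate layer. In that sum, every pattern of weight $\ge2$ carries a factor $\varepsilon^2$, and there are $\binom{mF_G}{2}=O(m^2)$ such weight-two patterns. Each term $M_{P(\sigma)}$ is a permutation matrix with $\|\cdot\|_1=1$, and Assumption~\ref{ass:eff_gate} ensures this bookkeeping is unaffected by propagation. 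The tail of weights $\ge3$ is dominated by $\sum_{w\ge2}\binom{mF_G}{w}\varepsilon^w=O(m^2\varepsilon^2)$ for fixed $m$ as $\varepsilon\to0$. Equivalently, the binomial expansion gives
\[
\|R_m\|_1\le\sum_{j\ge2}\binom{m}{j}\|G_\varepsilon-I\|_1^j=(1+O(\varepsilon))^m-1-mO(\varepsilon)=O(m^2\varepsilon^2).
\]
I expect this uniform remainder to be the main obstacle, mainly a matter of stating the bookkeeping convention from the ``Bookkeeping in $m$'' remark cleanly.

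With the remainder in hand, multiplying out as in Lemma~\ref{lem:perturbative_expansion} gives
\[
M^{(m)}_\varepsilon=T_0+\varepsilon\bigl(T_1+mT_0G_1\bigr)+\varepsilon^2 mT_1G_1+T_\varepsilon R_m+O(\varepsilon^2).
\]
Since $\|T_\varepsilon\|_1\le1$ (column-substochastic), every term beyond first order is $O(m^2\varepsilon^2)$.

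\textbf{Step 4: the accepted block.} Taking the accepted block $P(\cdot)P^T$ is linear and norm-nonincreasing, so
\[
A^{(m)}_\varepsilon=A_0+\varepsilon\bigl[(T_1)_{\mathrm{acc}}+m(T_0G_1)_{\mathrm{acc}}\bigr]+O(m^2\varepsilon^2).
\]
Here $A_0=(T_0)_{\mathrm{acc}}=\mathrm{diag}(I_L,0_X)$ by \eqref{eq:T0_block_form}, which yields \eqref{eq:se_A1_affine}.
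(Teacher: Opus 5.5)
Your proposal is correct and follows essentially the paper's route: expand $T_\varepsilon=T_0+\varepsilon T_1+O(\varepsilon^2)$ and $G_\varepsilon=I+\varepsilon G_1+O(\varepsilon^2)$, obtain the first-order coefficient $T_1+mT_0G_1$ (the paper via the product rule, you by multiplying out), bound the remainder by counting pairs among the $O(m)$ fault locations, and pass to accepted blocks by linearity. One small slip: your binomial inequality for $\|R_m\|_1$ drops the contribution $m\,(G_\varepsilon-I-\varepsilon G_1)$ from the expansion of $(I+(G_\varepsilon-I))^m$, but that term is $O(m\varepsilon^2)$ and does not affect the conclusion.
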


\begin{proof}[Proof sketch]
Both $T_\varepsilon$ and $G_\varepsilon$ are differentiable around $\varepsilon=0$ due to our assumption of Pauli Stochastic noise at independent fault locations. Expand $T_\varepsilon=T_0+\varepsilon T_1+O(\varepsilon^2)$ and $G_\varepsilon=I+\varepsilon G_1+O(\varepsilon^2)$, then
differentiate $M^{(m)}_\varepsilon=T_\varepsilon(G_\varepsilon)^m$ at $\varepsilon=0$:
$\left.\frac{d}{d\varepsilon}M^{(m)}_\varepsilon\right|_{\varepsilon=0}=T_1+mT_0G_1$. Taking accepted blocks
is linear, yielding \eqref{eq:se_A1_affine}. A fully detailed derivation is in Appendix~\ref{app:se_efficiency_details}.
\end{proof}

\begin{lemma}[Two-cycle extraction from $A_1$]
\label{lem:se_two_cycle}
Let $A_\varepsilon=A_0+\varepsilon A_1+O(\varepsilon^2)$ with $A_0=\mathrm{diag}(I_L,0_X)$ and initial logical
basis state $e_\psi$. With $p_{\mathrm{rej}}$ and $\varepsilon_\ell$ defined as in
Definition~\ref{def:metrics_m_unit}, we have
\begin{align}
p_{\mathrm{rej}}
&= -\varepsilon\,\mathbf{1}_L^T (A_1)_{LL} e_\psi + O(\varepsilon^2),
\label{eq:se_prej_from_A1}\\
\varepsilon_\ell
&= \varepsilon\Big(\mathbf{1}_L^T (A_1)_{LL} e_\psi - e_\psi^T (A_1)_{LL} e_\psi\Big) + O(\varepsilon^2)
= \varepsilon\sum_{\substack{i\in L\\ i\neq \psi}} (A_1)_{i,\psi} + O(\varepsilon^2).
\label{eq:se_ell_from_A1}
\end{align}
\end{lemma}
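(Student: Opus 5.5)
The plan is to reuse the first-order expansion of $A^t e_\psi$ from the proof of Proposition~\ref{prop:ASC_expansion}, now with $\Delta A=\varepsilon A_1+O(\varepsilon^2)$. That proof gives, for $t\ge1$,
\[
A^t e_\psi = e_\psi + \varepsilon A_1 e_\psi + (t-1)\varepsilon A_0A_1 e_\psi + O(\varepsilon^2).
\]
From this I will read off $p_{\mathrm{acc}}(1)$, $p_{\mathrm{acc}}(2)$, $\mathrm{ASC}(1)$ and $\mathrm{ASC}(2)$. Then I will form the ratios in Definition~\ref{def:metrics_m_unit} and expand them to first order. The denominators equal $1+O(\varepsilon)$, so the quotient expansions are legitimate.

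\textbf{Rejection rate.} Applying $\mathbf{1}_{L+X}^T$ and using $\mathbf{1}_{L+X}^TA_0=(\mathbf{1}_L^T,0)$ gives
\[
p_{\mathrm{acc}}(t)=1+\varepsilon b_{\mathrm{acc}}+(t-1)\varepsilon b_{\mathrm{log}}+O(\varepsilon^2),
\]
with $b_{\mathrm{acc}}=\mathbf{1}_{L+X}^TA_1e_\psi$ and $b_{\mathrm{log}}=\mathbf{1}_L^T(A_1)_{LL}e_\psi$. The ratio is then
\[
\frac{p_{\mathrm{acc}}(2)}{p_{\mathrm{acc}}(1)}=1+\varepsilon b_{\mathrm{log}}+O(\varepsilon^2).
\]
The $b_{\mathrm{acc}}$ term, which carries the first-cycle leakage transient, cancels in the ratio. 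This yields \eqref{eq:se_prej_from_A1}.

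\textbf{Logical error rate.} Proposition~\ref{prop:ASC_expansion} gives $\mathrm{ASC}(t)=1+t\varepsilon a-\varepsilon b_{\mathrm{acc}}-(t-1)\varepsilon b_{\mathrm{log}}+O(\varepsilon^2)$, where $a=e_\psi^TA_1e_\psi=e_\psi^T(A_1)_{LL}e_\psi$. The ratio is
\[
\frac{\mathrm{ASC}(2)}{\mathrm{ASC}(1)}=1+\varepsilon(a-b_{\mathrm{log}})+O(\varepsilon^2),
\]
and again $b_{\mathrm{acc}}$ cancels. Hence $\varepsilon_\ell=\varepsilon(b_{\mathrm{log}}-a)+O(\varepsilon^2)$, which is the first equality in \eqref{eq:se_ell_from_A1}. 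For the second equality, write $\mathbf{1}_L^T(A_1)_{LL}e_\psi=\sum_{i\in L}(A_1)_{i,\psi}$ and subtract the diagonal term $i=\psi$.

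\textbf{Main obstacle.} The only point needing care is the order of the remainders. I must confirm that the $O(\|\Delta A\|_1^2)$ terms in Proposition~\ref{prop:ASC_expansion} are $O(\varepsilon^2)$ here. This holds because $t\in\{1,2\}$ is fixed and $\|\Delta A\|_1=O(\varepsilon)$; for the $m$-unit case, Lemma~\ref{lem:se_A1_affine} shows the constant may depend on $m$. I must also confirm that $p_{\mathrm{acc}}(1)$ and $\mathrm{ASC}(1)$ are bounded away from zero for small $\varepsilon$, so that dividing is harmless. Beyond these checks, the argument is bookkeeping on the two expansions.
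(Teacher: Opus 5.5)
Your proposal is correct and follows essentially the same route as the paper. The paper also expands $A_\varepsilon e_\psi$ and $A_\varepsilon^2 e_\psi$ using $A_0^2=A_0$, reads off $p_{\mathrm{acc}}(1)$, $p_{\mathrm{acc}}(2)$, $\mathrm{ASC}(1)$ and $\mathrm{ASC}(2)$, and applies the first-order ratio identity; the only difference is that you reuse the general-$t$ expansion from Proposition~\ref{prop:ASC_expansion} instead of redoing the $t=2$ case directly, and your observation that $b_{\mathrm{acc}}$ cancels in both ratios is exactly why this two-cycle definition removes the first-cycle transient.
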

\begin{proof}[Proof sketch]
Use $A_0^2=A_0$ to expand $A_\varepsilon^2=A_0+\varepsilon(A_0A_1+A_1A_0)+O(\varepsilon^2)$, then expand the
ratios $p_{\mathrm{acc}}(2)/p_{\mathrm{acc}}(1)$ and $\mathrm{ASC}(2)/\mathrm{ASC}(1)$ to first
order via $\frac{1+\varepsilon u}{1+\varepsilon v}=1+\varepsilon(u-v)+O(\varepsilon^2)$. Routine algebra is in
Appendix~\ref{app:se_efficiency_details}.
\end{proof}

\noindent\textbf{Specialization to $m$-units.}
When applying Lemma~\ref{lem:se_two_cycle} to an $m$-unit, the matrix $A_1$ in
\eqref{eq:se_ell_from_A1} should be read as the first-order coefficient $A_1^{(m)}$ from
Lemma~\ref{lem:se_A1_affine}.

\begin{proposition}[First-order forms of $p_{\mathrm{rej}}^{(m)}$ and $\varepsilon_\ell^{(m)}$]
\label{prop:se_first_order_coeffs}
Assume $c_{\mathrm{prop}}<d$ (Assumption~\ref{ass:eff_gate}). For each fixed $m$,
\begin{align}
p_{\mathrm{rej}}^{(m)} &= \varepsilon(\alpha+\beta m)+O(m^2\varepsilon^2),
\label{eq:se_prej_affine}\\
\varepsilon_\ell^{(m)} &= \gamma\,\varepsilon+O(m^2\varepsilon^2),
\label{eq:se_ell_linear}
\end{align}
where
\begin{equation}
\alpha:=-\mathbf{1}_L^T\!\left(((T_1)_{\mathrm{acc}})_{LL}e_\psi\right),\qquad
\beta:=-\mathbf{1}_L^T\!\left(((T_0G_1)_{\mathrm{acc}})_{LL}e_\psi\right),\qquad
\gamma:=
\sum_{\substack{i\in L\\ i\neq \psi}}\bigl(A_1^{(m)}\bigr)_{i,\psi}
=
\sum_{\substack{i\in L\\ i\neq \psi}}\left((T_1)_{\mathrm{acc}}\right)_{i,\psi},
\label{eq:se_alpha_beta_gamma}
\end{equation}
The second equality uses $A_1^{(m)}=(T_1)_{\mathrm{acc}}+m(T_0G_1)_{\mathrm{acc}}$ and the fact that
$(T_0G_1)_{\mathrm{acc}}$ has no off-diagonal logical injections when $c_{\mathrm{prop}}<d$, and
$\alpha,\beta,\gamma\ge 0$.
\end{proposition}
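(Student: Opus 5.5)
The plan is to combine Lemma~\ref{lem:se_A1_affine} with Lemma~\ref{lem:se_two_cycle}, read with $A_1=A_1^{(m)}$ as in the specialization remark. Substituting $A_1^{(m)}=(T_1)_{\mathrm{acc}}+m(T_0G_1)_{\mathrm{acc}}$ into \eqref{eq:se_prej_from_A1} and using linearity of $x\mapsto-\mathbf{1}_L^T(x)_{LL}e_\psi$ gives $p_{\mathrm{rej}}^{(m)}=\varepsilon(\alpha+\beta m)+\dots$ with $\alpha,\beta$ exactly as in \eqref{eq:se_alpha_beta_gamma}. The same substitution in \eqref{eq:se_ell_from_A1} gives $\varepsilon\sum_{i\in L,i\neq\psi}\bigl[((T_1)_{\mathrm{acc}})_{i,\psi}+m((T_0G_1)_{\mathrm{acc}})_{i,\psi}\bigr]$. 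The error terms need care: Lemma~\ref{lem:se_two_cycle} is stated with a generic $O(\varepsilon^2)$. I would therefore redo its ratio expansion, tracking that the second-order coefficient of $A_\varepsilon^{(m)}$ is $O(m^2)$ and that $A_1^{(m)}$ is $O(m)$. Then every quadratic term, such as $(\varepsilon A_1^{(m)})^2$ or the cross term from $1/(1+\varepsilon v)$, is $O(m^2\varepsilon^2)$.

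The key structural step is to show that $((T_0G_1)_{\mathrm{acc}})_{i,\psi}=0$ for $i\in L$, $i\neq\psi$. This would make $\gamma$ independent of $m$ and give the second equality. I would argue from Lemma~\ref{lem:paulis_monomial}. $G_1$ is a nonnegative combination of the permutation matrices $M_P$ induced by single fault-location Paulis. Each propagates to an effective Pauli of weight at most $c_{\mathrm{prop}}<d$, since first order means weight $w=1$ (Assumption~\ref{ass:eff_gate}). By the definition of distance, such a Pauli either acts trivially on the code space, i.e. lies in $\mathcal{S}$ up to phase, or has nonzero syndrome. In the first case $M_P$ fixes every $\ket{0,\ell}$, so it contributes only to the diagonal entry. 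In the second case it maps $e_\psi$ into the leakage sector $X$. Applying $T_0$ from \eqref{eq:T0_block_form} then sends that mass to $R$ and never to $L$. Hence the $LL$ block of $T_0G_1$ restricted to column $\psi$ is supported on the diagonal. I expect this step to be the main point requiring care. In particular, I need to confirm that the first-order coefficient of $G_\eta^m$ involves only single-fault patterns, each with bounded-weight propagation, so that no logical operator of weight $\ge d$ can arise at first order.

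For nonnegativity, write $T_\varepsilon=\sum_\sigma\Pr(\sigma)\,\mathcal{R}(M_{P(\sigma)}T_0')$. Then $T_1$ equals the sum over single faults of $c_{\ell,P}\,\mathcal{R}(M_PT_0')$ minus the total rate times $T_0$. The same form holds for $G_1$, with $-(\sum c)I$. Off-diagonal entries of $T_1$ and $G_1$ are therefore nonnegative, which gives $\gamma\ge0$. For $\alpha$, the column $\psi$ of $(T_\varepsilon)_{\mathrm{acc}}$ has $L$-sector mass at most $1$, with equality at $\varepsilon=0$. So its first derivative $-\alpha$ is $\le0$, i.e. $\alpha\ge0$. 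The same reasoning applied to $T_0G_\varepsilon$ (equivalently, $\mathbf{1}_L^TT_0=$ the logical indicator, and $G_\varepsilon$ column-stochastic with $(G_0)=I$) gives $\beta\ge0$.
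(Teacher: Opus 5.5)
Your proposal is correct and follows the paper's own route. You substitute $A_1^{(m)}=(T_1)_{\mathrm{acc}}+m(T_0G_1)_{\mathrm{acc}}$ into the two-cycle expansions of Lemma~\ref{lem:se_two_cycle}, tracking the $O(m^2\varepsilon^2)$ remainders. You then remove the $m$-dependent off-diagonal logical entries by the weight-$<d$ argument, and you get $\alpha,\beta,\gamma\ge0$ from probabilities bounded by $1$ that equal $1$ at $\varepsilon=0$.

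Two imprecisions are cosmetic and do not affect the argument. First, $G_1=\sum c_{\ell,P}(M_P-I)$ rather than a nonnegative combination of the $M_P$. Second, your nonnegativity claim for off-diagonal entries of $T_1$ holds on the accepted block, where $T_0=\mathrm{diag}(I_L,0_X)$, but fails in the $R$ row of leakage columns, where those entries are $\le 0$.
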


\begin{proof}[Proof sketch]
Apply Lemma~\ref{lem:se_two_cycle} to $A^{(m)}_\varepsilon$ and substitute the affine form
\eqref{eq:se_A1_affine} to obtain \eqref{eq:se_prej_affine}. For $\varepsilon_\ell^{(m)}$, the only
potential $m$-dependence at $O(\varepsilon)$ comes from $(T_0G_1)_{\mathrm{acc}}$, but a single gate
fault propagates to an effective Pauli of weight $\le c_{\mathrm{prop}}<d$, hence it is either
rejected by the ideal check $T_0$ (nontrivial syndrome) or is stabilizer-trivial on the code space
(trivial syndrome). Thus $(T_0G_1)_{\mathrm{acc}}$ has no off-diagonal logical injections, and the
sum in \eqref{eq:se_ell_from_A1} depends only on $(T_1)_{\mathrm{acc}}$, giving
\eqref{eq:se_ell_linear}. Nonnegativity follows from stochasticity and the fact that the relevant
probabilities are $\le 1$ and equal $1$ at $\varepsilon=0$. Full details are in
Appendix~\ref{app:se_efficiency_details}.
\end{proof}

\begin{remark}[Interpretation of $\alpha$, $\beta$, $\gamma$]
    The parameters in~\eqref{eq:se_alpha_beta_gamma} admit direct physical interpretations: $\alpha$ is the fixed first-order check contribution to the post-transient rejection rate, including delayed rejection of accepted leakage, $\beta$ is the first-order rejection coefficient per gate layer followed by an ideal check (measuring detectability of gate errors), and $\gamma$ is the first-order coefficient for accepted transitions to other logical basis labels caused by syndrome-extraction faults.
These coefficients depend on the code and the chosen gate-layer and syndrome-extraction implementations.
\label{rem:abc}
\end{remark}

\begin{theorem}[First-order QED efficiency and monotonicity]
\label{thm:se_first_order}
Assume $\alpha+\beta\gamma>0$ (which, together with $\alpha,\beta,\gamma\ge 0$, gives $\alpha+\beta m>0$ for all $m\ge 1$). For each fixed $m$,
\begin{equation}
\mathrm{SE}(m)=\frac{m-\gamma}{\alpha+\beta m}
+O\!\left(\varepsilon\,\frac{m^2}{\alpha+\beta m}\left(1+\frac{m}{\alpha+\beta m}\right)\right).
\label{eq:se_first_order}
\end{equation}
\noindent
The correction term grows with $m$; when $\alpha+\beta m=\Theta(m)$ (e.g.\ $\beta>0$), it scales as
$O(m\varepsilon)$, so for fixed $\varepsilon$ the first-order approximation can fail at large $m$.
Moreover, the leading term $\mathrm{SE}_0(m):=\frac{m-\gamma}{\alpha+\beta m}$ is strictly
increasing in $m$.
Consequently, for every fixed integer $m_{\max}\ge2$, there exists
$\varepsilon_0>0$ such that the exact efficiency satisfies
$\mathrm{SE}(m+1)>\mathrm{SE}(m)$ for all
$1\le m<m_{\max}$ and $0<\varepsilon<\varepsilon_0$.
\end{theorem}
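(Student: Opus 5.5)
The plan is to substitute the first-order forms of Proposition~\ref{prop:se_first_order_coeffs} and Definition~\ref{def:physical_error_rate} into the definition~\eqref{eq:se} of $\mathrm{SE}(m)$, expand the resulting ratio, and then compare consecutive values of the leading term.

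\textbf{Step 1 (positivity of the denominator).} First I will check that $\alpha+\beta m>0$ for every $m\ge1$. If $\beta>0$ this is immediate from $\alpha\ge0$. If $\beta=0$, the hypothesis $\alpha+\beta\gamma>0$ forces $\alpha>0$. Since $p_{\mathrm{rej}}^{(m)}=\varepsilon(\alpha+\beta m)+O(m^2\varepsilon^2)$, we also get $p_{\mathrm{rej}}^{(m)}>0$ for $\varepsilon$ small enough at fixed $m$. Hence $\mathrm{SE}(m)$ is well defined in the regime considered.

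\textbf{Step 2 (the expansion).} From Definition~\ref{def:physical_error_rate} and~\eqref{eq:se_ell_linear}, the numerator is
\[
\varepsilon_p^{(m)}-\varepsilon_\ell^{(m)}=\varepsilon\bigl(m-\gamma+O(m^2\varepsilon)\bigr).
\]
From~\eqref{eq:se_prej_affine}, the denominator is
\[
p_{\mathrm{rej}}^{(m)}=\varepsilon\bigl(\alpha+\beta m+O(m^2\varepsilon)\bigr).
\]
After cancelling $\varepsilon$, write $D=\alpha+\beta m$ and use
\[
\frac{N+\delta_1}{D+\delta_2}-\frac{N}{D}=\frac{\delta_1}{D+\delta_2}-\frac{N\delta_2}{D(D+\delta_2)}.
\]
Here $\delta_i=O(m^2\varepsilon)$, $|N|=|m-\gamma|=O(m)$, and $D+\delta_2\ge D/2$ once $\varepsilon$ is small relative to $D/m^2$. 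This bounds the error by
\[
O\!\left(\frac{m^2\varepsilon}{D}\right)+O\!\left(\frac{m^3\varepsilon}{D^2}\right),
\]
which is exactly the stated correction $O\bigl(\varepsilon\tfrac{m^2}{D}(1+\tfrac{m}{D})\bigr)$. When $D=\Theta(m)$ this correction is $O(m\varepsilon)$, which gives the remark about failure of the approximation at large $m$.

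\textbf{Step 3 (monotonicity of the leading term).} Treating $m$ as real, the derivative of $\mathrm{SE}_0$ is
\[
\frac{d}{dm}\,\frac{m-\gamma}{\alpha+\beta m}=\frac{(\alpha+\beta m)-\beta(m-\gamma)}{(\alpha+\beta m)^2}=\frac{\alpha+\beta\gamma}{(\alpha+\beta m)^2}>0.
\]
For integers this becomes the exact identity
\[
\mathrm{SE}_0(m+1)-\mathrm{SE}_0(m)=\frac{\alpha+\beta\gamma}{(\alpha+\beta m)(\alpha+\beta(m+1))}>0 .
\]

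\textbf{Step 4 (exact monotonicity for small $\varepsilon$).} Fix $m_{\max}$ and let $\delta$ be the minimum of the gaps above over $1\le m<m_{\max}$; this is a finite minimum of positive numbers, so $\delta>0$. By Step 2, for each $m\le m_{\max}$ there is a threshold $\varepsilon_m$ below which $|\mathrm{SE}(m)-\mathrm{SE}_0(m)|<\delta/2$. Take $\varepsilon_0=\min_{m\le m_{\max}}\varepsilon_m$. Then for $0<\varepsilon<\varepsilon_0$,
\[
\mathrm{SE}(m+1)-\mathrm{SE}(m)>\delta-\delta/2-\delta/2=0 \qquad (1\le m<m_{\max}).
\]

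\textbf{Main obstacle.} The delicate point is Step 2: keeping the implicit constants in the $O(m^2\varepsilon^2)$ terms of Proposition~\ref{prop:se_first_order_coeffs} explicit enough to justify the $m$-dependence of the stated correction. The constants must be uniform in $m$, which traces back to the $O(m)$ fault-location count noted in the bookkeeping remark. I would take this uniformity as given by the earlier lemmas. In any case, Step 4 needs only finitely many $m$, so it goes through even without uniformity.
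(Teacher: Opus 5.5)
Your proposal is correct and follows the paper's proof essentially step for step. Both arguments get positivity of $\alpha+\beta m$ from the hypothesis, cancel $\varepsilon$ and expand the ratio to first order to obtain the $O(\varepsilon m^2/D+\varepsilon m^3/D^2)$ remainder, use the same forward-difference identity for $\mathrm{SE}_0$, and take the minimum of finitely many noise thresholds; your explicit difference identity and the $\beta>0$ versus $\beta=0$ case split only make explicit what the paper leaves implicit.
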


\begin{proof}[Proof sketch]
By hypothesis $\alpha+\beta\gamma>0$, and since $\alpha,\beta,\gamma\ge 0$ this gives $\alpha+\beta m>0$ for all $m\ge 1$. Combine $\varepsilon_p^{(m)}=m\varepsilon+O(m^2\varepsilon^2)$ with
\eqref{eq:se_prej_affine}--\eqref{eq:se_ell_linear} and cancel a common factor of $\varepsilon$ in
\eqref{eq:se}. A first-order ratio expansion yields the displayed remainder; see
Appendix~\ref{app:se_first_order_remainder}. For monotonicity, compute the forward difference:
\[
\mathrm{SE}_0(m+1)-\mathrm{SE}_0(m)
=
\frac{m+1-\gamma}{\alpha+\beta(m+1)}-\frac{m-\gamma}{\alpha+\beta m}
=
\frac{\alpha+\beta\gamma}{(\alpha+\beta m)\,(\alpha+\beta(m+1))}>0.
\]
For each fixed $m$, the exact forward difference converges to the strictly
positive leading-order difference above. Taking the minimum of the finitely
many corresponding noise thresholds proves the final claim.
\end{proof}
\begin{remark}[Validity of first-order approximation]
The remainder term in~\eqref{eq:se_first_order} comes from multi-fault events within an $m$-unit.
Under circuit-level Pauli-stochastic noise, an $m$-unit contains $O(m)$ fault locations, so
the probability of two or more faults is $O(m^2\varepsilon^2)$, yielding the correction in
\eqref{eq:se_first_order}. The explicit remainder quantifies the accuracy
of the leading-order approximation; when $\beta>0$, its stated scale is
$O(m\varepsilon)$. Accuracy of the efficiency value should be distinguished
from resolution of the smaller differences between neighboring intervals.
Small per-unit rejection probability,
\[
p_{\mathrm{rej}}^{(m)}=\varepsilon(\alpha+\beta m)+O(m^2\varepsilon^2)\ll 1,
\]
keeps the sampling overhead manageable, since the raw-shot overhead per
accepted $m$-unit scales as $1/(1-p_{\mathrm{rej}}^{(m)})$.
\end{remark}

\begin{remark}[Interpretation of First-Order Shot Efficiency]
    The numerator $m-\gamma$ compares the physical fault-probability coefficient $m$ with the accepted logical-label transition coefficient $\gamma$, while the denominator $\alpha+\beta m$ captures the leading-order rejection cost. This illustrates a key benefit of the transition-matrix framework: the ability to write complex quantities in terms of interpretable stochastic parameters. 
\end{remark}

\section{Symmetry-Based Model Reduction}
\label{sec:symmetry_model_reduction}

The reduced QED and clock-cycle kernels developed above act on the $(2^n+1)$-state space
$\overline{\mathcal{Q}}=\mathcal{B}_{\mathcal{C}}\cup\{R\}$ and are \emph{exact} under circuit-level
stochastic Pauli noise. For larger codes, or when a more compact representation is desired (e.g.\ for
analytical or experimental studies), we would like to further coarse-grain $\overline{\mathcal{Q}}$
\emph{without changing} the predicted accepted-shot dynamics for symmetry-compatible observables.

A standard tool is Markov-chain lumping: if a partition of states is (strongly) lumpable for
a kernel $M$, then the coarse population on partition blocks evolves autonomously under an induced
quotient kernel. One source of exact lumpings comes from \emph{symmetries}:
if a group acts by relabelings that leave $M$ unchanged, then the orbit partition yields an exact
quotient chain~\cite{kemeny1976finite}. This reduces an exact $(2^n+1)$-state
model down to a model whose dimension is the number of group orbits (Examples~\ref{ex:422_quotient}
and Appendix~\ref{app:steane_orbit_lumping}). However, the existence of such symmetries depends on both the code being modeled and the noise on the circuits.

One natural source of symmetries in stabilizer codes is wire-permutation symmetries of the data qubits. However, in general, the symmetries of the code being modeled do
\emph{not} automatically appear in the induced transition matrix: they persist only if the noise
model does not distinguish wires. In this section we therefore focus on a simple sufficient noise
structure (Assumption~\ref{assump:ancilla_clean_checks}) under which all implemented wire-permutation
symmetries of the code are preserved. Under these conditions, the transition matrix reduces exactly to the orbits of the implemented symmetry group $G$.

\begin{definition}[Permutation symmetries of a transition matrix]
\label{def:matrix_symmetry}
Let $\Omega$ be a finite state set and let $M$ be a nonnegative matrix indexed by $\Omega$.
For a permutation $\pi\in S_\Omega$, let $P_\pi$ be its permutation matrix.
We say $M$ is \emph{$\pi$-invariant} if
\begin{equation}
M = P_\pi\, M\, P_\pi^{-1}.
\label{eq:matrix_conjugation_symmetry}
\end{equation}
A subgroup $G\le S_\Omega$ is a \emph{symmetry group} of $M$ if \eqref{eq:matrix_conjugation_symmetry}
holds for all $\pi\in G$.
\end{definition}

For a symmetry group $G$, states in the same $G$-orbit have identical coarse transition behavior into every orbit, so the orbit partition yields an exact quotient kernel obtained by summing
transition mass into each orbit (orbit-lumping; see \cite{kemeny1976finite} and
Appendix~\ref{app:symmetry_reduction_details}).

\begin{assumption}[Ancilla-clean checks and homogeneous i.i.d.\ Pauli noise]
\label{assump:ancilla_clean_checks}
For the symmetry-based reductions in this section, we make two additional simplifying assumptions:

\begin{enumerate}
\item \textbf{Homogeneous i.i.d.\ Pauli noise on all data qubits.}
At every fault location on any data qubit, the inserted Pauli is drawn
independently from the \emph{same} single-qubit stochastic Pauli distribution.

\item \textbf{Ancilla-clean check noise.}
The stochastic Pauli faults used to construct the unreduced QED-check matrix $T'_\varepsilon$
act only on \emph{data} qubits (i.e., no ancilla faults during the syndrome-extraction circuit).
\end{enumerate}
\end{assumption}

Throughout this section, an \emph{implemented wire-permutation symmetry}
preserves the data/ancilla partition and admits a relabeling of fault
locations that maps each faulted circuit to its wire-conjugated counterpart,
as in Proposition~\ref{prop:fault_path_covariance}.
Thus the symmetry concerns the fault-location-resolved implementation,
not only the ideal composite operation.

\begin{lemma}[Homogeneous i.i.d.\ Pauli noise preserves wire-permutation symmetries]
\label{lem:iid_preserves_symmetry}
Let $\pi$ be a permutation of the physical qubits (data and ancillas) and let $U_\pi$ be the
corresponding permutation unitary. Suppose $\pi$ is an implemented
wire-permutation symmetry of the circuit component in the sense above.
Under Assumption~\ref{assump:ancilla_clean_checks}, the corresponding \emph{noisy} channel
also has the same symmetry:
\[
\mathcal{E}(U_\pi\rho U_\pi^\dagger)=U_\pi\,\mathcal{E}(\rho)\,U_\pi^\dagger\qquad\forall\rho.
\]
\end{lemma}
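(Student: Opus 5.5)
The plan is to expand the noisy channel as a fault-path sum and show that conjugation by $U_\pi$ just relabels the terms of that sum, with probabilities unchanged. Write the noisy circuit component in fault-path form,
\[
\mathcal{E}(\rho)=\sum_{\sigma}\Pr(\sigma)\,U_\sigma\rho U_\sigma^\dagger,
\]
where $U_\sigma$ is the ideal circuit with the Paulis $P_\ell$ of $\sigma=(P_\ell)_{\ell\in\mathcal{L}}$ inserted at their locations. This is the same decomposition used in Lemma~\ref{lem:syn_b_comm}, but kept at the level of the gate sequence rather than collapsed to $P(\sigma)U_0$. Under Assumption~\ref{assump:ancilla_clean_checks}, only data-qubit locations carry nontrivial faults. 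Moreover, $\Pr(\sigma)=\prod_\ell q(P_\ell)$ for a single fixed single-qubit distribution $q$, factorized qubit-wise when $r_\ell>1$. Every ancilla location deterministically carries the identity.

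The key step uses the definition of an implemented wire-permutation symmetry. It supplies a bijection $\tau$ on fault locations such that, for every fault pattern $\sigma$,
\[
U_\pi U_\sigma U_\pi^\dagger = U_{\pi\cdot\sigma}, \qquad (\pi\cdot\sigma)_{\tau(\ell)} := \pi P_\ell \pi^{-1}.
\]
Here $\pi P_\ell\pi^{-1}$ is the same Pauli letter moved to the permuted wires. Since $\pi$ preserves the data/ancilla partition, $\tau$ sends data locations to data locations and ancilla locations to ancilla locations. Hence $\pi\cdot\sigma$ is again an ancilla-clean pattern. Homogeneity implies $q$ does not depend on the wire, so $\Pr(\pi\cdot\sigma)=\prod_\ell q(P_\ell)=\Pr(\sigma)$. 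The map $\sigma\mapsto\pi\cdot\sigma$ is a bijection on the set of patterns, with inverse given by $\pi^{-1}$ and $\tau^{-1}$.

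The conclusion then follows by reindexing the sum:
\[
U_\pi\,\mathcal{E}(\rho)\,U_\pi^\dagger=\sum_\sigma \Pr(\sigma)\,U_{\pi\cdot\sigma}\,(U_\pi\rho U_\pi^\dagger)\,U_{\pi\cdot\sigma}^\dagger=\sum_{\sigma'}\Pr(\sigma')\,U_{\sigma'}(U_\pi\rho U_\pi^\dagger)U_{\sigma'}^\dagger=\mathcal{E}(U_\pi\rho U_\pi^\dagger).
\]
The first equality uses $U_\pi^\dagger U_\pi=I$ to insert $U_\pi^\dagger U_\pi$ around $\rho$. The second substitutes $\sigma'=\pi\cdot\sigma$ and uses $\Pr(\pi\cdot\sigma)=\Pr(\sigma)$.

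The main obstacle is the covariance step $U_\pi U_\sigma U_\pi^\dagger=U_{\pi\cdot\sigma}$ together with the invariance of $\Pr$. This is exactly where the hypotheses enter. Covariance of the ideal composite operation $U_0$ alone is not enough, because propagated Paulis $P(\sigma)$ could differ in distribution. That is why the proof must work location by location, relying on the fault-location relabeling built into the notion of implemented symmetry. Homogeneity and ancilla cleanliness then guarantee that the relabeled pattern has the same probability. If ancilla faults with a different distribution were allowed, the argument would still go through, since $\pi$ preserves the partition. I would note this, but stay within the stated assumption.
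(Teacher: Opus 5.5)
Your proposal is correct and is essentially the paper's proof. The paper also relabels fault patterns by the wire permutation, uses homogeneity and ancilla-cleanliness to get $\Pr(\pi\cdot\sigma)=\Pr(\sigma)$, takes $U_\pi U_\sigma U_\pi^\dagger=U_{\pi\cdot\sigma}$ (needed only up to global phase) from the implemented-symmetry hypothesis, and concludes via Proposition~\ref{prop:fault_path_covariance}, whose proof is exactly the reindexing of the sum that you carry out inline.
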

\begin{proof}
A wire permutation $\pi$ induces a bijection on Pauli fault patterns by relabeling the qubit indices.
Under homogeneous i.i.d.\ noise (Assumption~\ref{assump:ancilla_clean_checks}), this relabeling preserves
fault probabilities. By the implemented-symmetry hypothesis, the relabeled
fault pattern implements the conjugated faulty circuit. The claim is a direct application of the fault-path
relabeling criterion (Proposition~\ref{prop:fault_path_covariance} in Appendix~\ref{app:symmetry_reduction_details}).
\end{proof}

Combining Lemma~\ref{lem:iid_preserves_symmetry} with the facts that
(i) monomial wire-permutation unitaries induce permutation symmetries of $T_{\mathcal{B}}(\cdot)$ under
covariance, and (ii) those symmetries descend through the QED reduction $\mathcal{R}(\cdot)$,
we conclude that any implemented wire-permutation symmetry group $G$ acting monomially on $\mathcal{B}$ yields an exact
orbit-quotient reduction of $T_\varepsilon$, $G_\eta$, and hence each clock-cycle kernel
$M^{(m)}_{\varepsilon,\eta}=T_\varepsilon(G_\eta)^m$. A formal proof is given in Proposition~\ref{prop:ancilla_clean_symmetry} in Appendix~\ref{app:symmetry_reduction_details}.

\begin{example}[Orbit Quotients of the 4-Qubit Code]
\label{ex:422_quotient}
For the $[[4,2,2]]$ code~\cite{vaidman1996prevention,grassl1997erasure}, the accepted boundary space $\mathcal{B}_{\mathcal{C}}$ has
$2^{n-k}\cdot 2^k = 4\cdot 4=16$ basis labels $(s,\ell)$, and QED reduction adjoins the absorbing
rejection state $R$, giving a $17\times 17$ kernel $M_{\varepsilon,\eta}$.

Under an implemented \emph{data-wire permutation symmetry} $G=S_4$, together with the homogeneous
noise assumptions of Assumption~\ref{assump:ancilla_clean_checks}, the resulting reduced kernel is
$S_4$-invariant and hence admits an \emph{exact} orbit quotient (Appendix Proposition~\ref{prop:ancilla_clean_symmetry}).
These 17 states split into only \emph{seven} $S_4$-orbits: six orbits inside the accepted space plus the
singleton orbit $\{R\}$. Concretely, the accepted space decomposes into the following physically
interpretable macrostates (explicit representatives and a proof are in Appendix~\ref{app:422_example}):

\begin{center}
\begin{tabular}{c c c}
\hline
Orbit & Interpretation & Size \\
\hline
$\mathcal{O}_1$ & target logical basis state & $1$ \\
$\mathcal{O}_2$ & the other three logical basis states & $3$ \\
$\mathcal{O}_3$ & a $Z$-syndrome analogue of $\mathcal{O}_1$ & $1$ \\
$\mathcal{O}_4$ & a $Z$-syndrome analogue of $\mathcal{O}_2$ & $3$ \\
$\mathcal{O}_5$ & the entire $X$-syndrome sector & $4$ \\
$\mathcal{O}_6$ & the entire $Y$-syndrome sector & $4$ \\
$\mathcal{O}_7$ & rejection $R$ (absorbing) & $1$ \\
\hline
\end{tabular}
\end{center}

The resulting $7\times 7$ quotient chain is small enough to inspect directly, yet it still separates
(i) logical confusion ($\mathcal{O}_1\leftrightarrow\mathcal{O}_2$), (ii) accepted leakage sectors
($\mathcal{O}_3$--$\mathcal{O}_6$), and (iii) rejection ($\mathcal{O}_7$). In particular, one can read
off the dominant accepted-to-rejected flow rates and the pathways by which accepted leakage is created
and subsequently rejected, making the model substantially more interpretable than the full 17-state
kernel while remaining exact under the symmetry assumptions.
\end{example}

\begin{remark}[Other symmetric codes]
The $[[4,2,2]]$ code is just a minimal example. Many codes have nontrivial data-wire automorphism
groups and are similarly strong candidates for orbit lumping under homogeneous noise. For example, the
$[[7,1,3]]$ Steane code has a size-$168$ data-wire permutation symmetry group; under the same symmetry
assumptions, orbit lumping reduces the exact $(2^7+1)=129$-state reduced model down to an exact
$11$-state quotient chain (Appendix~\ref{app:steane_orbit_lumping}).
\end{remark}

\begin{remark}
    When the noise channel is strongly biased, some orbits of the reduced transition matrix may be unreachable, enabling further reduction. This is described further in \ref{app:reachability}.
\end{remark}

\begin{remark}
The noise structure in Assumption~\ref{assump:ancilla_clean_checks} is not expected on real hardware.
The point of the homogeneous model here is that it yields an \emph{exact} symmetry and hence an exact
orbit quotient that is small and interpretable. When symmetry holds only approximately,
Appendix~\ref{app:approx_lumping} gives quantitative error bounds for the resulting coarse model.
\end{remark}

\section{Evaluation}
\label{sec:evaluation}

We evaluate the transition-matrix framework along three axes: (i) exactness of the transition matrix under circuit-level Pauli noise (Lemma~\ref{lem:syn_b_comm} and Propositions~\ref{prop:nonact_implies_TM} and~\ref{prop:qed_reduction_exact}); (ii) the first-cycle transient in ASC attributed to QED-check noise (Proposition~\ref{prop:ASC_expansion} and Corollary~\ref{cor:per_cycle_drift}); and (iii) the first-order QED-efficiency formula and its monotonicity in the check interval~$m$ (Theorem~\ref{thm:se_first_order}). Figures~\ref{fig:fig1_exactness}--\ref{fig:fig3_interval} summarize the results.

\subsection{Experimental Setup}
\label{sec:eval_setup}

We use the $[[4,2,2]]$~\cite{vaidman1996prevention,grassl1997erasure} and cyclic perfect $[[5,1,3]]$~\cite{laflamme1996perfect} codes, initialized in the all-zero logical-basis state $e_\psi=e_{0^k}$. The stabilizers are measured in the orders $XXXX$, $ZZZZ$ and $XZZXI$, $IXZZX$, $XIXZZ$, $ZXIXZ$, respectively. Each measurement uses an ideally reset ancilla, with data controls visited in ascending wire order: CNOT for $Z$ factors and $H$--CNOT--$H$ for $X$ factors, with the $H$ gates on the data qubit and the ancilla as the CNOT target.

We adopt a circuit-level \emph{depolarizing} stochastic Pauli noise model: at each single-qubit fault location one of $\{X,Y,Z\}$ is applied with probability $\varepsilon/3$, and at each two-qubit fault location one of the fifteen nontrivial two-qubit Paulis is applied with probability $\varepsilon/15$. Faults are inserted after every Clifford gate in the syndrome-extraction circuit. For gate layers, the ideal transversal Clifford action is removed in the toggling frame of Section~\ref{sec:noise_ass}, so $G_0=I$ on $\mathcal{B}_{\mathcal{C}}$. The repeated-layer model uses stationary effective gate-noise and check kernels as assumed there: $(G_\eta)^m$ retains the noise budget of $m$ logical gate layers without explicitly applying their ideal logical action. Each effective gate-noise layer applies single-qubit depolarization of strength $\eta$ independently to every data qubit. Each syndrome readout independently flips with probability $\varepsilon$.

Figure~\ref{fig:fig1_exactness} compares matrix predictions against Stim~\cite{gidney2021stim} Monte Carlo at $10^6$ shots per code; Figures~\ref{fig:fig2_nonmarkov}--\ref{fig:fig3_interval} are evaluated numerically from the exact accepted-block matrix $A_{\varepsilon,\eta}$. The combined-noise transient series and the efficiency experiments use $\varepsilon=\eta$ as in Assumption~\ref{ass:unified_noise}. In the check-only ablation, $\eta=0$; in the gate-only ablation, $\varepsilon=0$. The horizontal noise-strength parameter in Figure~\ref{fig:fig2_nonmarkov}(b) is applied to the active noise component or components.

These experiments use the full accepted-block construction with noisy ancillas.

\begin{figure}[tbp]
  \centering
  \includegraphics[width=0.9\linewidth]{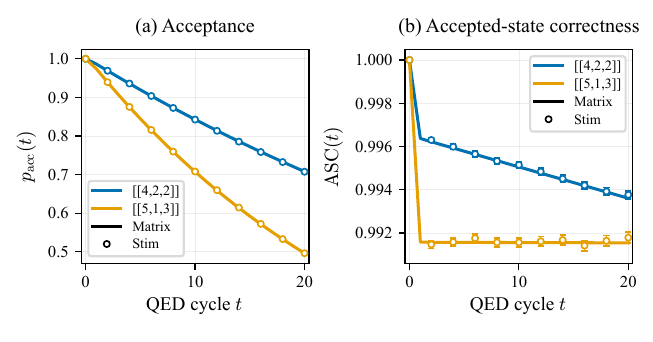}
  \caption{\textbf{Transition-matrix validation.} Lines join exact accepted-block matrix predictions at integer QED cycles; markers show $10^6$-shot Stim estimates (pointwise $95\%$ Wilson confidence intervals) for the $[[4,2,2]]$ (blue) and $[[5,1,3]]$ (orange) codes at $\varepsilon=10^{-3}$ and $\eta=0$. Panel~(a): acceptance probability $p_{\mathrm{acc}}(t)$. Panel~(b): accepted-state correctness $\mathrm{ASC}(t)$. Matrix predictions use Pauli fault averaging and the reduction $\mathcal{R}$ of Section~\ref{sec:qed_lumping}, independently of the simulator. Agreement over $20$ cycles supports the implementation of the population-level construction under the specified circuit and readout noise.}
  \label{fig:fig1_exactness}
\end{figure}

\begin{figure}[tbp]
  \centering
  \includegraphics[width=0.9\linewidth]{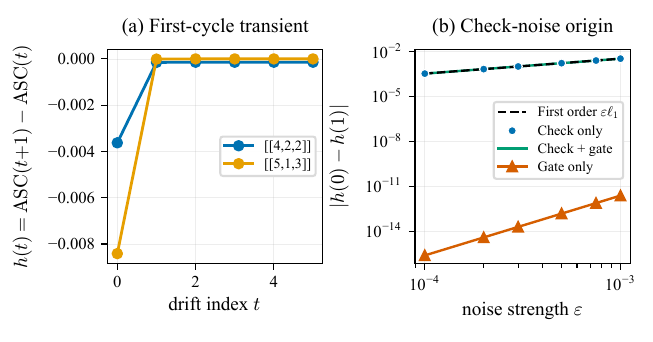}
  \caption{\textbf{Check-noise origin of the first-cycle transient.} Panel~(a): per-cycle ASC drift $h(t)=\mathrm{ASC}(t{+}1)-\mathrm{ASC}(t)$ at $\varepsilon=10^{-3}$ for $[[4,2,2]]$ (blue) and $[[5,1,3]]$ (orange) under check-only noise, showing the characteristic non-zero $h(0)$ followed by an approximately constant drift (Corollary~\ref{cor:per_cycle_drift}). Panel~(b): transient magnitude $|h(0)-h(1)|$ versus $\varepsilon$ for $[[4,2,2]]$, decomposed into check-only, gate-only, and both-noise series. The check-only and combined-noise series nearly overlap the first-order prediction $\varepsilon\,\ell_1$ (black dashed, $\ell_1=3.47$). Panel~(b) uses $60$-digit arithmetic.}
  \label{fig:fig2_nonmarkov}
\end{figure}

\subsection{Results}
\label{sec:eval_results}

\textbf{(i) Transition-matrix validation (Fig.~\ref{fig:fig1_exactness}).}
The transition-matrix predictions $p_{\mathrm{acc}}(t)=\|A^t e_\psi\|_1$ and $\mathrm{ASC}(t)=e_\psi^{T}A^t e_\psi/\|A^t e_\psi\|_1$ agree with $10^6$-shot Stim statistics over $t\in\{0,\ldots,20\}$ for both codes. The largest absolute trajectory discrepancies are approximately $7.9\!\times\!10^{-4}$ for acceptance and $2.4\!\times\!10^{-4}$ for ASC, consistent with finite-shot fluctuations. At $\varepsilon=10^{-3}$ and $t=20$, the matrix predicts $(p_{\mathrm{acc}},\mathrm{ASC})=(0.7078,0.9936)$ for $[[4,2,2]]$ and $(0.4961,0.9915)$ for $[[5,1,3]]$.

\textbf{(ii) Check-noise origin of the first-cycle transient (Fig.~\ref{fig:fig2_nonmarkov}).}
Figure~\ref{fig:fig2_nonmarkov} illustrates Corollary~\ref{cor:per_cycle_drift}: under check-only and combined noise, $|h(0)-h(1)|$ tracks the first-order prediction $\varepsilon\,\ell_1$ with $\ell=\varepsilon\ell_1+O(\varepsilon^2)$ and $\ell_1=3.47$ extracted from $T_1$, while the gate-only series is many orders of magnitude below the check-noise signal. This is consistent with $(T_0\,\Delta_\eta^{(G)})_{XL}=0$, which leaves check noise as the only first-order contributor to accepted leakage injection. The resulting transient illustrates the obstruction to a stationary logical-only accepted-map description specified in Section~\ref{sec:nonmarkov}; the plot does not by itself establish other notions of quantum non-Markovianity. The extracted coefficients are $(\alpha,\beta,\gamma,\ell_1)=(17.47, 4.00, 0.133, 3.47)$ for $[[4,2,2]]$ and $(35.47, 5.00, 0, 8.40)$ for $[[5,1,3]]$. Here $\alpha$ includes delayed rejection of accepted leakage, and $\gamma$ counts first-order transitions to other logical basis labels for the chosen input, as in Remark~\ref{rem:abc}.

\begin{figure}[tbp]
  \centering
  \includegraphics[width=0.9\linewidth]{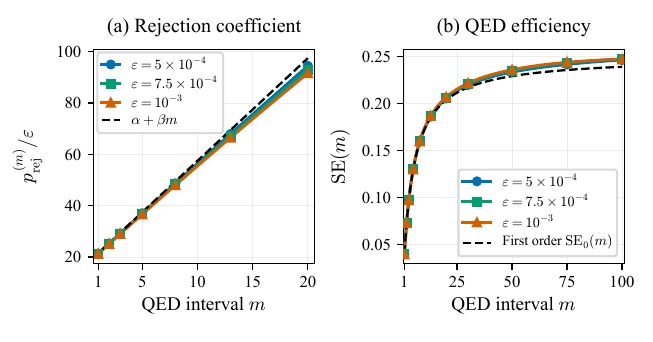}
  \caption{\textbf{First-order QED efficiency on $[[4,2,2]]$.} Panel~(a): rejection coefficient $p_{\mathrm{rej}}^{(m)}/\varepsilon$ versus QED interval $m$ for three noise strengths; the affine prediction $\alpha+\beta m$ (black dashed) approximates the matrix-computed values over $m\in[1,20]$. Panel~(b): QED efficiency $\mathrm{SE}(m)$ versus $m$; matrix-computed values increase over the sampled intervals and track the first-order prediction $\mathrm{SE}_0(m)=(m-\gamma)/(\alpha+\beta m)$ (black dashed) at small $m$. Lines join values at integer intervals.}
  \label{fig:fig3_interval}
\end{figure}

\textbf{(iii) First-order QED efficiency (Fig.~\ref{fig:fig3_interval}).}
Figure~\ref{fig:fig3_interval} evaluates Proposition~\ref{prop:se_first_order_coeffs} and Theorem~\ref{thm:se_first_order} on the primary code $[[4,2,2]]$. Panel~(a) plots $p_{\mathrm{rej}}^{(m)}/\varepsilon$ against $m$ for $\varepsilon\in\{5\!\times\!10^{-4},\, 7.5\!\times\!10^{-4},\, 10^{-3}\}$; the three matrix-computed series approach the affine prediction $\alpha+\beta m$ as the noise strength decreases. Panel~(b) plots $\mathrm{SE}(m)$ versus $m$ over $m\in[1,100]$ together with the first-order prediction $\mathrm{SE}_0(m)=(m-\gamma)/(\alpha+\beta m)$; the computed values increase over the sampled intervals and track $\mathrm{SE}_0$ closely at small $m$, with visible departures at larger intervals consistent with the growing remainder bound in Theorem~\ref{thm:se_first_order}. On $[[4,2,2]]$ at $\varepsilon=10^{-3}$, $\mathrm{SE}(m)$ rises from approximately $0.040$ at $m=1$ to approximately $0.247$ at $m=100$. This increase over the sampled intervals is an observation for the displayed noise settings; the theorem establishes leading-order monotonicity and exact monotonicity on any fixed finite interval range at sufficiently small noise, rather than a uniform guarantee for arbitrary $m$. For the $[[5,1,3]]$ extraction circuit used here, no single check fault produces an accepted transition to a different logical basis label for the chosen input, giving $\gamma=0$. Separately, the $c_{\mathrm{prop}}<d$ hypothesis of Proposition~\ref{prop:se_first_order_coeffs} explains why single gate-layer faults do not contribute first-order off-diagonal logical injections. These constants give an interpretable closed-form surrogate for comparing QED intervals and estimating sampling overhead. The efficiency plotted here uses the physical fault-probability baseline and operational logical error rate of Section~\ref{sec:qed_efficiency}; its interpretation as a PEC shot-cost advantage requires the additional effective-channel identification stated in Appendix~\ref{app:se_efficiency_motivation}.

\section{Discussion \label{sec:discussion}}

\subsection{Constructing the Transition Matrix in Practice}
\label{sec:discussion_practical_tm}
The propagation recipe of Section~\ref{sec:construct_noisy_TM} is implementable whenever gate and check noise are Pauli-stochastic (exact for Clifford circuits with Pauli faults) or well-approximated by Pauli channels in the rotating frame. In that regime, the same approach generates the QED check matrix. Alternatively, the matrix entries can also be estimated empirically from cycle-boundary transition statistics on syndrome/logical labels, with rejection lumped as in Section~\ref{sec:qed_lumping}. This can be particularly accurate for memory experiments, where $\eta=0$ and so gate noise assumptions need not apply. When fault enumeration is too expensive, the low-weight truncation introduced in Section~\ref{sec:construct_noisy_TM} retains $z$th-order accuracy with $O(F^z)$ terms.

\subsection{Extension to Quantum Error Correction}
\label{sec:discussion_qec}

Full QEC differs from QED only in the classical control layer; rather than terminating on nontrivial outcomes, one measures syndromes and applies decoder-dependent corrections (or Pauli-frame updates). At the population level, this decoder step constitutes a conditional stochastic process: for each observed syndrome (and possibly finite-memory summary of past records), the decoder selects a Pauli correction acting as a permutation on $\mathcal{Q}_{\mathcal{C}}$, composing transparently with our cycle matrices. Concretely, augmenting the Markov state with decoder memory (trivial for memoryless decoders; finite for windowed decoders~\cite{skoric2023parallel}) yields a closed Markov process on (data-basis label $\times$ decoder-memory), with one-step kernel obtained by summing syndrome-conditioned branches followed by corresponding correction permutations.

QEC codes are susceptible to the same undetected-leakage effects arising from imperfect syndrome extraction circuits. Therefore, the remark of Section~\ref{sec:nonmarkov} also applies here: QEC induces an exact Markov process on an expanded state
(including syndrome/leakage sectors and any decoder memory), while the logical subsystem is obtained by coarse-graining.
Thus, even under i.i.d.\ circuit-level noise, the effective logical dynamics undergo similar non-Markovian effects, concurring with \cite{ziyad2025emergent}.

\section*{Acknowledgments}

Rohan S. Kumar is supported by the National Science Foundation Graduate Research Fellowship Program (NSF GRFP).

\bibliographystyle{plain}
\bibliography{refs}

\appendix

\section{Extended Discussion of Related Work \label{app:related_works}}

\paragraph{Comparison to Ziyad~\emph{et al.}~\cite{ziyad2025emergent}.}
Ziyad~\emph{et al.} define an emergent-logical-non-Markovianity condition and, in a ``stochastic picture,'' model syndrome-extraction dynamics for Pauli-noise examples as a Markov chain specified by a transition matrix.
Our transition-matrix formalism is compatible with this viewpoint, but we focus on (i) an explicit code-adapted label space indexed by syndrome and logical-eigenvalue labels and (ii) an exact absorbing-state reduction that models multi-cycle \emph{post-selected} QED by iterating a single reduced kernel.
Within this setting we derive closed-form perturbative expressions (e.g.\ accepted leakage injection and the first-cycle transient of accepted-state correctness) that isolate which circuit components contribute at leading order.

\paragraph{Comparison to Kwiatkowski~\emph{et al.}~\cite{kwiatkowski2025constructing}.}
Kwiatkowski~\emph{et al.} work in a Clifford frame with explicit syndrome/outcome registers and prove that, for sufficiently low Pauli-stochastic noise, consecutive QEC cycles admit an approximate \emph{logical-only} Markovian model whose deviations are exponentially suppressed in the number of cycles.
In contrast, we keep an exact finite-state kernel on an expanded space (logical $\cup$ accepted leakage $\cup$ rejection), which is designed to retain finite-time transients and post-selection effects in QED; logical dynamics is then obtained by coarse-graining and conditioning.

\paragraph{Frequency optimization.}
Abu-Nada~\emph{et al.}~\cite{abunada2015optfreq} study how often to apply fault-tolerant QEC operations, noting that correcting after every gate may be suboptimal when the correction procedure is noisy.
Our results analyze an analogous tradeoff for \emph{post-selected QED} using the QED-efficiency metric: we derive the leading-order benefit of longer intervals and track the growth of higher-order corrections.

\section{Resource-destroying-map terminology and diagonal-state persistence}
\label{app:rdm_terminology}

Liu, Hu, and Lloyd introduced resource-destroying maps (RDMs) to express ``resource-free'' properties across resource theories~\cite{liu2017resource}. Given an RDM $\lambda$ (idempotent, fixing precisely free states), they define:
\begin{itemize}
\item \emph{Resource nongenerating}: $\mathcal{E}\circ\lambda=\lambda\circ\mathcal{E}\circ\lambda$
\item \emph{Resource nonactivating}: $\lambda\circ\mathcal{E}=\lambda\circ\mathcal{E}\circ\lambda$
\item \emph{Resource commuting}: $\lambda\circ\mathcal{E}=\mathcal{E}\circ\lambda$
\end{itemize}
In coherence theory, the canonical RDM is complete dephasing~\cite{liu2017resource}. When $\lambda$ is linear, these classes are closed under convex combinations, so averaging over fault patterns preserves the resource-free property.

We adopt the RDM lens with $\lambda$ as dephasing in a basis $B=\{\ket{b_i}\}$, enabling crisp criteria for when transition-matrix descriptions are exact. Our contribution is to (i) exhibit an explicit stabilizer-code-specific basis, (ii) prove stabilizer circuits under stochastic Pauli noise satisfy the corresponding conditions, and (iii) leverage this for practical transition matrices including QED post-selection.

\begin{definition}[$B$-nongenerating~\cite{liu2017resource}]
A completely positive, trace non-increasing map $\mathcal{E}$ is \emph{$B$-nongenerating} iff
\begin{equation}
\mathcal{E}\circ\Delta_B = \Delta_B\circ\mathcal{E}\circ\Delta_B.
\label{eq:Bnong_def}
\end{equation}
In the resource-destroying-map terminology of~\cite{liu2017resource}, $\mathcal{E}$ is $B$-commuting
(Eq.~\eqref{eq:Bcomm_def}) iff it is both $B$-nongenerating~\eqref{eq:Bnong_def} and
$B$-nonactivating~\eqref{eq:Bnonact_def}.
\end{definition}

\begin{remark}[Full state capture on $B$-diagonal ensembles]
If $\mathcal{E}$ is $B$-nongenerating and the initial state is $B$-incoherent
(i.e., $\Delta_B(\rho^{(0)})=\rho^{(0)}$), then every iterate remains $B$-incoherent and the full
density operator equals its diagonal:
\[
\rho^{(t)}=\Delta_B(\rho^{(t)})=\sum_i (p_B(\rho^{(t)}))_i\,\Pi_i.
\]
Thus, within the $B$-incoherent sector, the population vector $p_B(\rho^{(t)})$ determines the full
state. Since $B$-commuting implies $B$-nongenerating, this observation applies to all channels
considered in the main text. In this paper we set $B=\mathcal{B}$, the joint code-adapted basis; $B$-commuting is certified by monomial Kraus representations.
\end{remark}

\section{Proofs for population-level closure and monomiality}
\label{app:prelim_proofs}

\subsection{Proof of Proposition~\ref{prop:nonact_implies_TM}}

\begin{proof}
For any operator $X$, $\Tr(\Pi_i X)=\Tr(\Pi_i\Delta_B(X))$ because $\Delta_B$ preserves the diagonal in basis $B$. Using $B$-nonactivation $\Delta_B\circ\mathcal{E}=\Delta_B\circ\mathcal{E}\circ\Delta_B$:
\begin{align*}
\Tr\!\bigl(\Pi_i\,\mathcal{E}(\rho)\bigr)
&= \Tr\!\bigl(\Pi_i\,\Delta_B(\mathcal{E}(\rho))\bigr)
= \Tr\!\bigl(\Pi_i\,\Delta_B(\mathcal{E}(\Delta_B(\rho)))\bigr) \\
&= \Tr\!\Bigl(\Pi_i\,\Delta_B\Bigl(\mathcal{E}\Bigl(\sum_j \Tr(\Pi_j\rho)\,\Pi_j\Bigr)\Bigr)\Bigr)
= \sum_j \Tr(\Pi_j\rho)\,\Tr\!\bigl(\Pi_i\,\mathcal{E}(\Pi_j)\bigr),
\end{align*}
which equals $(T_B(\mathcal{E})p_B(\rho))_i$ by definition~\eqref{eq:TB_def}. The iterated identity~\eqref{eq:TB_exact_iter} follows by induction on $t$.
\end{proof}

\subsection{Proof of Lemma~\ref{lem:monomial_commutes_dephase}}
\label{app:monomial_commutes}

\begin{lemma}[Restatement of Lemma~\ref{lem:monomial_commutes_dephase}]
Let $\mathcal{E}$ be a CP--TNI map with Kraus operators $\{K_\alpha\}_\alpha$ such that each $K_\alpha$ is monomial in basis $B$. Then $\mathcal{E}$ is $B$-commuting:
\[
\Delta_B\circ \mathcal{E} = \mathcal{E}\circ \Delta_B.
\]
In particular, $\mathcal{E}$ is $B$-nonactivating.
\end{lemma}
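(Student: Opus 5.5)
By linearity of both $\Delta_B$ and $\mathcal{E}$, it suffices to check the identity $\Delta_B\circ\mathcal{E}=\mathcal{E}\circ\Delta_B$ on the matrix units $\ket{b_i}\!\bra{b_j}$, which span all operators on $\mathcal{H}$. I will evaluate both sides on each unit, treating the diagonal case $i=j$ and the off-diagonal case $i\neq j$ separately.

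First, fix a Kraus operator $K_\alpha$. Monomiality gives a permutation $\pi_\alpha$ and nonzero scalars $c_{\alpha,i}$ with $K_\alpha\ket{b_i}=c_{\alpha,i}\ket{b_{\pi_\alpha(i)}}$. Hence
\[
K_\alpha\ket{b_i}\!\bra{b_j}K_\alpha^\dagger = c_{\alpha,i}\overline{c_{\alpha,j}}\,\ket{b_{\pi_\alpha(i)}}\!\bra{b_{\pi_\alpha(j)}}.
\]
Because $\pi_\alpha$ is injective, $\pi_\alpha(i)=\pi_\alpha(j)$ holds if and only if $i=j$. This injectivity is the only step with real content, and it is the step I expect to be the crux.

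In the case $i=j$, the input $\Pi_i$ is already diagonal, so $\mathcal{E}(\Delta_B(\Pi_i))=\mathcal{E}(\Pi_i)$. By the display above, $\mathcal{E}(\Pi_i)=\sum_\alpha|c_{\alpha,i}|^2\Pi_{\pi_\alpha(i)}$. This is a sum of diagonal projectors, so it is fixed by $\Delta_B$, and the two sides agree.

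In the case $i\neq j$, the input is annihilated by dephasing, so $\mathcal{E}(\Delta_B(\ket{b_i}\!\bra{b_j}))=0$. On the other side, every term of $\mathcal{E}(\ket{b_i}\!\bra{b_j})$ is proportional to $\ket{b_{\pi_\alpha(i)}}\!\bra{b_{\pi_\alpha(j)}}$ with $\pi_\alpha(i)\neq\pi_\alpha(j)$. Each such term is off-diagonal and is therefore killed by $\Delta_B$, so the sum is killed as well and both sides vanish.

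Linearity then extends the identity from matrix units to every operator, which proves $\mathcal{E}$ is $B$-commuting.

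For the final clause, apply $\Delta_B$ on the right of the commuting identity. Since $\Delta_B$ is idempotent,
\[
\Delta_B\circ\mathcal{E}\circ\Delta_B=\mathcal{E}\circ\Delta_B\circ\Delta_B=\mathcal{E}\circ\Delta_B=\Delta_B\circ\mathcal{E},
\]
which is exactly the $B$-nonactivating condition.

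The argument never uses the CP--TNI property, so that hypothesis is needed only to make $\mathcal{E}$ a legitimate channel.
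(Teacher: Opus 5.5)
Your proof is correct and follows the paper's argument: you reduce to matrix units, split into the cases $i=j$ and $i\neq j$, and use injectivity of $\pi_\alpha$ to show that off-diagonal units stay off-diagonal. Your one-line derivation of the final clause, $\Delta_B\circ\mathcal{E}\circ\Delta_B=\mathcal{E}\circ\Delta_B\circ\Delta_B=\mathcal{E}\circ\Delta_B=\Delta_B\circ\mathcal{E}$ via idempotence of $\Delta_B$, is a cleaner justification of $B$-nonactivation than the paper's informal appeal to resource-destroying-map theory.
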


\begin{proof}
It suffices to verify the commutation relation on matrix units $\ket{b_i}\!\bra{b_j}$, which span the operator space.

Since $K_\alpha$ is monomial in $B$, there exists a permutation $\pi_\alpha$ of basis indices and nonzero scalars $\{c_{\alpha,i}\}$ such that
\[
K_\alpha\ket{b_i} = c_{\alpha,i}\ket{b_{\pi_\alpha(i)}}.
\]
Applying $K_\alpha$ to a matrix unit:
\[
K_\alpha\ket{b_i}\!\bra{b_j}K_\alpha^\dagger = c_{\alpha,i}c_{\alpha,j}^*\,\ket{b_{\pi_\alpha(i)}}\!\bra{b_{\pi_\alpha(j)}}.
\]

We now verify commutativity for the two cases:

\textbf{Case 1: Diagonal elements ($i=j$).}
We have $K_\alpha\ket{b_i}\!\bra{b_i}K_\alpha^\dagger = |c_{\alpha,i}|^2\ket{b_{\pi_\alpha(i)}}\!\bra{b_{\pi_\alpha(i)}}$, which is again diagonal. Thus:
\[
(\mathcal{E}\circ\Delta_B)(\ket{b_i}\!\bra{b_i}) = \mathcal{E}(\ket{b_i}\!\bra{b_i}) = \sum_\alpha |c_{\alpha,i}|^2\ket{b_{\pi_\alpha(i)}}\!\bra{b_{\pi_\alpha(i)}} = (\Delta_B\circ\mathcal{E})(\ket{b_i}\!\bra{b_i}).
\]

\textbf{Case 2: Off-diagonal elements ($i\neq j$).}
Since $\pi_\alpha$ is a bijection, $\pi_\alpha(i)\neq\pi_\alpha(j)$ whenever $i\neq j$. Thus $K_\alpha\ket{b_i}\!\bra{b_j}K_\alpha^\dagger$ is off-diagonal for each $\alpha$, and
\[
\mathcal{E}(\ket{b_i}\!\bra{b_j}) = \sum_\alpha c_{\alpha,i}c_{\alpha,j}^*\ket{b_{\pi_\alpha(i)}}\!\bra{b_{\pi_\alpha(j)}}
\]
is a sum of off-diagonal terms. Therefore $(\Delta_B\circ\mathcal{E})(\ket{b_i}\!\bra{b_j}) = 0 = (\mathcal{E}\circ\Delta_B)(\ket{b_i}\!\bra{b_j})$.

Combining both cases, $\Delta_B\circ\mathcal{E} = \mathcal{E}\circ\Delta_B$ on all matrix units, hence on all operators by linearity.

Finally, $B$-commuting implies both $B$-nongenerating and $B$-nonactivating by the general theory of resource-destroying maps: if $\Delta_B\circ\mathcal{E} = \mathcal{E}\circ\Delta_B$, then applying $\Delta_B$ before $\mathcal{E}$ to a $B$-diagonal state keeps it $B$-diagonal ($B$-nongenerating), and applying $\Delta_B$ after $\mathcal{E}$ is equivalent to applying it before, so off-diagonal coherences cannot activate to affect diagonal populations ($B$-nonactivating).
\end{proof}

\subsection{Proof of Lemma~\ref{lem:paulis_monomial}}
\label{app:paulis_monomial}

\begin{lemma}[Restatement of Lemma~\ref{lem:paulis_monomial}]
Every Pauli $P\in\mathcal{P}_n$ acts as a signed permutation on the data basis $\mathcal{B}_{\mathcal{C}}$. Every Pauli on the $a$ ancillas acts as a signed permutation on the computational basis $\{\ket{y}\}_{y\in\{0,1\}^a}$. Consequently, every Pauli $P\in\mathcal{P}_{n+a}$ is monomial in the joint basis $\mathcal{B}=\{\ket{s,\ell}\otimes\ket{y}\}$.
\end{lemma}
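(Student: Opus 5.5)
The plan is to prove the three claims in order, with the data claim carrying essentially all the content. Fix a Pauli $P\in\mathcal{P}_n$ and a data basis vector $\ket{s,\ell}=E_s\ket{\ell}_L$.

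The first step identifies the output syndrome. Each generator $S_i$ either commutes or anticommutes with $P$; write $S_iP=(-1)^{c_i(P)}PS_i$. Then $S_i P E_s\ket{\ell}_L=(-1)^{c_i(P)+s_i}PE_s\ket{\ell}_L$, so $PE_s\ket{\ell}_L$ lies in the syndrome sector $s':=s\oplus c(P)$. The map $s\mapsto s\oplus c(P)$ is a bijection on $\{0,1\}^{n-k}$. Since $PE_s$ and $E_{s'}$ have the same commutation pattern with every $S_i$, the product $Q:=E_{s'}^\dagger P E_s$ commutes with all of $\mathcal{S}$. Hence $Q$ lies in $N(\mathcal{S})\cap\mathcal{P}_n$, using that the centralizer of $\mathcal{S}$ in the Pauli group equals its normalizer.

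The second step shows that $Q$ permutes the logical basis up to phase. Pick a set of logical representatives generating $N(\mathcal{S})$ modulo $\mathcal{S}$ and phases, consisting of the $\overline{P}_j$ together with conjugate partners $\overline{Q}_j$. These partners satisfy $\overline{Q}_j\overline{P}_i=(-1)^{\delta_{ij}}\overline{P}_i\overline{Q}_j$ and commute with each other modulo $\mathcal{S}$; they can be obtained by standard symplectic Gram--Schmidt, since the $\overline{P}_j$ are independent modulo $\mathcal{S}$. Write $Q=\omega\,\overline{Q}^{x}\overline{P}^{z}S$ with $S\in\mathcal{S}$. On the code space $S$ acts as the identity and $\overline{P}^z$ acts diagonally with phase $\pm1$. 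For $\overline{Q}^{x}$, each $\overline{Q}_j$ commutes with the $\overline{P}_i$ for $i\ne j$ and anticommutes with $\overline{P}_j$. It therefore maps the joint eigenvector $\ket{\ell}_L$ to a joint eigenvector with eigenvalue label $\ell\oplus e_j$. Because the joint eigenspaces of the $k$ independent commuting $\overline{P}_j$ on the $2^k$-dimensional code space are one-dimensional, this image equals $\ket{\ell\oplus e_j}_L$ up to a unit phase. Hence $Q\ket{\ell}_L=\omega'\ket{\ell\oplus x}_L$.

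Combining the two steps gives $P\ket{s,\ell}=E_{s'}Q\ket{\ell}_L=\omega'\ket{s\oplus c(P),\,\ell\oplus x(s)}$. The one remaining point, which I expect to be the main obstacle, is that $(s,\ell)\mapsto(s',\ell')$ is a bijection even though $x$ may depend on $s$. The cleanest route avoids computing $x$: $P$ is unitary, and it maps each orthonormal basis vector to a unit multiple of some basis vector. If two distinct labels had the same image, two orthogonal vectors would map to parallel vectors, contradicting unitarity. So the label map is injective, hence a permutation of the finite label set, and $P$ is monomial on $\mathcal{B}_{\mathcal{C}}$.

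The ancilla claim is immediate: $X$, $Y$, $Z$ send $\ket{0},\ket{1}$ to unit multiples of computational basis states, and tensor products preserve this property. For the joint claim, write $P=\omega P_D\otimes P_A$. A tensor product of monomial unitaries is monomial in the product basis, with permutation $\pi_D\times\pi_A$ and phases multiplying, so $P$ is monomial in $\mathcal{B}$.
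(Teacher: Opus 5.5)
Your proof is correct and follows the same route as the paper's: factor $PE_s=\omega\,E_{s'}LS$ with $L$ a logical Pauli and $S\in\mathcal{S}$, observe that $S$ fixes $\ket{\ell}_L$ while $L$ permutes the logical basis up to phase, and then tensor with the ancilla action. Your additional details are sound and fill in steps the paper only asserts: the symplectic-partner decomposition showing that logical Paulis flip bits of $\ell$, and the unitarity argument that the label map $(s,\ell)\mapsto(s',\ell')$ is injective and hence a genuine permutation.
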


\begin{proof}
Fix a data basis vector $\ket{s,\ell}=E_s\ket{\ell}_L$ in $\mathcal{B}_{\mathcal{C}}$. Because $P$ and $E_s$ are Paulis, their product satisfies
\[
PE_s = \omega\,E_{s'}L S
\]
for some phase $\omega\in\{\pm 1, \pm i\}$, some syndrome representative $E_{s'}$ (with syndrome $s'=s\oplus s(P)$, where $s(P)$ is the syndrome of $P$), some logical Pauli $L\in N(\mathcal{S})\cap\mathcal{P}_n$, and some stabilizer $S\in\mathcal{S}$.

Since $S\in\mathcal{S}$ fixes the logical state $\ket{\ell}_L$ (i.e., $S\ket{\ell}_L = \ket{\ell}_L$) and $L$ permutes the joint eigenbasis $\{\ket{\ell}_L\}_{\ell\in\{0,1\}^k}$ up to phase (each logical Pauli either fixes or flips each bit of $\ell$), we obtain
\[
P\ket{s,\ell} = PE_s\ket{\ell}_L = \omega\,E_{s'}LS\ket{\ell}_L = \omega'\ket{s',\ell'}
\]
for some phase $\omega'\in\{\pm 1, \pm i\}$ and some $\ell'\in\{0,1\}^k$. Thus $P$ permutes $\mathcal{B}_{\mathcal{C}}$ up to phases, establishing monomiality on the data register.

On ancillas, single-qubit Pauli operators act on the computational basis by either swapping $\ket{0}$ and $\ket{1}$ or adding a phase, so any tensor product of ancilla Paulis permutes $\{\ket{y}\}_{y\in\{0,1\}^a}$ up to phases.

Tensoring the data and ancilla actions: if $P=P_{\mathrm{data}}\otimes P_{\mathrm{anc}}$, then
\[
P(\ket{s,\ell}\otimes\ket{y}) = (P_{\mathrm{data}}\ket{s,\ell})\otimes(P_{\mathrm{anc}}\ket{y}) = \omega''\ket{s',\ell'}\otimes\ket{y'}
\]
for appropriate $s',\ell',y'$ and phase $\omega''$. This establishes monomiality of $P$ in the joint basis $\mathcal{B}$.
\end{proof}

\section{Proof of Proposition~\ref{prop:qed_reduction_exact}: Exact reduced Markov model for post-selected QED}
\label{app:qed_reduction_exact}

\begin{proposition}[Restated]
Let $T'$ be the unreduced (joint) transition matrix for one noisy stabilizer cycle prior to ancilla measurement. Consider the operational QED procedure that, on each cycle, (i) initializes ancillas to $0^a$, (ii) applies the noisy stabilizer circuit, (iii) measures ancillas in the computational basis, and (iv) discards the run unless the outcome is $0^a$. Then the one-step population update on $\overline{\mathcal{Q}}=\mathcal{B}_{\mathcal{C}}\cup\{R\}$ is exactly $v \mapsto \mathcal{R}(T')v$, with $R$ absorbing. Iteration gives $v^{(t)} = (\mathcal{R}(T'))^t v^{(0)}$.
\end{proposition}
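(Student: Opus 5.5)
The plan is to model one QED cycle as a quantum instrument on the joint data--ancilla system and to show that its action on $\mathcal{B}_{\mathcal{C}}$-populations, with rejection lumped, is exactly $\mathcal{R}(T')$.

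\textbf{Step 1: cycle-boundary state.} Represent the state at a cycle boundary by a subnormalized data operator $\rho_D$ on the accepted branch, with trace $1-r$, together with the rejected mass $r$. The population vector is $v=[p_{\mathcal{B}_{\mathcal{C}}}(\rho_D);r]$. Step (i) maps $\rho_D\mapsto\rho_D\otimes\ket{0^a}\!\bra{0^a}$. The joint populations are then $p_{\mathcal{B}}(\rho_D\otimes\ket{0^a}\!\bra{0^a})[(q,y)]=\delta_{y,0^a}\,p_{\mathcal{B}_{\mathcal{C}}}(\rho_D)[q]$, since each $\Pi_{(q,y)}=\Pi_q\otimes\ket{y}\!\bra{y}$.

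\textbf{Step 2: exact evolution through the circuit.} Step (ii) applies the noisy channel $\mathcal{E}$. This channel is $\mathcal{B}$-commuting by Lemma~\ref{lem:syn_b_comm}, hence $\mathcal{B}$-nonactivating. Proposition~\ref{prop:nonact_implies_TM} therefore gives exact output populations, with no diagonality assumption on $\rho_D$:
\[
p_{\mathcal{B}}(\mathcal{E}(\cdot))[(q',y)]=\sum_q T'[(q',y),(q,0^a)]\,p_{\mathcal{B}_{\mathcal{C}}}(\rho_D)[q].
\]

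\textbf{Step 3: measurement and post-selection.} Steps (iii)--(iv) measure the ancillas in the computational basis and keep only outcome $0^a$. The unnormalized accepted data state is $\rho_D'=\Tr_A[(I\otimes\ket{0^a}\!\bra{0^a})\mathcal{E}(\cdots)]$. Its $\mathcal{B}_{\mathcal{C}}$-populations are $\Tr(\Pi_{q'}\rho_D')=\Tr(\Pi_{(q',0^a)}\mathcal{E}(\cdots))$, which reproduces~\eqref{eq:reduce_accept_block}. This is the step I expect to need the most care. The point is that the ancilla projector $\ket{0^a}\!\bra{0^a}$ is itself diagonal in $\mathcal{B}$, so projecting and then tracing out the ancilla yields data populations that are exactly the marginal joint populations at $y=0^a$. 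In particular, coherences of $\mathcal{E}(\cdot)$ between distinct $y$ do not contribute.

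\textbf{Step 4: rejection mass and iteration.} All remaining mass goes to the rejected branch: previously rejected mass $r$ stays (the run is terminated, giving~\eqref{eq:reduce_absorb}), plus $\sum_q\bigl(1-\sum_{q'}T'[(q',0^a),(q,0^a)]\bigr)p[q]$. This uses that $T'$ is column-stochastic, because $\mathcal{E}$ is trace preserving, and it matches~\eqref{eq:reduce_reject_row}. Hence $v\mapsto\mathcal{R}(T')v$ holds for arbitrary input states, and linearity covers mixtures. The new accepted-branch data operator $\rho_D'$ is again a valid boundary state, and Step 2 never required $\rho_D$ to be diagonal. Induction on $t$ then gives $v^{(t)}=\mathcal{R}(T')^t v^{(0)}$.
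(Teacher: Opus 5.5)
Your proof is correct and follows the same skeleton as the paper's proof: reinitialize to the joint label $(q,0^a)$, apply $T'$, accept iff $y=0^a$, lump all other mass into the absorbing state $R$, and conclude by linearity and induction. The difference is that you work with the subnormalized post-selected data operator rather than with basis-state populations, which makes explicit what the paper's ``linearity'' step leaves implicit: coherent inputs, and the projection onto $\ket{0^a}\!\bra{0^a}$, cannot spoil population closure, by Proposition~\ref{prop:nonact_implies_TM} together with the fact that the ancilla projector is diagonal in $\mathcal{B}$.
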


\begin{proof}
It suffices to check the update from a basis state at a cycle boundary. If the run is accepted and the data label at the boundary is $q\in\mathcal{B}_{\mathcal{C}}$, then the ancillas are reinitialized to $0^a$, so the joint starting label is $(q,0^a)$. After applying $T'$, the probability to be in joint label $(q',y)$ is $T'[(q',y),(q,0^a)]$.

Post-selection accepts iff $y=0^a$, in which case the next cycle-boundary data label is $q'$; otherwise the trajectory terminates in $R$. Equations~\eqref{eq:reduce_accept_block}--\eqref{eq:reduce_absorb} therefore give the correct one-step transition probabilities on $\overline{\mathcal{Q}}$ from $q$ to $q'$ and from $q$ to $R$.

Finally, \eqref{eq:reduce_absorb} encodes that once rejected, the run remains rejected. Linearity then yields the same update for any population vector $v$, and iteration follows by induction on $t$.
\end{proof}

\begin{remark}[Lumpability intuition]
The QED reduction is an instance of strong lumping for a Markov process with an absorbing terminal class: all joint basis states with ancilla outcome $y\neq 0^a$ are operationally indistinguishable (the run is discarded immediately), so they can be aggregated into a single absorbing state without affecting any future-time accepted statistics.
\end{remark}

\subsection{Accepted-block recursion and reconstruction}
\label{app:accepted_recursion}

Write the reduced QED kernel in absorbing block form
\[
M_{\varepsilon,\eta}^{(m)} = \begin{bmatrix} A_{\varepsilon,\eta}^{(m)} & 0 \\ q_{\varepsilon,\eta}^{(m)T} & 1 \end{bmatrix},
\]
where $A_{\varepsilon,\eta}^{(m)}$ is the accepted block defined by \eqref{eq:accepted_block_def}. The following identities hold for any $m\ge 1$.

\begin{proposition}[Accepted-space recursion]
\label{prop:accepted_recursion}
Let $v^{(t)}=M_{\varepsilon,\eta}^{(m)t} v^{(0)}$ and write $v^{(t)}=[w^{(t)}; r^{(t)}]$ with $w^{(t)}:=Pv^{(t)}$ and $r^{(t)}:=e_R^T v^{(t)}$. Then for all $t\ge 0$,
\[
w^{(t+1)} = A_{\varepsilon,\eta}^{(m)} w^{(t)}, \qquad r^{(t+1)} = r^{(t)} + q_{\varepsilon,\eta}^{(m)T} w^{(t)}.
\]
In particular, $w^{(t)} = \bigl(A_{\varepsilon,\eta}^{(m)}\bigr)^t w^{(0)}$.
\end{proposition}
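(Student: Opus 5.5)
The plan is a direct block-matrix computation that uses the absorbing block form of $M^{(m)}_{\varepsilon,\eta}$ displayed just above. Write $v^{(t+1)} = M^{(m)}_{\varepsilon,\eta} v^{(t)}$ and partition $v^{(t)} = [w^{(t)}; r^{(t)}]$ conformally with the blocks: the accepted coordinates $\mathcal{B}_{\mathcal{C}}$ of dimension $L+X$, followed by the single rejection coordinate $R$. Multiplying out the block product gives
\[
\begin{bmatrix} w^{(t+1)} \\ r^{(t+1)} \end{bmatrix}
=
\begin{bmatrix} A^{(m)}_{\varepsilon,\eta} & 0 \\ q^{(m)T}_{\varepsilon,\eta} & 1 \end{bmatrix}
\begin{bmatrix} w^{(t)} \\ r^{(t)} \end{bmatrix}
=
\begin{bmatrix} A^{(m)}_{\varepsilon,\eta} w^{(t)} \\ q^{(m)T}_{\varepsilon,\eta} w^{(t)} + r^{(t)} \end{bmatrix}.
\]
Both recursions can be read off from this product.

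The one point to justify is that $M^{(m)}_{\varepsilon,\eta}$ really has this block form: its upper-right block is zero and its $(R,R)$ entry is $1$, and its upper-left block coincides with $A^{(m)}_{\varepsilon,\eta} = P M^{(m)}_{\varepsilon,\eta} P^T$.

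\textbf{Upper-left block.} The identification with $A^{(m)}_{\varepsilon,\eta}$ is immediate, since $P = [I_{L+X}\; 0]$ selects the accepted rows and columns.

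\textbf{Column of $R$.} For $T_\varepsilon = \mathcal{R}(T'_\varepsilon)$, Definition~\ref{def:reduction_map}, Eq.~\eqref{eq:reduce_absorb}, gives $T[q,R]=0$ and $T[R,R]=1$. For the gate layer, Eq.~\eqref{eq:G_eta_block} shows that $G_\eta$, and hence $(G_\eta)^m$, is block diagonal with a $1$ in the $R$ slot. Consequently $(G_\eta)^m e_R = e_R$ and $T_\varepsilon e_R = e_R$, so $M^{(m)}_{\varepsilon,\eta} e_R = e_R$. This means the $R$ column of $M^{(m)}_{\varepsilon,\eta}$ is $e_R$, which is exactly the claimed zero upper-right block together with the unit $(R,R)$ entry.

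\textbf{Bottom-left row.} With the first two points in hand, the bottom-left row is simply whatever $q^{(m)T}_{\varepsilon,\eta}$ is defined to be.

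The final statement follows by induction on $t$ from $w^{(t+1)} = A^{(m)}_{\varepsilon,\eta} w^{(t)}$, with the base case $w^{(0)} = (A^{(m)}_{\varepsilon,\eta})^0 w^{(0)}$.

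I expect no real obstacle. The only care needed is the verification that $(G_\eta)^m$ preserves the block structure. That follows because products of block-diagonal matrices are block diagonal, and multiplying a block-diagonal matrix on the right of the absorbing-form $T_\varepsilon$ preserves the absorbing form.
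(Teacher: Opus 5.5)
Your proposal is correct and follows the paper's proof: multiply the absorbing block form of $M^{(m)}_{\varepsilon,\eta}$ against $[w^{(t)}; r^{(t)}]$, read off the top and bottom blocks, and conclude the closed form by induction. You also verify that the $R$-column of $M^{(m)}_{\varepsilon,\eta}$ is $e_R$, via $T_\varepsilon e_R = e_R$ and $(G_\eta)^m e_R = e_R$. The paper takes this block form for granted, so your extra check only strengthens the argument.
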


\begin{proof}
Multiply the block form of $M_{\varepsilon,\eta}^{(m)}$ by $[w^{(t)}; r^{(t)}]$:
\[
\begin{bmatrix} A_{\varepsilon,\eta}^{(m)} & 0 \\ q_{\varepsilon,\eta}^{(m)T} & 1 \end{bmatrix} \begin{bmatrix} w^{(t)} \\ r^{(t)} \end{bmatrix} = \begin{bmatrix} A_{\varepsilon,\eta}^{(m)} w^{(t)} \\ q_{\varepsilon,\eta}^{(m)T} w^{(t)} + r^{(t)} \end{bmatrix}.
\]
Reading off the top and bottom blocks gives the recursion. The closed form follows by induction.
\end{proof}

\begin{corollary}[Acceptance probability]
\label{cor:acceptance_corollary}
Assume $v^{(0)}$ is normalized. Then for all $t\ge 0$, the acceptance probability is $p_{\mathrm{acc}}(t) = 1-r^{(t)} = \mathbf{1}_{L+X}^T \bigl(A_{\varepsilon,\eta}^{(m)}\bigr)^t w^{(0)}$.
\end{corollary}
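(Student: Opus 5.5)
The statement is Corollary~\ref{cor:acceptance_corollary}. It follows directly from Proposition~\ref{prop:accepted_recursion} together with column-stochasticity of the reduced kernel $M^{(m)}_{\varepsilon,\eta}$. I will establish the two equalities $p_{\mathrm{acc}}(t)=1-r^{(t)}$ and $1-r^{(t)}=\mathbf{1}_{L+X}^T (A^{(m)}_{\varepsilon,\eta})^t w^{(0)}$ separately.

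\textbf{First equality.} I first argue that $M^{(m)}_{\varepsilon,\eta}=T_\varepsilon(G_\eta)^m$ is column-stochastic.
\begin{itemize}[leftmargin=*]
\item $T_\varepsilon=\mathcal{R}(T'_\varepsilon)$ is column-stochastic by Definition~\ref{def:reduction_map}. Each column $q\in\mathcal{B}_{\mathcal{C}}$ sums to $\sum_{q'}T'[(q',0^a),(q,0^a)]$ plus the complementary rejection entry, which totals $1$. The $R$ column is $e_R$.
\item $G_\eta$ is column-stochastic by the block form~\eqref{eq:G_eta_block}.
\item Products of column-stochastic matrices are column-stochastic.
\end{itemize}
Hence $\mathbf{1}^T v^{(t)}=\mathbf{1}^T v^{(0)}=1$ for all $t$, using that $v^{(0)}$ is normalized. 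Writing $\mathbf{1}^T v^{(t)}=\mathbf{1}_{L+X}^T w^{(t)}+r^{(t)}$ gives $\mathbf{1}_{L+X}^T w^{(t)}=1-r^{(t)}$.

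Operationally, $p_{\mathrm{acc}}(t)$ is the probability that the run has not been absorbed into $R$ after $t$ cycles. By Proposition~\ref{prop:qed_reduction_exact}, the population on $R$ at time $t$ is exactly $r^{(t)}$. Therefore $p_{\mathrm{acc}}(t)=1-r^{(t)}$.

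\textbf{Second equality.} Proposition~\ref{prop:accepted_recursion} gives $w^{(t)}=(A^{(m)}_{\varepsilon,\eta})^t w^{(0)}$. Substituting into $\mathbf{1}_{L+X}^T w^{(t)}=1-r^{(t)}$ yields the claimed formula. This is consistent with the definition $p_{\mathrm{acc}}(t)=\|A^t e_\psi\|_1$ used earlier: $A$ is entrywise nonnegative, so for $w^{(0)}=e_\psi\ge 0$ the $\ell_1$ norm equals the sum of entries.

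\textbf{Main subtlety.} The only point needing care is the column-stochasticity of $\mathcal{R}(T')$. Equation~\eqref{eq:reduce_reject_row} must yield a nonnegative entry, which holds because $T'$ is column-stochastic and nonnegative, so the accepted partial column sum is at most $1$. Everything else is bookkeeping.
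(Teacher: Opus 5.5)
Your proof is correct and follows essentially the same route as the paper: column-stochasticity of $M^{(m)}_{\varepsilon,\eta}$ preserves the normalization of $v^{(t)}$, which gives $\mathbf{1}_{L+X}^T w^{(t)} = 1-r^{(t)}$, and substituting $w^{(t)}=(A^{(m)}_{\varepsilon,\eta})^t w^{(0)}$ from Proposition~\ref{prop:accepted_recursion} finishes the argument. Your explicit checks that $\mathcal{R}(T')$ and $G_\eta$ are column-stochastic, and your operational reading of $1-r^{(t)}$ (which for the full clock cycle also uses that $G_\eta$ fixes $R$), fill in steps the paper treats as immediate.
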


\begin{proof}
Since $v^{(t)}$ remains normalized and nonnegative under a column-stochastic $M_{\varepsilon,\eta}^{(m)}$, we have $\mathbf{1}_{L+X}^T w^{(t)} = 1-r^{(t)}$ by definition of $r^{(t)}$. Substituting $w^{(t)} = \bigl(A_{\varepsilon,\eta}^{(m)}\bigr)^t w^{(0)}$ from Proposition~\ref{prop:accepted_recursion} completes the proof.
\end{proof}

\begin{proposition}[Reconstruction from the accepted block]
\label{prop:reconstruct_M_from_A}
Let $A_{\varepsilon,\eta}^{(m)}$ be the accepted block of $M_{\varepsilon,\eta}^{(m)}$. Then
\[
q_{\varepsilon,\eta}^{(m)T} = \mathbf{1}_{L+X}^T - \mathbf{1}_{L+X}^T A_{\varepsilon,\eta}^{(m)},
\]
and hence
\[
M_{\varepsilon,\eta}^{(m)} = \begin{bmatrix} A_{\varepsilon,\eta}^{(m)} & 0 \\ \mathbf{1}_{L+X}^T - \mathbf{1}_{L+X}^T A_{\varepsilon,\eta}^{(m)} & 1 \end{bmatrix}.
\]
\end{proposition}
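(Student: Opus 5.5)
The argument needs only column-stochasticity of $M_{\varepsilon,\eta}^{(m)}$ and the block form already written down above Proposition~\ref{prop:accepted_recursion}. First I would confirm that $M_{\varepsilon,\eta}^{(m)}=T_\varepsilon(G_\eta)^m$ is column-stochastic on $\overline{\mathcal{Q}}$.

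For $T_\varepsilon=\mathcal{R}(T'_\varepsilon)$, Definition~\ref{def:reduction_map} gives this directly. For each accepted column $q$, the accepted entries $T[q',q]$ and the rejection entry $T[R,q]$ sum to $1$ by \eqref{eq:reduce_reject_row}. The $R$ column is $e_R$ by \eqref{eq:reduce_absorb}.

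For $G_\eta$, the block form \eqref{eq:G_eta_block} has a column-stochastic data block $G_\eta^{(D)}$ together with the entry $1$ on $R$, so $G_\eta$ is column-stochastic. Products of column-stochastic matrices are column-stochastic, which gives $\mathbf{1}^T M_{\varepsilon,\eta}^{(m)}=\mathbf{1}^T$ with $\mathbf{1}=[\mathbf{1}_{L+X};1]$. The one point that requires care is the upper-right zero block. Both $T_\varepsilon$ and $G_\eta$ have $R$ column equal to $e_R$, so their product does too, and $R$ remains absorbing. This justifies the block form with $A$, $0$, $q^T$, and $1$.

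Next I would multiply the block form on the left by $\mathbf{1}^T=[\mathbf{1}_{L+X}^T,\,1]$. This gives
\[
\bigl[\mathbf{1}_{L+X}^T A_{\varepsilon,\eta}^{(m)}+q_{\varepsilon,\eta}^{(m)T},\ 1\bigr]=\bigl[\mathbf{1}_{L+X}^T,\ 1\bigr].
\]
Reading off the first block yields $q^{(m)T}_{\varepsilon,\eta}=\mathbf{1}_{L+X}^T-\mathbf{1}_{L+X}^TA^{(m)}_{\varepsilon,\eta}$. Substituting this expression back into the block form gives the displayed reconstruction of $M_{\varepsilon,\eta}^{(m)}$.

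No real obstacle is expected. The only point needing explicit justification is that left-multiplication by $(G_\eta)^m$ preserves both column-stochasticity and the absorbing structure.
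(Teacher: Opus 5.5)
Your proof is correct and follows the paper's argument: both read off the non-rejection column sums of the column-stochastic matrix $M_{\varepsilon,\eta}^{(m)}$, which you do in one step via $\mathbf{1}^T M_{\varepsilon,\eta}^{(m)}=\mathbf{1}^T$, and then substitute the resulting $q_{\varepsilon,\eta}^{(m)T}$ into the absorbing block form. Your additional check that $T_\varepsilon=\mathcal{R}(T'_\varepsilon)$ and $G_\eta$ are each column-stochastic with $R$-column $e_R$, so that their product inherits both properties, is a justification the paper leaves implicit.
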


\begin{proof}
For any non-rejection starting state (i.e., any column of $M_{\varepsilon,\eta}^{(m)}$ excluding the $R$ column), column-stochasticity gives
\[
\sum_{i\in \text{non-rej}} (A_{\varepsilon,\eta}^{(m)})_{ij} + (q_{\varepsilon,\eta}^{(m)})_j = 1.
\]
Writing this for all columns simultaneously yields the formula for $q_{\varepsilon,\eta}^{(m)T}$. Substituting into the absorbing block form gives the reconstruction.
\end{proof}

\section{Spectral and perturbation facts for QED clock-cycle matrices}
\label{app:clock_cycle_spectrum}

Fix $m\ge 1$ and write the reduced clock-cycle matrix in absorbing block form
\begin{equation}
M_{\varepsilon,\eta}^{(m)}=\begin{bmatrix}A_{\varepsilon,\eta}^{(m)}&0\\ q_{\varepsilon,\eta}^{(m)T}&1\end{bmatrix},
\label{eq:M_block_form}
\end{equation}
with accepted block $A_{\varepsilon,\eta}^{(m)}$ defined by \eqref{eq:accepted_block_def}. Let $G_\eta^{(m)}:=(G_\eta)^m$ and recall the perturbation notation from \eqref{eq:perturbation_notation}. The results below specialize standard linear-algebraic facts to this structure.

\subsection{Spectrum reduction to the accepted block}
\label{app:spectrum_proofs}

\begin{proposition}[Spectrum reduction to the accepted block]
\label{prop:spectrum_block}
For $M_{\varepsilon,\eta}^{(m)}$ in the absorbing block form~\eqref{eq:M_block_form},
\[
\sigma(M_{\varepsilon,\eta}^{(m)}) = \sigma(A_{\varepsilon,\eta}^{(m)}) \cup \{1\}
\]
as multisets.
\end{proposition}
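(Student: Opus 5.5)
The plan is to compute the characteristic polynomial of $M_{\varepsilon,\eta}^{(m)}$ directly from the absorbing block form~\eqref{eq:M_block_form}. Write $N:=2^n$ for the dimension of the accepted block. Then
\[
\lambda I_{N+1}-M_{\varepsilon,\eta}^{(m)}=\begin{bmatrix}\lambda I_N - A_{\varepsilon,\eta}^{(m)} & 0\\ -q_{\varepsilon,\eta}^{(m)T} & \lambda-1\end{bmatrix}.
\]
This matrix is block lower triangular, because the upper-right block is the zero column. Its determinant is therefore the product of the determinants of the diagonal blocks. One can see this by a cofactor expansion along the last column, whose only possibly nonzero entry is $\lambda-1$ in the bottom-right corner.

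This gives
\[
\det\bigl(\lambda I-M_{\varepsilon,\eta}^{(m)}\bigr)=(\lambda-1)\,\det\bigl(\lambda I-A_{\varepsilon,\eta}^{(m)}\bigr)
\]
as an identity of polynomials in $\lambda$. The roots of the left-hand side, counted with algebraic multiplicity, are exactly the roots of $\det(\lambda I-A_{\varepsilon,\eta}^{(m)})$ together with one additional root at $\lambda=1$. This is precisely the multiset statement $\sigma(M_{\varepsilon,\eta}^{(m)})=\sigma(A_{\varepsilon,\eta}^{(m)})\cup\{1\}$.

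Two points need care. First, the union must be read as a multiset union, so that if $1\in\sigma(A_{\varepsilon,\eta}^{(m)})$ (as happens at $\varepsilon=\eta=0$, where $A_0=\mathrm{diag}(I_L,0_X)$), the multiplicity of $1$ in $M$ is one more than in $A$. The factorization of the characteristic polynomial handles this automatically. Second, the block form itself must be justified. Here I would cite Definition~\ref{def:reduction_map}, which gives $T[q,R]=0$ and $T[R,R]=1$ for the check, and~\eqref{eq:G_eta_block} for the gate layer. A product of matrices each having last column $e_R$ again has last column $e_R$, so $M_{\varepsilon,\eta}^{(m)}=T_\varepsilon(G_\eta)^m$ inherits the zero upper-right block and the unit corner. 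No step is a real obstacle; the only subtlety is the multiset bookkeeping.
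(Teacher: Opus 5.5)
Your proposal is correct and matches the paper's proof: both use the block lower-triangular structure to factor $\det(\lambda I - M_{\varepsilon,\eta}^{(m)})=\det(\lambda I - A_{\varepsilon,\eta}^{(m)})\,(\lambda-1)$. You also check that $T_\varepsilon(G_\eta)^m$ keeps last column $e_R$, which justifies the block form that the paper takes as given.
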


\begin{proof}
Since $M_{\varepsilon,\eta}^{(m)}$ is block lower-triangular, its characteristic polynomial factorizes:
\[
\det(\lambda I - M_{\varepsilon,\eta}^{(m)})
=\det(\lambda I - A_{\varepsilon,\eta}^{(m)})\cdot(\lambda-1).
\]
Thus the eigenvalues of $M_{\varepsilon,\eta}^{(m)}$ are the eigenvalues of $A_{\varepsilon,\eta}^{(m)}$ together with the absorbing eigenvalue $1$.
\end{proof}

\subsection{Diagonalization of the ideal idempotent $T_0$}
\label{app:T0_diag}

\begin{proposition}[Diagonalization of $T_0$]
\label{prop:T0_diag}
Let $T_0$ be the noiseless reduced QED-cycle matrix in block form~\eqref{eq:T0_block_form},
ordered as $(L)\oplus(X)\oplus(R)$ with $L=2^k$ and $X=2^n-2^k$.  Then $T_0$ admits a diagonalization
$T_0 = VDV^{-1}$ with
\begin{equation}
V = \begin{bmatrix} I_L & 0 & 0 \\ 0 & 0 & I_X \\ 0 & 1 & -\mathbf{1}_X^T \end{bmatrix}, \quad
D = \begin{bmatrix} I_L & 0 & 0 \\ 0 & 1 & 0 \\ 0 & 0 & 0_X \end{bmatrix}, \quad
V^{-1} = \begin{bmatrix} I_L & 0 & 0 \\ 0 & \mathbf{1}_X^T & 1 \\ 0 & I_X & 0 \end{bmatrix}.
\end{equation}
Consequently,
\[
\sigma(T_0) = \{1,\ldots,1\}_{L+1} \cup \{0,\ldots,0\}_X.
\]
\end{proposition}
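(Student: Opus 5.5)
The plan is a direct verification, with $T_0$ in the block form~\eqref{eq:T0_block_form}. I will check $V V^{-1}=I$ and $T_0 V = V D$; together these give $T_0=VDV^{-1}$, and the spectrum then follows.

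\textbf{Step 1: the inverse.} Multiply $V$ by the proposed $V^{-1}$ block by block, using the ordering $(L)\oplus(\text{scalar})\oplus(X)$ for the columns of $V$ and the rows of $V^{-1}$.
\begin{itemize}
\item The $L$-block is immediate.
\item The $X$-rows of $V$ are $[0\;\,0\;\,I_X]$. Multiplying by $V^{-1}$ picks out its last block row $[0\;\,I_X\;\,0]$, which is the correct $X$-row of the identity.
\item The $R$-row of $V$ is $[0\;\,1\;\,-\mathbf{1}_X^T]$. It yields $[0\;\,\mathbf{1}_X^T\;\,1]-[0\;\,\mathbf{1}_X^T\;\,0]=[0\;\,0\;\,1]$.
\end{itemize}
So $VV^{-1}=I$, and $V$ is invertible.

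\textbf{Step 2: the eigen-relation.} Check $T_0V=VD$ one column group of $V$ at a time; this is the same as showing each column of $V$ is an eigenvector of $T_0$.
\begin{itemize}
\item The logical columns $(e_i;0;0)$ are fixed by $T_0$ because of the $I_L$ block, giving eigenvalue $1$.
\item The column $(0;0;1)=e_R$ is fixed because $R$ is absorbing, giving eigenvalue $1$.
\item Each leakage column $(0;e_x;-1)$ is sent to $(0;0;\mathbf{1}_X^Te_x-1)=0$. This uses that the $X\!\to\!X$ block is zero and the $X\!\to\!R$ row is $\mathbf{1}_X^T$. So these columns have eigenvalue $0$.
\end{itemize}
Hence $T_0V=VD$. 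Combined with Step 1, this gives $T_0=VDV^{-1}$.

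\textbf{Step 3: the spectrum.} Since $D$ is diagonal with $L+1$ ones and $X$ zeros, $\sigma(T_0)$ is as stated. As a consistency check, the characteristic polynomial from the block lower-triangular structure (the same argument as in Proposition~\ref{prop:spectrum_block}) is $(\lambda-1)^{L+1}\lambda^X$, which matches. Alternatively, idempotence $T_0^2=T_0$ (readily checked from the blocks) gives diagonalizability directly.

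I expect no real obstacle. The only care needed is the bookkeeping of the block ordering: $V$ places the scalar column before the $X$ columns, while $T_0$ uses the order $L,X,R$. I will index rows by $(L,X,R)$ and columns by $(L,\text{scalar},X)$ consistently throughout.
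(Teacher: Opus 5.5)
Your proposal is correct and matches the paper's proof, which is the same direct verification of $VV^{-1}=I$ and $T_0V=VD$ with the spectrum read off from $D$. Your block-by-block computations, including the leakage columns $(0;e_x;-1)$ being annihilated and the $R$-row identity check, are accurate, and the idempotence remark is a valid but unnecessary extra.
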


\begin{proof}
Direct verification shows $VV^{-1}=I$ and $T_0V=VD$.  The spectrum follows from the diagonal entries
of $D$.
\end{proof}

\begin{lemma}[Uniform condition number]
\label{lem:kappaV}
For the above diagonalization, $\|V\|_1 = \|V^{-1}\|_1 = 2$, hence $\kappa_1(V)=\|V\|_1\|V^{-1}\|_1=4$
independent of $(n,k)$.
\end{lemma}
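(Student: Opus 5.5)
The plan is a direct computation of the induced matrix $1$-norm. For a real matrix, $\|\cdot\|_1$ is the maximum absolute column sum. I will read off the columns of $V$ and $V^{-1}$ from the block forms in Proposition~\ref{prop:T0_diag}, ordered as $(L)\oplus(\text{one column})\oplus(X)$ for $V$, and as the corresponding row and column blocks for $V^{-1}$. Nothing beyond that proposition is needed. The only care required is tracking which block row each entry lies in.

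\textbf{Columns of $V$.}
\begin{itemize}
\item The first $L$ columns are standard basis vectors $e_i$, each with column sum $1$.
\item The middle column is $e_R$ (the single $1$ in the last block row), again with sum $1$.
\item Each of the last $X$ columns has a $1$ from the $I_X$ block in the second block row and a $-1$ from the $-\mathbf{1}_X^T$ block in the last row. Its absolute column sum is therefore $2$.
\end{itemize}
Hence, whenever $X\ge 1$ (i.e.\ $n>k$), $\|V\|_1=2$.

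\textbf{Columns of $V^{-1}$.}
\begin{itemize}
\item The first $L$ columns are again standard basis vectors, with sum $1$.
\item Each of the $X$ middle columns contains a $1$ from the row $\mathbf{1}_X^T$ and a $1$ from the $I_X$ block, giving sum $2$.
\item The final column has a single $1$ in the middle row, with sum $1$.
\end{itemize}
Thus $\|V^{-1}\|_1=2$. Multiplying gives $\kappa_1(V)=4$. No entry depends on $L$ or $X$ beyond block sizes, so the value is independent of $(n,k)$.

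\textbf{Anticipated difficulties.} The only real obstacle is bookkeeping: the middle block of $V^{-1}$ must be read as a $1\times X$ row $\mathbf{1}_X^T$ above an $X\times X$ block $I_X$, and its column positions must be identified correctly. I will also note the degenerate case $X=0$ (when $n=k$, so there are no leakage states). There, $V$ and $V^{-1}$ are permutation-like identities with norm $1$, so the statement implicitly assumes $n>k$, as for any nontrivial stabilizer code.
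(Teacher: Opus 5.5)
Your proposal is correct and matches the paper's proof: both read off the absolute column sums of $V$ and $V^{-1}$ from the block forms, with every column summing to $1$ except the $X$ leakage columns, which each have two nonzero entries of modulus $1$ and so sum to $2$. Your remark that the value $2$, and hence $\kappa_1(V)=4$, requires $X\ge 1$ (i.e.\ $n>k$) is a valid caveat that the paper leaves implicit.
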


\begin{proof}
Each logical column of $V$ has absolute column sum $1$, the rejection column has sum $1$, and each
leakage-eigenvector column has exactly two nonzero entries $\{+1,-1\}$ and hence absolute column sum
$2$, so $\|V\|_1=2$.

Similarly, each leakage column of $V^{-1}$ has two $+1$ entries (one in the $\mathbf{1}_X^T$ row and
one in the corresponding leakage row), giving absolute column sum $2$, while the remaining columns
have sum $1$, so $\|V^{-1}\|_1=2$.
\end{proof}

\subsection{Perturbative eigenvalue localization (Bauer--Fike)}
\label{app:bauer_fike}

\begin{proposition}[Bauer--Fike disks]
\label{prop:bauer_fike_disks}
Let $\Delta_{\varepsilon,\eta}^{(M,m)} := M_{\varepsilon,\eta}^{(m)}-T_0$ (Eq.~\eqref{eq:perturbation_notation}).
Every eigenvalue of $M_{\varepsilon,\eta}^{(m)}$ lies within distance $4\|\Delta_{\varepsilon,\eta}^{(M,m)}\|_1$
of $0$ or $1$:
\begin{equation}
\sigma(M_{\varepsilon,\eta}^{(m)})
\subseteq
\mathbb{D}\!\left(1, 4\|\Delta_{\varepsilon,\eta}^{(M,m)}\|_1\right)
\;\cup\;
\mathbb{D}\!\left(0, 4\|\Delta_{\varepsilon,\eta}^{(M,m)}\|_1\right),
\label{eq:BF_disks}
\end{equation}
where $\mathbb{D}(c,r):=\{z\in\mathbb{C}:|z-c|\le r\}$.
\end{proposition}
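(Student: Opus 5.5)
The plan is to apply the Bauer--Fike theorem in the induced $1$-norm to the perturbation $M^{(m)}_{\varepsilon,\eta}=T_0+\Delta^{(M,m)}_{\varepsilon,\eta}$. We use the explicit diagonalization $T_0=VDV^{-1}$ from Proposition~\ref{prop:T0_diag} and the condition-number bound of Lemma~\ref{lem:kappaV}. Bauer--Fike states that if $T_0$ is diagonalizable as $VDV^{-1}$, then every eigenvalue $\mu$ of $T_0+\Delta$ satisfies $\min_{\lambda\in\sigma(T_0)}|\mu-\lambda|\le\kappa(V)\,\|\Delta\|$ for any induced (absolute, monotone) operator norm. Since $\|\cdot\|_1$ is induced by the vector $1$-norm, which is absolute, this applies with $\kappa_1(V)=4$.

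\textbf{Step 1.} I would recall or sketch the Bauer--Fike argument in this norm, to keep the proposal self-contained. Suppose $\mu\notin\sigma(T_0)$ is an eigenvalue of $M=T_0+\Delta$ with eigenvector $x\neq0$. Then $(\mu I-T_0)x=\Delta x$, so
\[
x=V(\mu I-D)^{-1}V^{-1}\Delta x .
\]
Taking $1$-norms gives $1\le\|V\|_1\,\|(\mu I-D)^{-1}\|_1\,\|V^{-1}\|_1\,\|\Delta\|_1$. Because $D$ is diagonal, $\|(\mu I-D)^{-1}\|_1=\max_i|\mu-D_{ii}|^{-1}=1/\operatorname{dist}(\mu,\sigma(T_0))$. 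Rearranging yields $\operatorname{dist}(\mu,\sigma(T_0))\le\kappa_1(V)\,\|\Delta\|_1$. If instead $\mu\in\sigma(T_0)$, the bound holds trivially.

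\textbf{Step 2.} Substitute the specific data. Proposition~\ref{prop:T0_diag} gives $\sigma(T_0)=\{0,1\}$ as a set, and Lemma~\ref{lem:kappaV} gives $\kappa_1(V)=4$. Hence every eigenvalue $\mu$ lies within distance $4\|\Delta^{(M,m)}_{\varepsilon,\eta}\|_1$ of either $0$ or $1$, which is exactly the union of closed disks in~\eqref{eq:BF_disks}.

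The only point needing care is that $M^{(m)}_{\varepsilon,\eta}$ and $T_0$ live on the same $(2^n+1)$-dimensional space $\overline{\mathcal{Q}}$ with the same ordering $(L)\oplus(X)\oplus(R)$. This is true by construction, since $G_\eta$ has the block form~\eqref{eq:G_eta_block}. I also need that the $1$-norm of a diagonal matrix equals its maximal absolute entry, which is immediate.

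I do not expect a genuine obstacle. The main thing to check is the correctness of the explicit $V$ and $V^{-1}$ and the resulting value $\kappa_1(V)=4$, both of which are taken from earlier results. Optionally, I would note that Proposition~\ref{prop:spectrum_block} shows one eigenvalue is exactly $1$, so the disks localize the eigenvalues of the accepted block $A^{(m)}_{\varepsilon,\eta}$.
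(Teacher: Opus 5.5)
Your proposal is correct and matches the paper's proof. Both apply Bauer--Fike in the induced $1$-norm to $T_0+\Delta^{(M,m)}_{\varepsilon,\eta}$, using the diagonalization of Proposition~\ref{prop:T0_diag} with $\sigma(T_0)\subseteq\{0,1\}$ and $\kappa_1(V)=4$ from Lemma~\ref{lem:kappaV}; your inline derivation of Bauer--Fike is a harmless self-contained addition, whereas the paper simply cites the theorem.
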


\begin{proof}
By Proposition~\ref{prop:T0_diag}, $T_0$ is diagonalizable: $T_0=VDV^{-1}$ with
$\sigma(T_0)\subseteq\{0,1\}$, and by Lemma~\ref{lem:kappaV}, $\kappa_1(V)=4$.

Apply the Bauer--Fike theorem~\cite{bauer1960norms} (in the induced $1$-norm): for any eigenvalue $\lambda$ of
$M_{\varepsilon,\eta}^{(m)}=T_0+\Delta_{\varepsilon,\eta}^{(M,m)}$, there exists $\mu\in\sigma(T_0)$ such that
\[
|\lambda-\mu|\le \kappa_1(V)\,\|\Delta_{\varepsilon,\eta}^{(M,m)}\|_1
=4\|\Delta_{\varepsilon,\eta}^{(M,m)}\|_1.
\]
Since $\mu\in\{0,1\}$, this implies the stated disk containment.
\end{proof}

\begin{remark}[Two spectral clusters at small noise]
If $4\|\Delta_{\varepsilon,\eta}^{(M,m)}\|_1<\tfrac12$, the disks around $0$ and $1$ are disjoint, and the
spectrum splits into an $(L+1)$-eigenvalue cluster near $1$ (slow accepted-shot decay modes) and an
$X$-eigenvalue cluster near $0$ (fast transients associated with leakage rejection).
\end{remark}

\subsection{Norm scaling for small noise}
\label{app:norm_scaling}

\begin{lemma}[Small-noise norm scaling (including $\Omega$ lower bounds)]
\label{lem:norm_scaling}
Fix $m\ge 1$.

\smallskip
\noindent\textbf{Upper bounds.}
Under finite fault-location Pauli-stochastic noise,
\[
\|\Delta_\varepsilon^{(T)}\|_1 = O(\varepsilon),
\qquad
\|\Delta_\eta^{(G,m)}\|_1 = O(m\eta),
\qquad
\|\Delta_{\varepsilon,\eta}^{(M,m)}\|_1 = O(\varepsilon + m\eta)
\]
for each fixed $m$.

\smallskip
\noindent\textbf{Lower bounds (nontrivial first-order noise).}
Suppose the QED-check noise is nontrivial at first order in the following concrete sense:
there exists a logical boundary label $q\in L$ and a \emph{single-fault} check pattern
$\sigma_\star$ (i.e., a weight-$1$ Pauli fault in the syndrome-extraction circuit, with no other
faults), occurring with probability $c_\star\varepsilon+O(\varepsilon^2)$ for some $c_\star>0$,
such that, starting from $(q,0^a)$, the faulted circuit is \emph{accepted} but the reduced
cycle does \emph{not} return to $q$; equivalently, the reduced single-fault kernel sends
$q\mapsto q_\star$ with $q_\star\in (L\cup X)$ and $q_\star\neq q$ (an undetected single fault).

Then $\|\Delta_\varepsilon^{(T)}\|_1=\Omega(\varepsilon)$, and hence
$\|\Delta_\varepsilon^{(T)}\|_1=\Theta(\varepsilon)$.

Likewise, suppose the gate-layer noise is nontrivial at first order in the sense that there exists
a data boundary label $q$ and a \emph{single-fault} gate pattern $\tau_\star$ in one logical-gate
layer, occurring with probability $c_\star\eta+O(\eta^2)$ for some $c_\star>0$,
such that $\tau_\star$ sends $q\mapsto q_\star$ with $q_\star\neq q$ (i.e., the induced Pauli
is not stabilizer-trivial on that label). Then for each fixed $m$,
$\|\Delta_\eta^{(G,m)}\|_1=\Omega(m\eta)$, and hence $\|\Delta_\eta^{(G,m)}\|_1=\Theta(m\eta)$.
\end{lemma}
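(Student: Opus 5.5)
I will prove the upper and lower bounds separately. The upper bounds come from the fault-path expansion \eqref{eq:Tprime_Lambda_def}. The lower bounds come from exhibiting a single column of the perturbation whose $\ell_1$ mass is of order $\varepsilon$ (respectively $m\eta$).

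\textbf{Upper bounds.} Write $T'_\varepsilon=\Pr(\sigma=\mathrm{id})\,T_0'+\sum_{\sigma\neq\mathrm{id}}\Pr(\sigma)\,M_{P(\sigma)}T_0'$. Since $\Pr(\sigma=\mathrm{id})=1-O(\varepsilon)$ and $\sum_{\sigma\ne \mathrm{id}}\Pr(\sigma)=O(F\varepsilon)$, we have $T'_\varepsilon-T'_0=\sum_{\sigma\neq\mathrm{id}}\Pr(\sigma)\,(M_{P(\sigma)}-I)T_0'$. Each term has induced $1$-norm at most $2$, because $M_{P(\sigma)}T_0'$ and $T_0'$ are column-stochastic. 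Hence $\|T'_\varepsilon-T'_0\|_1\le 2\sum_{\sigma\ne\mathrm{id}}\Pr(\sigma)=O(\varepsilon)$.

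The reduction $\mathcal{R}$ is linear in $T'$ on the accepted block. The rejection row equals $1$ minus the column sums of that block, so each column of $\mathcal{R}(T'_\varepsilon)-\mathcal{R}(T'_0)$ has $\ell_1$ norm at most twice that of the corresponding accepted-block column. Therefore $\|\Delta^{(T)}_\varepsilon\|_1=O(\varepsilon)$.

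For gates, $G_\eta=I+E_\eta$ with $\|E_\eta\|_1=O(\eta)$ by the same argument. The telescoping identity $G_\eta^m-I=\sum_{j=0}^{m-1}G_\eta^{j}E_\eta$ and $\|G_\eta^j\|_1=1$ give $\|\Delta^{(G,m)}_\eta\|_1\le m\|E_\eta\|_1=O(m\eta)$. Finally, Lemma~\ref{lem:perturbative_expansion} bounds $\Delta^{(M,m)}$ by $O(\varepsilon)+\|T_0\|_1O(m\eta)+O(m\varepsilon\eta)=O(\varepsilon+m\eta)$, using $\|T_0\|_1=1$.

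\textbf{Lower bound for the check.} Use $\|\Delta\|_1\ge\|\Delta e_q\|_1\ge|(\Delta)_{q_\star,q}|$. Entry $(q_\star,q)$ of $T_\varepsilon$ collects the probabilities of all fault patterns mapping $(q,0^a)\to(q_\star,0^a)$. All contributions are nonnegative, so this entry is at least $\Pr(\sigma_\star)=c_\star\varepsilon+O(\varepsilon^2)$. Meanwhile $(T_0)_{q_\star,q}=0$ because $q_\star\ne q$ and $q\in L$, by \eqref{eq:T0_block_form}. This gives $\Omega(\varepsilon)$.

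\textbf{Lower bound for gates.} The main obstacle is the gate case: the $m$ factor must survive, and accumulated transitions could in principle cancel. I would avoid cancellation by expanding to first order. We have $G_\eta^m-I=m\eta G_1+O(m^2\eta^2)$, so $\|\Delta^{(G,m)}_\eta\|_1\ge m\eta\|G_1e_q\|_1-O(m^2\eta^2)$. Since $m$ is fixed, $\eta\to0$ handles the remainder.

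It remains to show $\|G_1e_q\|_1>0$. The off-diagonal entry $(G_1)_{q_\star,q}\ge c_\star>0$, since it is a sum of nonnegative first-order rates that includes $\tau_\star$. Hence $\|G_1 e_q\|_1\ge c_\star$, and the bound is $\Omega(m\eta)$ with a constant uniform in $m$, because $G_1$ does not depend on $m$.

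Combining the upper and lower bounds gives the $\Theta$ statements.
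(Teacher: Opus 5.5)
Your proof is correct. Its skeleton matches the paper's: upper bounds come from the fault-sum structure, and lower bounds come from testing the perturbation on the single column $e_q$. The individual steps, however, take a different route.

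\textbf{Upper bounds.} The paper invokes entrywise analyticity and Taylor-expands $(I+\eta G_1+O(\eta^2))^m$. It then writes $M^{(m)}_{\varepsilon,\eta}-T_0=(T_\varepsilon-T_0)+T_\varepsilon\bigl((G_\eta)^m-I\bigr)$, which avoids your cross term. You instead use the explicit estimate $\|T'_\varepsilon-T'_0\|_1\le 2\sum_{\sigma\neq\mathrm{id}}\Pr(\sigma)$, track the factor $2$ from the rejection row of $\mathcal{R}$ explicitly, and above all use the telescoping identity $G_\eta^m-I=\sum_{j<m}G_\eta^j(G_\eta-I)$. Together these give explicit constants that do not depend on the matrix dimension $2^n+1$. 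They also give $\|\Delta^{(G,m)}_\eta\|_1\le m\|G_\eta-I\|_1$ for every $m$ and $\eta$, a non-asymptotic bound stronger than the fixed-$m$ Taylor statement.

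\textbf{Check lower bound.} The paper uses the diagonal deficit together with column-stochasticity, $\|(T_\varepsilon-T_0)e_q\|_1=2\bigl(1-(T_\varepsilon e_q)_q\bigr)\ge 2p_\star(\varepsilon)$. This gains a factor of $2$ and would apply even if $\sigma_\star$ were rejected. You read off the off-diagonal entry $(T_\varepsilon)_{q_\star,q}\ge p_\star(\varepsilon)$ instead.

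\textbf{Gate lower bound.} The paper counts the event that $\tau_\star$ occurs in exactly one of the $m$ layers and nothing else occurs. That event moves mass $mc_\star\eta+O(\eta^2)$ off $q$, and the same diagonal-deficit identity finishes the argument. You pass to the first-order coefficient and use $(G_1)_{q_\star,q}\ge c_\star$, which is correctly justified by nonnegativity of the off-diagonal entries of $G_\eta$ together with $(G_0)_{q_\star,q}=0$.

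Both gate arguments carry an $m$-dependent small-$\eta$ threshold, so neither proves more than the stated ``for each fixed $m$'' claim. Yours is more algebraic and reuses the coefficient $G_1$ of Lemma~\ref{lem:se_A1_affine}; the paper's stays at the level of fault-configuration probabilities.
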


\begin{proof}
\textbf{Upper bounds.}
Each entry of $T_\varepsilon$ is a finite weighted sum of fault-pattern probabilities, hence is
analytic in $\varepsilon$ at $0$; similarly for $G_\eta$ in $\eta$. Therefore
$T_\varepsilon=T_0+\varepsilon T_1+O(\varepsilon^2)$ and $G_\eta=I+\eta G_1+O(\eta^2)$ entrywise, giving
$\|T_\varepsilon-T_0\|_1=O(\varepsilon)$ and, for fixed $m$,
\[
(G_\eta)^m=(I+\eta G_1+O(\eta^2))^m = I + m\eta G_1 + O(m^2\eta^2),
\]
so $\|(G_\eta)^m-I\|_1=O(m\eta)$. Finally,
$M_{\varepsilon,\eta}^{(m)}=T_\varepsilon(G_\eta)^m$ and $\|T_\varepsilon\|_1=1$ (column-stochastic), so
\[
\|M_{\varepsilon,\eta}^{(m)}-T_0\|_1
\le \|T_\varepsilon-T_0\|_1 + \|T_\varepsilon\|_1\,\|(G_\eta)^m-I\|_1
= O(\varepsilon+m\eta).
\]

\medskip
\textbf{Lower bound for the check: $\|\Delta_\varepsilon^{(T)}\|_1=\Omega(\varepsilon)$.}
Let $e_q$ be the basis vector for the logical label $q\in L$. Since the ideal reduced QED cycle
preserves logical states, $T_0 e_q = e_q$.

By the stated nontriviality assumption, there exists a single-fault check pattern $\sigma_\star$
that is accepted and maps $q\mapsto q_\star\neq q$. Because the circuit-level faults are Pauli and
the underlying ideal stabilizer circuit is Clifford, each \emph{fixed} fault pattern induces a
\emph{deterministic} reduced transition $q\mapsto q_\star$ (or $q\mapsto R$ if rejected). Let
$p_\star(\varepsilon)$ be the probability that \emph{exactly} $\sigma_\star$ occurs (and no other
faults). Under Pauli-stochastic noise at finitely many locations,
$p_\star(\varepsilon)=c_\star \varepsilon + O(\varepsilon^2)$ for some constant $c_\star>0$.

Since $\sigma_\star$ never returns to $q$, it contributes \emph{zero} mass to the $q$-entry of
$T_\varepsilon e_q$, hence
\[
(T_\varepsilon e_q)_q \le 1 - p_\star(\varepsilon).
\]
Therefore,
\[
\|(T_\varepsilon-T_0)e_q\|_1
=\|T_\varepsilon e_q - e_q\|_1
= \bigl| (T_\varepsilon e_q)_q - 1 \bigr| + \sum_{i\neq q} (T_\varepsilon e_q)_i
= 2\bigl(1-(T_\varepsilon e_q)_q\bigr)
\ge 2p_\star(\varepsilon)
= 2c_\star\varepsilon + O(\varepsilon^2).
\]
Taking the induced matrix norm (max column $\ell_1$ sum) gives
$\|T_\varepsilon-T_0\|_1 \ge \|(T_\varepsilon-T_0)e_q\|_1 = \Omega(\varepsilon)$, as claimed.

\medskip
\textbf{Lower bound for gates: $\|\Delta_\eta^{(G,m)}\|_1=\Omega(m\eta)$.}
Let $e_q$ be a data-boundary basis vector. By the nontriviality assumption for gates, there exists
a single-fault gate pattern $\tau_\star$ in one gate layer that maps $q\mapsto q_\star\neq q$. Let
$p_\star(\eta)=c_\star \eta + O(\eta^2)$ be the probability that exactly $\tau_\star$ occurs in a
given gate layer (and no other faults in that layer), with $c_\star>0$.

Across $m$ independent gate layers, the event “$\tau_\star$ occurs in exactly one of the $m$ layers
and no other faults occur anywhere” has probability
$m\,p_\star(\eta) + O(\eta^2)= m c_\star\eta + O(\eta^2)$ for fixed $m$.
On that event, the final label is $q_\star\neq q$, so it contributes zero mass to the $q$-entry of
$(G_\eta)^m e_q$, implying
\[
\bigl((G_\eta)^m e_q\bigr)_q \le 1 - m c_\star\eta + O(\eta^2).
\]
Hence,
\[
\|((G_\eta)^m-I)e_q\|_1
= 2\bigl(1-\bigl((G_\eta)^m e_q\bigr)_q\bigr)
\ge 2 m c_\star\eta + O(\eta^2)
= \Omega(m\eta).
\]
Therefore $\|(G_\eta)^m-I\|_1 \ge \|((G_\eta)^m-I)e_q\|_1 = \Omega(m\eta)$.

Combining each lower bound with the corresponding upper bound yields the stated $\Theta$ scalings.
\end{proof}

\subsection{Probability flux through the ideal check}
\label{app:prob_flux_ideal}

Writing $\Delta_\eta^{(G)}:=G_\eta-I$ in $(L,X,R)$ block form and multiplying by $T_0$ yields:
\begin{equation}
T_0\Delta_\eta^{(G)} = \begin{bmatrix} \Delta^{(G)}_{LL} & \Delta^{(G)}_{LX} & 0 \\ 0 & 0 & 0 \\ \mathbf{1}_X^T\Delta^{(G)}_{XL} & \mathbf{1}_X^T\Delta^{(G)}_{XX} & 0 \end{bmatrix}.
\label{eq:T0_DeltaG_block}
\end{equation}
Gate noise contributes to rejection after an ideal check precisely through leakage creation: population pushed into leakage by $G_\eta$ is converted to rejection by $T_0$. The same block structure applies to $\Delta_\eta^{(G,m)}=(G_\eta^{(m)}-I)$, with the accepted-to-leakage block still vanishing.

\subsection{Probability flux identities}
\label{app:flux_identities}

Since $G_\eta$ is column-stochastic, every column of $\Delta_\eta^{(G)} = G_\eta - I$ sums to $0$. Separating logical vs.\ leakage columns gives the flux identities:
\begin{align}
\mathbf{1}_L^T \Delta^{(G)}_{LL} &= -\mathbf{1}_X^T \Delta^{(G)}_{XL}, \label{eq:prob_flux_1} \\
\mathbf{1}_L^T \Delta^{(G)}_{LX} &= -\mathbf{1}_X^T \Delta^{(G)}_{XX}. \label{eq:prob_flux_2}
\end{align}

Equation~\eqref{eq:prob_flux_1} states that since gate noise alone cannot enter $R$, any net population leaving the logical sector must enter the leakage sector. Under an ideal QED check, that leakage becomes rejection probability, which is exactly the mechanism encoded by the rejection-row block in~\eqref{eq:T0_DeltaG_block}.

\section{Motivation for the shot-efficiency (QED-efficiency) metric}
\label{app:se_efficiency_motivation}

This appendix \emph{restates} (with notation adapted to the present manuscript) the motivation for
the \emph{shot-efficiency} / \emph{QED-efficiency} metric introduced in
Kumar~\emph{et al.}~\cite{kumar2026co}.  The QED--PEC cost comparison and resulting efficiency condition originate in
\cite{kumar2026co}; we include this restatement here solely so that the present transition-matrix
manuscript is self-contained.

The motivation in~\cite{kumar2026co} is to quantify when using QED as a front-end can reduce the
\emph{total} number of circuit executions (shots) required by probabilistic error cancellation (PEC).
The key tradeoff is:

\begin{itemize}[leftmargin=*]
\item QED \emph{reduces} the effective error rate seen by downstream mitigation, but
\item QED \emph{rejects} a fraction of runs, increasing the number of raw shots needed to obtain a
fixed number of \emph{accepted} samples.
\end{itemize}

To express this tradeoff, we consider the $m$-units as in Section~\ref{sec:qed_efficiency}, each consisting of $m$ logical gates with a single noisy QED check at the end. For this cost model, let $\varepsilon_p$ be the relevant \emph{physical} error rate for the unit when run without QED
(post-selection absent), let $\varepsilon_\ell$ be the \emph{post-selected} logical error rate for the
same unit when run with QED (conditioning on acceptance), and let $p_{\mathrm{rej}}$ be the per-unit
rejection probability under QED.

\paragraph{PEC shot cost model.}
For this shot-cost comparison, successive accepted units are modeled by a
fixed logical Pauli channel with input-independent acceptance probability
$1-p_{\mathrm{rej}}$. The parameters $\varepsilon_p$ and $\varepsilon_\ell$
are the effective error parameters entering the per-unit quasi-probability
norms below.
A standard PEC scaling model is that to estimate an
observable to fixed additive precision, the number of shots scales as~\cite{temme2017error,van2023probabilistic}
\begin{equation}
C_{\mathrm{PEC}} = D\,\gamma^2,
\end{equation}
where $D$ hides precision- and constant-factor details and
$\gamma=\prod_{g=1}^M \gamma_g$ is the product of per-unit quasi-probability norms.
For small noise strength, one typically has the first-order approximation
\begin{equation}
\gamma_g \approx 1 + 2\varepsilon,
\end{equation}
so for a homogeneous unit repeated $M$ times,
\begin{equation}
C_{\mathrm{PEC}}(\varepsilon)
\;\approx\;
D(1+2\varepsilon)^{2M}
\;=\;
D\,\exp\!\bigl(4\varepsilon M + O(\varepsilon^2M)\bigr).
\label{eq:app_pec_cost_exp}
\end{equation}
Equation~\eqref{eq:app_pec_cost_exp} is the only PEC-specific approximation needed for the efficiency
condition below; see~\cite{kumar2026co} for the surrounding PEC context and assumptions.

\paragraph{Comparing PEC on physical qubits vs.\ PEC on post-selected QED.}
\emph{(i) PEC on the physical system.}
If one applies PEC directly to the underlying physical implementation (no QED), then
\begin{equation}
C_{\mathrm{PEC}}^{\mathrm{phys}}
\;\approx\;
D\,\exp\!\bigl(4\varepsilon_p M\bigr).
\label{eq:app_cost_phys}
\end{equation}

\emph{(ii) QED + PEC on the post-selected logical channel.}
If one first runs QED and post-selects, then only a fraction $(1-p_{\mathrm{rej}})^M$ of shots survive
over $M$ units. Thus, obtaining a fixed number of \emph{accepted} samples requires an overhead factor
\begin{equation}
\frac{1}{(1-p_{\mathrm{rej}})^M}
\;=\;
\exp\!\bigl(p_{\mathrm{rej}}M + O(p_{\mathrm{rej}}^2M)\bigr).
\label{eq:app_postselect_overhead}
\end{equation}
On those accepted samples, PEC now acts on the \emph{post-selected} (reduced-noise) channel with
effective error rate $\varepsilon_\ell$, giving a PEC cost
$D\exp(4\varepsilon_\ell M)$ by the same approximation as~\eqref{eq:app_pec_cost_exp}.
Multiplying by the post-selection overhead~\eqref{eq:app_postselect_overhead} yields
\begin{equation}
C_{\mathrm{PEC+QED}}
\;\approx\;
\exp\!\bigl(p_{\mathrm{rej}}M\bigr)\cdot D\,\exp\!\bigl(4\varepsilon_\ell M\bigr)
=
D\,\exp\!\bigl((p_{\mathrm{rej}}+4\varepsilon_\ell)M\bigr).
\label{eq:app_cost_qedpec}
\end{equation}

\paragraph{Advantage ratio and the per-unit efficiency condition.}
Define the (per-region) shot \emph{advantage ratio}
\begin{equation}
R
:=
\frac{C_{\mathrm{PEC}}^{\mathrm{phys}}}{C_{\mathrm{PEC+QED}}}.
\end{equation}
Using~\eqref{eq:app_cost_phys}--\eqref{eq:app_cost_qedpec} gives
\begin{equation}
R
\;\approx\;
\exp\!\Bigl(\bigl(4\varepsilon_p - 4\varepsilon_\ell - p_{\mathrm{rej}}\bigr)M\Bigr).
\label{eq:app_adv_ratio}
\end{equation}
Thus, in this leading-order model, QED+PEC uses fewer shots than physical-only PEC when $R>1$, i.e.,
when
\begin{equation}
4(\varepsilon_p-\varepsilon_\ell) > p_{\mathrm{rej}}
\qquad\Longleftrightarrow\qquad
\frac{\varepsilon_p-\varepsilon_\ell}{p_{\mathrm{rej}}} > \frac{1}{4}.
\label{eq:app_efficiency_condition}
\end{equation}

\paragraph{Definition of shot efficiency / QED efficiency.}
Motivated by Eq.~\eqref{eq:app_efficiency_condition}, Kumar~\emph{et al.}~\cite{kumar2026co} define
the \emph{shot-efficiency} / \emph{QED-efficiency} of a unit as the \emph{error reduction achieved per
unit rejection probability}:
\begin{equation}
\mathrm{SE}
:=
\frac{\varepsilon_p-\varepsilon_\ell}{p_{\mathrm{rej}}}.
\label{eq:app_se_def}
\end{equation}
In the above PEC scaling approximation, the condition $\mathrm{SE}>\tfrac{1}{4}$ is exactly the
per-unit criterion that makes the exponent in~\eqref{eq:app_adv_ratio} positive, and hence yields an
asymptotic total-shot advantage over $M$ repeated units.

\paragraph{Connection to the present manuscript.}
Section~\ref{sec:qed_efficiency} studies the same efficiency functional
using the operational $m$-unit rates of
Definitions~\ref{def:metrics_m_unit} and~\ref{def:physical_error_rate}:
\[
\mathrm{SE}(m)=\frac{\varepsilon_p^{(m)}-\varepsilon_\ell^{(m)}}{p_{\mathrm{rej}}^{(m)}}.
\]
Under the shot-cost assumptions above, the PEC advantage criterion applies
when these rates agree, to the retained order, with the effective-channel
parameters above. The transition-matrix expansion of $\mathrm{SE}(m)$
does not require this additional identification.
The remainder of Appendix~\ref{app:se_efficiency_details} in this manuscript then develops the
transition-matrix perturbative analysis of $\mathrm{SE}(m)$ as a function of $m$ under circuit-level
Pauli-stochastic noise.

\section{Deferred derivations for Subsection~\ref{sec:se_firstord}}
\label{app:se_efficiency_details}

\subsection{Existence of first-order expansions and proof of Lemma~\ref{lem:se_A1_affine}}

\paragraph{First-order expansions under Pauli-stochastic noise.}
In a circuit-level stochastic Pauli model with per-location error probability $\varepsilon$, each
fault pattern $\sigma$ has probability $\Pr_\varepsilon(\sigma)$ that is a polynomial in $\varepsilon$
(e.g., $\Pr_\varepsilon(\sigma)=\varepsilon^{|\sigma|}(1-\varepsilon)^{F-|\sigma|}$ in the i.i.d.\ model),
and the transition matrices are finite sums over patterns. Hence each matrix entry of $T_\varepsilon$
and $G_\varepsilon$ is analytic at $\varepsilon=0$, so the Taylor expansions
$T_\varepsilon=T_0+\varepsilon T_1+O(\varepsilon^2)$ and $G_\varepsilon=I+\varepsilon G_1+O(\varepsilon^2)$ exist, with
$T_1=\left.\frac{d}{d\varepsilon}T_\varepsilon\right|_{\varepsilon=0}$ and
$G_1=\left.\frac{d}{d\varepsilon}G_\varepsilon\right|_{\varepsilon=0}$.

\paragraph{Remainder for an $m$-gate interval.}
Because an $m$-gate interval contains $O(m)$ independent fault locations, the order-$\varepsilon^2$
Taylor coefficients arise from choosing two fault locations and hence scale as $O(m^2)$. In
particular,
\[
(G_\varepsilon)^m = I + m\varepsilon G_1 + O(m^2\varepsilon^2),
\]
entrywise (and hence in any induced matrix norm). Consequently,
\[
M^{(m)}_\varepsilon=T_\varepsilon(G_\varepsilon)^m
= T_0 + \varepsilon(T_1+mT_0G_1)+O(m^2\varepsilon^2),
\]
and taking accepted blocks yields
$A^{(m)}_\varepsilon=A_0+\varepsilon A_1^{(m)}+O(m^2\varepsilon^2)$.

\begin{proof}[Full proof of Lemma~\ref{lem:se_A1_affine}]
Write $M^{(m)}_\varepsilon=T_\varepsilon(G_\varepsilon)^m$. Differentiate at $\varepsilon=0$:
\[
\left.\frac{d}{d\varepsilon}M^{(m)}_\varepsilon\right|_{\varepsilon=0}
=
\left.\frac{d}{d\varepsilon}T_\varepsilon\right|_{\varepsilon=0}(G_0)^m
+
T_0\left.\frac{d}{d\varepsilon}(G_\varepsilon^m)\right|_{\varepsilon=0}.
\]
Since $G_0=I$, it remains to compute $\left.\frac{d}{d\varepsilon}(G_\varepsilon^m)\right|_{\varepsilon=0}$.
Using the product rule for $m$ factors,
\[
\frac{d}{d\varepsilon}(G_\varepsilon^m)
=
\sum_{j=0}^{m-1} G_\varepsilon^{\,j}\left(\frac{dG_\varepsilon}{d\varepsilon}\right)G_\varepsilon^{\,m-1-j}.
\]
Evaluating at $\varepsilon=0$ gives
$\left.\frac{d}{d\varepsilon}(G_\varepsilon^m)\right|_{\varepsilon=0}=\sum_{j=0}^{m-1}I^jG_1I^{m-1-j}=mG_1$.
Therefore
$\left.\frac{d}{d\varepsilon}M^{(m)}_\varepsilon\right|_{\varepsilon=0}=T_1+mT_0G_1$.
Taking accepted blocks is linear, so
$A_1^{(m)}=(T_1)_{\mathrm{acc}}+m(T_0G_1)_{\mathrm{acc}}$.
\end{proof}

\subsection{Two-cycle ratio expansions (Lemma~\ref{lem:se_two_cycle})}

We use the first-order ratio identity
\begin{equation}
\frac{1+\varepsilon u + O(\varepsilon^2)}{1+\varepsilon v + O(\varepsilon^2)}
=
1+\varepsilon(u-v)+O(\varepsilon^2),
\label{eq:app_ratio_id}
\end{equation}
which follows from $(1+\varepsilon v)^{-1}=1-\varepsilon v+O(\varepsilon^2)$.

\begin{proof}[Full proof of Lemma~\ref{lem:se_two_cycle}]
Let $A_\varepsilon=A_0+\varepsilon A_1+O(\varepsilon^2)$ with $A_0=\mathrm{diag}(I_L,0_X)$, so $A_0^2=A_0$ and
$A_0e_\psi=e_\psi$. (When applying this lemma to $m$-units, Lemma~\ref{lem:se_A1_affine} gives the
sharper expansion $A^{(m)}_\varepsilon=A_0+\varepsilon A_1^{(m)}+O(m^2\varepsilon^2)$, and the same
algebra below yields the conclusions with $O(m^2\varepsilon^2)$ remainders.)

\paragraph{Step 1: expand $A_\varepsilon e_\psi$ and $A_\varepsilon^2 e_\psi$.}
We have
\[
A_\varepsilon e_\psi = e_\psi + \varepsilon A_1e_\psi + O(\varepsilon^2),
\]
and
\[
A_\varepsilon^2 = (A_0+\varepsilon A_1)^2 + O(\varepsilon^2)
= A_0 + \varepsilon(A_0A_1 + A_1A_0) + O(\varepsilon^2),
\]
so
\[
A_\varepsilon^2 e_\psi
= e_\psi + \varepsilon(A_0A_1e_\psi + A_1e_\psi) + O(\varepsilon^2).
\]

\paragraph{Step 2: expand $p_{\mathrm{acc}}(1)$ and $p_{\mathrm{acc}}(2)$.}
Since $A_\varepsilon$ is substochastic and $e_\psi\ge 0$,
$p_{\mathrm{acc}}(t)=\|A_\varepsilon^t e_\psi\|_1=\mathbf{1}_{L+X}^T A_\varepsilon^t e_\psi$.
Thus
\[
p_{\mathrm{acc}}(1)=1+\varepsilon\,\mathbf{1}_{L+X}^T A_1e_\psi + O(\varepsilon^2),
\]
and using $\mathbf{1}_{L+X}^TA_0=\mathbf{1}_L^T$,
\[
p_{\mathrm{acc}}(2)=1+\varepsilon\Big(\mathbf{1}_{L}^TA_1e_\psi+\mathbf{1}_{L+X}^TA_1e_\psi\Big)+O(\varepsilon^2).
\]
Apply \eqref{eq:app_ratio_id} to $p_{\mathrm{rej}}=1-\frac{p_{\mathrm{acc}}(2)}{p_{\mathrm{acc}}(1)}$,
to obtain
\[
p_{\mathrm{rej}}=-\varepsilon\,\mathbf{1}_L^TA_1e_\psi+O(\varepsilon^2)
=-\varepsilon\,\mathbf{1}_L^T(A_1)_{LL}e_\psi+O(\varepsilon^2).
\]

\paragraph{Step 3: expand $\mathrm{ASC}(1)$ and $\mathrm{ASC}(2)$.}
For $t=1$,
\[
e_\psi^TA_\varepsilon e_\psi = 1+\varepsilon\,e_\psi^TA_1e_\psi+O(\varepsilon^2),
\]
so
\[
\mathrm{ASC}(1)
=
\frac{1+\varepsilon\,e_\psi^TA_1e_\psi+O(\varepsilon^2)}{1+\varepsilon\,\mathbf{1}_{L+X}^TA_1e_\psi+O(\varepsilon^2)}
=
1+\varepsilon\big(e_\psi^TA_1e_\psi-\mathbf{1}_{L+X}^TA_1e_\psi\big)+O(\varepsilon^2).
\]
For $t=2$, use $e_\psi^TA_0=e_\psi^T$ and $A_0e_\psi=e_\psi$ to get
\[
e_\psi^TA_\varepsilon^2e_\psi
=
1+\varepsilon\,e_\psi^T(A_0A_1+A_1A_0)e_\psi+O(\varepsilon^2)
=
1+2\varepsilon\,e_\psi^TA_1e_\psi+O(\varepsilon^2),
\]
and hence
\[
\mathrm{ASC}(2)
=
\frac{1+2\varepsilon\,e_\psi^TA_1e_\psi+O(\varepsilon^2)}{1+\varepsilon(\mathbf{1}_{L}^TA_1e_\psi+\mathbf{1}_{L+X}^TA_1e_\psi)+O(\varepsilon^2)}
=
1+\varepsilon\big(2e_\psi^TA_1e_\psi-\mathbf{1}_{L}^TA_1e_\psi-\mathbf{1}_{L+X}^TA_1e_\psi\big)+O(\varepsilon^2).
\]
Apply \eqref{eq:app_ratio_id} to $\frac{\mathrm{ASC}(2)}{\mathrm{ASC}(1)}$ and simplify to obtain
\[
\varepsilon_\ell
=
1-\frac{\mathrm{ASC}(2)}{\mathrm{ASC}(1)}
=
\varepsilon\big(\mathbf{1}_L^TA_1e_\psi-e_\psi^TA_1e_\psi\big)+O(\varepsilon^2),
\]
which equals $\varepsilon\sum_{\substack{i\in L\\ i\neq \psi}}(A_1)_{i,\psi}+O(\varepsilon^2)$ on the $LL$ block.
\end{proof}

\subsection{Gate faults do not contribute to $\varepsilon_\ell^{(m)}$ at first order (details for Proposition~\ref{prop:se_first_order_coeffs})}

Let $P$ be the effective Pauli induced by a \emph{single} gate fault in the interval. By
Assumption~\ref{ass:eff_gate}, $\mathrm{wt}(P)\le c_{\mathrm{prop}}<d$.
If $P$ has nontrivial syndrome it is rejected by the ideal check $T_0$.
If $P$ has trivial syndrome, then $P\in N(\mathcal{S})$. But any nontrivial logical Pauli has
minimum weight $d$, so $\mathrm{wt}(P)<d$ forces $P\in\mathcal{S}$, hence it acts trivially on the
logical label. Therefore at $O(\varepsilon)$ the gate contribution $(T_0G_1)_{\mathrm{acc}}$ has no
off-diagonal logical transitions, implying the $m$-dependent term vanishes in
\eqref{eq:se_ell_from_A1}. This completes the $m$-independence claim.

\subsection{Remainder bookkeeping for Theorem~\ref{thm:se_first_order}}
\label{app:se_first_order_remainder}
Using $\varepsilon_p^{(m)}=m\varepsilon+O(m^2\varepsilon^2)$,
$p_{\mathrm{rej}}^{(m)}=\varepsilon(\alpha+\beta m)+O(m^2\varepsilon^2)$, and
$\varepsilon_\ell^{(m)}=\gamma\varepsilon+O(m^2\varepsilon^2)$, we have
\[
\mathrm{SE}(m)
=
\frac{\varepsilon(m-\gamma)+O(m^2\varepsilon^2)}{\varepsilon(\alpha+\beta m)+O(m^2\varepsilon^2)}
=
\frac{m-\gamma+O(m^2\varepsilon)}{\alpha+\beta m+O(m^2\varepsilon)}.
\]
Expanding the ratio to first order gives
\[
\mathrm{SE}(m)
=
\frac{m-\gamma}{\alpha+\beta m}
+\;O\!\left(
\varepsilon\,\frac{m^2}{\alpha+\beta m}
+\varepsilon\,\frac{m^3}{(\alpha+\beta m)^2}
\right)
=
\frac{m-\gamma}{\alpha+\beta m}
+\;O\!\left(
\varepsilon\,\frac{m^2}{\alpha+\beta m}\left(1+\frac{m}{\alpha+\beta m}\right)
\right).
\]

which matches Theorem~\ref{thm:se_first_order}.

\section{Supplementary details for Section~\ref{sec:symmetry_model_reduction}: Symmetry-Based Model Reduction}
\label{app:symmetry_reduction_details}

This appendix collects (i) brief classical lumping background (textbook material),
(ii) the quantum-to-classical symmetry lemmas used in Proposition~\ref{prop:ancilla_clean_symmetry},
and (iii) extended orbit identification for the $[[4,2,2]]$ example.

\subsection{Main symmetry-to-orbit-quotient statement (formal)}
\label{app:main_symmetry_orbit_statement}

\begin{proposition}[Protocol symmetries induce exact orbit quotients under homogeneous ancilla-clean noise]
\label{prop:ancilla_clean_symmetry}
Let $G$ be a group of implemented wire-permutation symmetries in the
fault-path sense specified in Section~\ref{sec:symmetry_model_reduction},
and let $U_g$ denote the corresponding permutation unitary for $g\in G$.
Assume furthermore that:

\begin{enumerate}
\item $U_g$ is monomial in the joint basis $\mathcal{B}$, hence induces a permutation $\pi_g'$ of joint
labels $(q,y)$.

\item $U_g$ preserves the QED acceptance set
\[
\mathcal{A}:=\{(q,0^a): q\in\mathcal{B}_{\mathcal{C}}\},
\]
so that $\pi_g'$ restricts to a permutation of accepted boundary labels and induces a permutation
$\pi_g$ of $\overline{\mathcal{Q}}$ (with $\pi_g(R)=R$).
\end{enumerate}

Under Assumption~\ref{assump:ancilla_clean_checks}, $T_\varepsilon=\mathcal{R}(T'_\varepsilon)$ and $G_\eta$
are $\pi_g$-invariant for all $g\in G$, and hence every clock-cycle matrix
\[
M^{(m)}_{\varepsilon,\eta} \;=\; T_\varepsilon (G_\eta)^m
\]
is $\pi_g$-invariant. Consequently, the orbit partition of $\overline{\mathcal{Q}}$ under $G$ is
strongly lumpable and yields an \emph{exact} quotient chain whose dimension equals the number of
$G$-orbits.
\end{proposition}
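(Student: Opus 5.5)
The proof is a chain of three transfers: channel covariance, then permutation invariance of the unreduced matrices, then invariance of the reduced kernels, followed by standard orbit lumping.

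\textbf{Step 1: covariance of the noisy channels.} Lemma~\ref{lem:iid_preserves_symmetry}, applied under Assumption~\ref{assump:ancilla_clean_checks}, gives covariance of both the noisy check channel and the effective gate-noise channel:
\[
\mathcal{E}(U_g\rho U_g^\dagger)=U_g\,\mathcal{E}(\rho)\,U_g^\dagger \qquad \text{for all } g\in G.
\]
I expect the main subtlety here. Item 2 of Assumption~\ref{assump:ancilla_clean_checks} (ancilla-clean checks) is needed because ancilla faults need not be relabeled consistently when $g$ acts only on data wires. Item 1 makes the data-fault probabilities invariant under the relabeling. The implemented-symmetry hypothesis (Proposition~\ref{prop:fault_path_covariance}) ensures each relabeled faulted circuit equals the conjugated faulted circuit.

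\textbf{Step 2: from covariance to invariance of $T_{\mathcal{B}}$.} By hypothesis 1, $U_g\Pi_jU_g^\dagger=\Pi_{\pi'_g(j)}$, since the phases cancel in the projector. Covariance then gives
\[
\bigl(T_{\mathcal{B}}(\mathcal{E})\bigr)_{\pi'_g(i),\pi'_g(j)}=\Tr\!\bigl(U_g\Pi_iU_g^\dagger\,\mathcal{E}(U_g\Pi_jU_g^\dagger)\bigr)=\Tr\!\bigl(\Pi_i\,\mathcal{E}(\Pi_j)\bigr).
\]
This is exactly $P_{\pi'_g}T'P_{\pi'_g}^{-1}=T'$, so $T'_\varepsilon$ is invariant, and likewise $G_\eta^{(D)}$ on $\mathcal{B}_{\mathcal{C}}$.

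\textbf{Step 3: descent through $\mathcal{R}$ and composition.} By hypothesis 2, $\pi'_g$ maps $(q,0^a)$ to $(\pi_g(q),0^a)$. Equation~\eqref{eq:reduce_accept_block} then gives $T[\pi_g q',\pi_g q]=T[q',q]$. The rejection row~\eqref{eq:reduce_reject_row} is one minus a column sum over all $q'$, and that sum is reindexed bijectively by $\pi_g$, so it is invariant. The absorbing entries~\eqref{eq:reduce_absorb} are invariant because $\pi_g(R)=R$. For $G_\eta$, the block form~\eqref{eq:G_eta_block} with $\pi_g(R)=R$ gives invariance directly. Products of $\pi_g$-invariant matrices are $\pi_g$-invariant, since the conjugations telescope. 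Hence $M^{(m)}_{\varepsilon,\eta}=T_\varepsilon(G_\eta)^m$ is $\pi_g$-invariant for every $g\in G$.

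\textbf{Step 4: exact orbit quotient.} Take orbits $O,O'$ and $x,y\in O$, and choose $g$ with $\pi_g x=y$. Invariance gives
\[
\sum_{z\in O'}M_{z,y}=\sum_{z\in O'}M_{\pi_g^{-1}z,\,x}=\sum_{z\in O'}M_{z,x},
\]
using that $\pi_g$ permutes $O'$. This is the Kemeny--Snell strong-lumpability condition~\cite{kemeny1976finite}. The quotient chain therefore evolves autonomously on orbit populations, and its dimension is the number of $G$-orbits.
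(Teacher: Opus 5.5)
Your proposal is correct and follows the same chain as the paper's proof: fault-path relabeling gives $U_g$-covariance (Proposition~\ref{prop:fault_path_covariance}), monomiality turns covariance into invariance of $T_{\mathcal{B}}$ (Lemma~\ref{lem:covariance_implies_TM_symmetry}), invariance descends through $\mathcal{R}$ (Lemma~\ref{lem:reduction_preserves_symmetry}), and orbit lumping finishes. The only difference is ordering: the paper composes the channels before taking $T_{\mathcal{B}}$ and reducing, whereas you reduce $T'_\varepsilon$ and treat $G_\eta$ separately and then use that $\pi_g$-invariance is closed under products, which matches the stated form $T_\varepsilon(G_\eta)^m$ for arbitrary $m$ slightly more directly.
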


\noindent A proof appears in Subsection~\ref{app:ancilla_clean_proofs} after we collect the needed
background lemmas.

\subsection{Strong lumpability and orbit quotients (Background)}
\label{app:lumping_basic_proofs}

\begin{definition}[Strong lumpability for column-(sub)stochastic kernels]
\label{def:strong_lumpability}
Let $\Omega$ be finite and let $M\ge 0$ be column-substochastic on $\Omega$.
Let $\mathcal{P}=\{C_1,\dots,C_m\}$ be a partition of $\Omega$ into nonempty blocks.
Define the lumping map $L\in\{0,1\}^{m\times|\Omega|}$ by
$L_{a,i}:=\mathbf{1}[i\in C_a]$.
We say $\mathcal{P}$ is \emph{strongly lumpable} for $M$ if there exists a nonnegative
column-substochastic $\overline{M}\in\mathbb{R}^{m\times m}$ such that
\[
LM=\overline{M}L.
\]
\end{definition}

\begin{theorem}[Orbit lumping (standard)]
\label{thm:orbit_lumping_app}
Let $M$ be nonnegative column-substochastic on $\Omega$, and let $G\le S(\Omega)$ be a symmetry group
of $M$ in the sense of Definition~\ref{def:matrix_symmetry}. Partition $\Omega$ into $G$-orbits
$\Omega=\bigsqcup_{a=1}^m \mathcal{O}_a$. Then the orbit partition is strongly lumpable for $M$.
Moreover, the quotient kernel can be taken as
\[
\overline{M}_{a,b}:=\sum_{i\in\mathcal{O}_a} M_{i,j}\qquad\text{for any } j\in\mathcal{O}_b.
\]
\end{theorem}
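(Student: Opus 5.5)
The plan is to verify the two defining properties directly: first that the proposed $\overline{M}$ is well defined, i.e.\ independent of the representative $j\in\mathcal{O}_b$, and then that it satisfies $LM=\overline{M}L$ and is nonnegative and column-substochastic. The only input is the symmetry relation $M=P_\pi M P_\pi^{-1}$ for every $\pi\in G$, which I will first rewrite entrywise. With the convention $(P_\pi)_{i,j}=\mathbf{1}[i=\pi(j)]$, it reads
\[
M_{\pi(i),\pi(j)}=M_{i,j}\qquad\text{for all } i,j\in\Omega,\ \pi\in G .
\]
If the opposite convention for $P_\pi$ is used, the same identity holds with $\pi^{-1}$ in place of $\pi$. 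Since $G$ is a group, this causes no difficulty.

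\textbf{Well-definedness.} Take $j,j'\in\mathcal{O}_b$, and pick $\pi\in G$ with $\pi(j)=j'$. Because $\mathcal{O}_a$ is $G$-invariant, $\pi$ restricts to a bijection of $\mathcal{O}_a$. Reindexing the sum by $i\mapsto\pi(i)$ then gives
\[
\sum_{i\in\mathcal{O}_a}M_{i,j'}=\sum_{i\in\mathcal{O}_a}M_{\pi(i),\pi(j)}=\sum_{i\in\mathcal{O}_a}M_{i,j}.
\]
So $\overline{M}_{a,b}$ does not depend on the choice of representative. This is the only step where the symmetry is actually used, and it is the point I would check most carefully, in particular that the convention for $P_\pi$ is handled consistently.

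\textbf{Intertwining relation.} Fix an orbit $a$ and a state $j\in\Omega$, and let $b$ be the orbit containing $j$. Then
\[
(LM)_{a,j}=\sum_{i\in\mathcal{O}_a}M_{i,j}=\overline{M}_{a,b},
\]
by definition of $\overline{M}$ and the well-definedness just shown. On the other side, $(\overline{M}L)_{a,j}=\sum_{c}\overline{M}_{a,c}\mathbf{1}[j\in\mathcal{O}_c]=\overline{M}_{a,b}$. Hence $LM=\overline{M}L$.

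\textbf{Nonnegativity and substochasticity.} Nonnegativity of $\overline{M}$ is immediate from $M\ge0$. For column sums, fix $b$ and any $j\in\mathcal{O}_b$. The orbits partition $\Omega$, so
\[
\sum_a\overline{M}_{a,b}=\sum_a\sum_{i\in\mathcal{O}_a}M_{i,j}=\sum_{i\in\Omega}M_{i,j}\le1 .
\]
This inequality is an equality whenever column $j$ of $M$ is stochastic, so $\overline{M}$ is column-substochastic (and stochastic when $M$ is). Therefore $\overline{M}$ meets all the requirements of Definition~\ref{def:strong_lumpability}, and the orbit partition is strongly lumpable for $M$.
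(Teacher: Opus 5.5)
Your proof is correct and follows the same idea as the paper: $G$-invariance forces every state in an orbit $\mathcal{O}_b$ to send the same total mass into each orbit $\mathcal{O}_a$, which is exactly the Kemeny--Snell lumpability criterion the paper cites. The only difference is that you verify $LM=\overline{M}L$ and the column-substochasticity of $\overline{M}$ directly instead of appealing to the reference, which also makes the substochastic case explicit.
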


\begin{proof}[Proof sketch / reference]
We apply the lumpability criterion of Kemeny and Snell~\cite[Section~6.3, Theorem~6.3.2, p.~124]{kemeny1976finite}. Invariance implies that for any two states
$j,j'$ in the same orbit and any fixed orbit $\mathcal{O}_a$, the total mass sent into $\mathcal{O}_a$
is the same, so the orbit partition satisfies the strong-lumpability condition.
\end{proof}

\begin{lemma}[Which observables are preserved by a given lumping]
\label{lem:observable_preserved}
Let $\mathcal{P}=\{C_1,\dots,C_m\}$ be a partition of $\Omega$ with lumping map $L$.
A function $f:\Omega\to\mathbb{R}$ (viewed as a vector in $\mathbb{R}^{|\Omega|}$) factors through the
lumped state if and only if it is constant on each block; equivalently, iff there exists
$\overline{f}\in\mathbb{R}^m$ such that $f=L^T\overline{f}$. In that case $f^Tv=\overline{f}^{\,T}(Lv)$
depends only on coarse populations.
\end{lemma}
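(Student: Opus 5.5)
The plan is to prove the two directions of the equivalence separately, writing the lumped vector as $Lv$ with $(Lv)_a=\sum_{i\in C_a}v_i$. Throughout we use only the definition of the lumping map $L$ from Definition~\ref{def:strong_lumpability}: each column of $L$ is a standard basis vector $e_{a(i)}\in\mathbb{R}^m$, where $a(i)$ is the unique block containing $i$. This holds because $\mathcal{P}$ is a partition of $\Omega$.

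\emph{Sufficiency.} Suppose $f$ is constant on blocks, with value $\overline{f}_a$ on $C_a$. Computing entrywise gives $(L^T\overline{f})_i=\sum_a L_{a,i}\overline{f}_a=\overline{f}_{a(i)}=f_i$, so $f=L^T\overline{f}$. Conversely, if $f=L^T\overline{f}$, the same computation shows $f_i=\overline{f}_{a(i)}$, which depends only on the block of $i$. This establishes the equivalence of the two characterizations in the statement. The identity $f^Tv=(L^T\overline{f})^Tv=\overline{f}^{\,T}(Lv)$ is then just associativity of matrix multiplication, so $f^Tv$ depends on $v$ only through the coarse populations $Lv$.

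\emph{Necessity.} Here we must fix what ``factors through the lumped state'' means. I would take it to mean that $f^Tv$ is determined by $Lv$ for all population vectors $v$: there is a function $\phi$ with $f^Tv=\phi(Lv)$, or at least $Lv=Lv'$ implies $f^Tv=f^Tv'$. Suppose $f$ is not constant on some block, so there exist $i,j\in C_a$ with $f_i\neq f_j$. Take $v=e_i$ and $v'=e_j$. These are valid probability vectors with $Lv=Lv'=e_a$, yet $f^Tv=f_i\neq f_j=f^Tv'$. This contradicts factoring, so $f$ must be constant on each block.

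I expect no step to be a genuine obstacle. The only delicate point is pinning down the informal phrase ``factors through'' precisely enough for the converse. Choosing the basis vectors $e_i,e_j$ as witnesses sidesteps any linearity assumption on $\phi$, and it stays within nonnegative normalized populations, so the argument applies to the accepted-shot diagonal observables used earlier (e.g.\ ASC numerators, acceptance).
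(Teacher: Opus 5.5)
Your proof is correct and is essentially the paper's argument: the paper only says the result is immediate from the definition of $L$, and you have written out the entrywise identity $(L^T\overline{f})_i=\overline{f}_{a(i)}$. You also usefully make the converse precise by using the witnesses $e_i,e_j$ from a single block, both of which satisfy $Le_i=Le_j=e_a$.
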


\begin{proof}
Immediate from the definition of $L$.
\end{proof}

\subsection{Covariance implies symmetry of the induced transition matrix}
\label{app:covariance_symmetry_proofs}

\begin{lemma}[Covariance $\Rightarrow$ symmetry of the induced transition matrix]
\label{lem:covariance_implies_TM_symmetry}
Let $\mathcal{B}=\{\ket{b_i}\}$ be an orthonormal basis with projectors $\Pi_i=\ket{b_i}\!\bra{b_i}$.
Suppose a unitary $U$ is monomial in $\mathcal{B}$, so $U\ket{b_i}=e^{i\theta_i}\ket{b_{\pi(i)}}$ for a
permutation $\pi$. If a CP map $\mathcal{E}$ is $U$-covariant,
$\mathcal{E}(U\rho U^\dagger)=U\mathcal{E}(\rho)U^\dagger$ for all $\rho$, then
\[
T_{\mathcal{B}}(\mathcal{E}) = P_\pi\,T_{\mathcal{B}}(\mathcal{E})\,P_\pi^{-1}.
\]
\end{lemma}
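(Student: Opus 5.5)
The plan is a direct entrywise computation from the definition $\bigl(T_{\mathcal{B}}(\mathcal{E})\bigr)_{i,j}=\Tr(\Pi_i\,\mathcal{E}(\Pi_j))$ in \eqref{eq:TB_def}. First I will record how $U$ acts on the projectors. Monomiality gives $U\ket{b_j}=e^{i\theta_j}\ket{b_{\pi(j)}}$, and the phase cancels in the outer product, so
\[
U\Pi_jU^\dagger=\Pi_{\pi(j)}, \qquad U^\dagger\Pi_{\pi(i)}U=\Pi_i .
\]
The second identity follows from the first by unitarity.

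Next I will compute the permuted entry. By covariance,
\[
\mathcal{E}(\Pi_{\pi(j)})=\mathcal{E}(U\Pi_jU^\dagger)=U\,\mathcal{E}(\Pi_j)\,U^\dagger .
\]
Cyclicity of the trace then gives
\[
\bigl(T_{\mathcal{B}}(\mathcal{E})\bigr)_{\pi(i),\pi(j)}=\Tr\bigl(\Pi_{\pi(i)}U\mathcal{E}(\Pi_j)U^\dagger\bigr)=\Tr\bigl(U^\dagger\Pi_{\pi(i)}U\,\mathcal{E}(\Pi_j)\bigr)=\bigl(T_{\mathcal{B}}(\mathcal{E})\bigr)_{i,j}
\]
for all $i,j$. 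Covariance is stated for all $\rho$, but $\mathcal{E}$ is linear, so it extends to the operators $\Pi_j$. Alternatively, $\Pi_j$ is itself a state.

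The final step translates this entrywise invariance into the matrix identity, and it is the only place requiring care, namely fixing the permutation-matrix convention. I take $(P_\pi)_{a,b}=\mathbf{1}[a=\pi(b)]$, so that $P_\pi e_b=e_{\pi(b)}$. Then
\[
(P_\pi T P_\pi^{-1})_{\pi(i),\pi(j)}=T_{i,j},
\]
which matches the previous display. Hence $P_\pi T P_\pi^{-1}=T$. Under the opposite convention one obtains the same conclusion with $\pi^{-1}$ in place of $\pi$. Since $\pi^{-1}$ is also an invariance (apply covariance with $U^\dagger$, equivalently invert the identity), the statement holds regardless of convention. I expect no real obstacle beyond this bookkeeping.
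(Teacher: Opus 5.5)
Your proposal is correct and matches the paper's proof: monomiality gives $U\Pi_jU^\dagger=\Pi_{\pi(j)}$, and covariance together with cyclicity of the trace then yields $T_{\pi(i),\pi(j)}=T_{i,j}$, which is equivalent to the conjugation identity. Fixing the permutation-matrix convention explicitly is extra care that the paper leaves implicit in its final ``equivalent to'' step.
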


\begin{proof}
Monomiality gives $U\Pi_iU^\dagger=\Pi_{\pi(i)}$. Covariance gives
$\mathcal{E}(\Pi_{\pi(j)})=\mathcal{E}(U\Pi_jU^\dagger)=U\mathcal{E}(\Pi_j)U^\dagger$.
Then, using cyclicity of trace,
\begin{align*}
T_{\mathcal{B}}(\mathcal{E})_{\pi(i),\pi(j)}
&=\Tr\!\bigl(\Pi_{\pi(i)}\,\mathcal{E}(\Pi_{\pi(j)})\bigr)
=\Tr\!\bigl(U\Pi_iU^\dagger\;U\mathcal{E}(\Pi_j)U^\dagger\bigr) \\
&=\Tr\!\bigl(U\Pi_i\,\mathcal{E}(\Pi_j)\,U^\dagger\bigr)
=\Tr\!\bigl(\Pi_i\,\mathcal{E}(\Pi_j)\bigr)
=T_{\mathcal{B}}(\mathcal{E})_{i,j}.
\end{align*}
Thus $T_{\pi(i),\pi(j)}=T_{i,j}$ for all $i,j$, which is equivalent to the stated conjugation symmetry.
\end{proof}

\subsection{A fault-path relabeling sufficient condition for covariance}
\label{app:fault_path_covariance}

\begin{proposition}[Fault-path relabeling criterion for covariance]
\label{prop:fault_path_covariance}
Consider a circuit with fault locations $\mathcal{L}$ and Pauli fault patterns
$\Sigma=\{(P_\ell)_{\ell\in\mathcal{L}}\}$, with $P_\ell\in\{I,X,Y,Z\}$.
Suppose the implemented noisy map decomposes as
\[
\mathcal{E} = \sum_{\sigma\in\Sigma} \Pr(\sigma)\,\mathcal{E}_\sigma,
\]
where $\mathcal{E}_\sigma$ is obtained by inserting the Pauli faults specified by $\sigma$ into the
ideal circuit. Let $U$ be a unitary. If there exists a bijection $\phi:\Sigma\to\Sigma$ such that
\[
\mathcal{E}_{\phi(\sigma)}(U\rho U^\dagger)=U\,\mathcal{E}_\sigma(\rho)\,U^\dagger
\qquad\forall\sigma,\ \forall\rho,
\]
and $\Pr(\phi(\sigma))=\Pr(\sigma)$ for all $\sigma$, then $\mathcal{E}$ is $U$-covariant.
\end{proposition}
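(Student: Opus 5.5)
The plan is to reindex the fault-pattern sum through the bijection $\phi$ and then use linearity of conjugation by $U$. First, fix an arbitrary state $\rho$ and expand $\mathcal{E}(U\rho U^\dagger)=\sum_{\sigma\in\Sigma}\Pr(\sigma)\,\mathcal{E}_\sigma(U\rho U^\dagger)$. Since $\phi$ is a bijection of the finite set $\Sigma$, we may substitute $\sigma=\phi(\tau)$ and sum over $\tau\in\Sigma$ instead. This gives
\[
\mathcal{E}(U\rho U^\dagger)=\sum_{\tau\in\Sigma}\Pr(\phi(\tau))\,\mathcal{E}_{\phi(\tau)}(U\rho U^\dagger).
\]
Bijectivity is what guarantees that every pattern appears exactly once after the change of variables, so no term is lost or double-counted.

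Second, apply the two hypotheses term by term. Probability invariance gives $\Pr(\phi(\tau))=\Pr(\tau)$. The intertwining relation gives $\mathcal{E}_{\phi(\tau)}(U\rho U^\dagger)=U\,\mathcal{E}_\tau(\rho)\,U^\dagger$. The sum therefore becomes $\sum_\tau \Pr(\tau)\,U\,\mathcal{E}_\tau(\rho)\,U^\dagger$.

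Third, pull $U(\cdot)U^\dagger$ outside the sum, using that conjugation is linear. The remaining sum is $\mathcal{E}(\rho)$ by the assumed decomposition, so $\mathcal{E}(U\rho U^\dagger)=U\mathcal{E}(\rho)U^\dagger$. Because $\rho$ was arbitrary, this is exactly $U$-covariance. The same argument goes through for arbitrary operators $X$ in place of $\rho$, since every step uses only linearity of each $\mathcal{E}_\sigma$.

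I do not expect a genuine obstacle in this argument; the only point needing care is the reindexing step. Surjectivity of $\phi$ ensures every $\sigma$ in the original sum is hit by some $\tau$, and injectivity ensures it is hit only once. If $\Sigma$ were infinite, the rearrangement would also need absolute convergence, but here $\Sigma$ is finite because there are finitely many fault locations with finitely many Paulis each.
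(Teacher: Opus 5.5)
Your proof is correct and follows essentially the same route as the paper's: reindex the finite fault-pattern sum through the bijection $\phi$, apply probability invariance and the intertwining relation term by term, and pull the conjugation by $U$ outside by linearity. The only cosmetic difference is ordering: the paper applies the intertwining relation first, writing each term with $\mathcal{E}_{\phi^{-1}(\sigma)}$, and then substitutes $\sigma'=\phi^{-1}(\sigma)$, whereas you substitute first.
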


\begin{proof}
Compute, for any $\rho$,
\begin{align*}
\mathcal{E}(U\rho U^\dagger)
&=\sum_{\sigma\in\Sigma}\Pr(\sigma)\,\mathcal{E}_\sigma(U\rho U^\dagger)
=\sum_{\sigma\in\Sigma}\Pr(\sigma)\,U\mathcal{E}_{\phi^{-1}(\sigma)}(\rho)U^\dagger \\
&=\sum_{\sigma'\in\Sigma}\Pr(\phi(\sigma'))\,U\mathcal{E}_{\sigma'}(\rho)U^\dagger
=U\Bigl(\sum_{\sigma'\in\Sigma}\Pr(\sigma')\,\mathcal{E}_{\sigma'}(\rho)\Bigr)U^\dagger
=U\,\mathcal{E}(\rho)\,U^\dagger.
\end{align*}
\end{proof}

\subsection{Symmetry descent through QED reduction}
\label{app:qed_reduction_symmetry_proofs}

\begin{lemma}[QED reduction preserves acceptance-set symmetries]
\label{lem:reduction_preserves_symmetry}
Let $T'$ be a column-stochastic matrix on joint labels $(q,y)\in\mathcal{B}_{\mathcal{C}}\times\{0,1\}^a$
and let $T=\mathcal{R}(T')$ be its QED reduction (Definition~\ref{def:reduction_map}).
Let $\pi'$ be a permutation of joint labels that preserves
$\mathcal{A}=\{(q,0^a)\}$. Define an induced permutation $\pi$ on $\overline{\mathcal{Q}}$ by
$\pi(q)=q'$ whenever $\pi'(q,0^a)=(q',0^a)$ and $\pi(R)=R$.
If $T'$ is $\pi'$-invariant, then $T$ is $\pi$-invariant.
\end{lemma}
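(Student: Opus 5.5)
The plan is to check $T_{\pi(i),\pi(j)}=T_{i,j}$ entrywise on $\overline{\mathcal{Q}}$, splitting into cases according to whether each index is an accepted label $q\in\mathcal{B}_{\mathcal{C}}$ or the rejection state $R$. By Definition~\ref{def:matrix_symmetry}, $\pi$-invariance $T=P_\pi T P_\pi^{-1}$ is equivalent to $T[\pi(i'),\pi(i)]=T[i',i]$ for all $i,i'$. Likewise, $\pi'$-invariance of $T'$ says $T'[\pi'(u'),\pi'(u)]=T'[u',u]$ for all joint labels $u,u'$.

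First I will confirm that $\pi$ is a well-defined permutation of $\overline{\mathcal{Q}}$. Since $\pi'$ is a bijection of the finite joint label set and maps $\mathcal{A}$ into itself, it restricts to a bijection of $\mathcal{A}$. Identifying $\mathcal{A}$ with $\mathcal{B}_{\mathcal{C}}$ via $(q,0^a)\leftrightarrow q$ therefore gives a permutation of $\mathcal{B}_{\mathcal{C}}$. Extending it by $\pi(R)=R$ yields a permutation of $\overline{\mathcal{Q}}$. A consequence I will use later is that $\pi'$ also maps the complement $\mathcal{A}^c$ bijectively onto itself.

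The case analysis then goes as follows.
\begin{enumerate}
\item \emph{Accepted block.} For $q,q'\in\mathcal{B}_{\mathcal{C}}$, Eq.~\eqref{eq:reduce_accept_block} gives
$T[\pi(q'),\pi(q)]=T'[(\pi(q'),0^a),(\pi(q),0^a)]=T'[\pi'(q',0^a),\pi'(q,0^a)]$. By $\pi'$-invariance this equals $T'[(q',0^a),(q,0^a)]=T[q',q]$.
\item \emph{Rejection row.} By Eq.~\eqref{eq:reduce_reject_row}, $T[R,\pi(q)]=1-\sum_{q''}T'[(q'',0^a),(\pi(q),0^a)]$. Since $\pi$ is a bijection of $\mathcal{B}_{\mathcal{C}}$, I reindex the sum by $q''=\pi(q''')$. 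Each term then becomes $T'[\pi'(q''',0^a),\pi'(q,0^a)]=T'[(q''',0^a),(q,0^a)]$, so $T[R,\pi(q)]=T[R,q]$.
\item \emph{Column $R$.} Eq.~\eqref{eq:reduce_absorb} gives $T[\pi(q),R]=0=T[q,R]$ and $T[R,R]=1$, and $\pi$ fixes $R$.
\end{enumerate}
Together these cases cover every entry of $T$.

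I expect the only real subtlety to be the reindexing in the rejection row. The sum there runs over accepted targets only, so the argument depends on $\pi'$ preserving $\mathcal{A}$ \emph{exactly}, which is what makes the substitution $q''\mapsto\pi(q''')$ a bijection of the summation range. If I instead write the rejection entry as a sum over $y\neq0^a$ of $T'[(q'',y),(q,0^a)]$, I need the complementary fact that $\pi'$ preserves $\mathcal{A}^c$. That fact follows from bijectivity of $\pi'$, as established in the first step, so either form of the rejection entry goes through.
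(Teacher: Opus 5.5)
Your proposal is correct and follows the same route as the paper. The accepted-block step uses the identical chain of equalities, and your explicit reindexing of the rejection row, together with your check that $\pi$ is a well-defined bijection, spells out what the paper compresses into ``the rejection row is determined by accepted-column sums.''
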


\begin{proof}
For $q,q'\in\mathcal{B}_{\mathcal{C}}$, QED reduction gives
$T[q',q]=T'[(q',0^a),(q,0^a)]$.
If $T'$ is $\pi'$-invariant and $\pi'$ preserves $\mathcal{A}$, then
\[
T'[(q',0^a),(q,0^a)] = T'[\pi'(q',0^a),\pi'(q,0^a)]
= T'[(\pi(q'),0^a),(\pi(q),0^a)],
\]
so $T[\pi(q'),\pi(q)]=T[q',q]$, i.e.\ the accepted block is $\pi$-invariant.
The rejection row is determined by accepted-column sums, hence is also invariant; and $R$ is absorbing.
\end{proof}

\subsection{Proof of Proposition~\ref{prop:ancilla_clean_symmetry}}
\label{app:ancilla_clean_proofs}

\begin{proof}[Proof of Proposition~\ref{prop:ancilla_clean_symmetry}]
By the implemented-symmetry hypothesis and
Assumption~\ref{assump:ancilla_clean_checks}, for each $g\in G$ the fault-path
relabeling induced by $U_g$ preserves probabilities and conjugates the
faulted maps, so Proposition~\ref{prop:fault_path_covariance} implies that both
$\mathcal{G}_\eta$ and $\mathcal{T}_\varepsilon$ are $U_g$-covariant. Hence their composition
$\mathcal{E}_{\varepsilon,\eta}=\mathcal{T}_\varepsilon\circ\mathcal{G}_\eta$ is $U_g$-covariant.

Because $U_g$ is monomial in the joint basis $\mathcal{B}$, Lemma~\ref{lem:covariance_implies_TM_symmetry}
implies $T_{\mathcal{B}}(\mathcal{E}_{\varepsilon,\eta})$ is invariant under the induced permutation
$\pi_g'$ of joint labels. Since $\pi_g'$ preserves the acceptance set $\mathcal{A}$, Lemma~\ref{lem:reduction_preserves_symmetry}
implies the reduced kernel $M_{\varepsilon,\eta}=\mathcal{R}(T_{\mathcal{B}}(\mathcal{E}_{\varepsilon,\eta}))$
is invariant under the induced permutation $\pi_g$ on $\overline{\mathcal{Q}}$ (fixing $R$).
Orbit lumping (Theorem~\ref{thm:orbit_lumping_app}) then yields strong lumpability of the orbit partition.
\end{proof}

\subsection{Example: the $[[4,2,2]]$ code orbit quotient under $S_4$}
\label{app:422_example}

We provide explicit representatives and a proof of the orbit decomposition summarized in
Example~\ref{ex:422_quotient}.

The $[[4,2,2]]$ code has stabilizer generators
\[
S_X := X^{\otimes 4}, \qquad S_Z := Z^{\otimes 4}.
\]
Fix logical codewords (one convenient choice) as
\begin{align*}
\ket{00}_L &= \tfrac{1}{\sqrt{2}}(\ket{0000}+\ket{1111}), &
\ket{01}_L &= \tfrac{1}{\sqrt{2}}(\ket{0011}+\ket{1100}), \\
\ket{10}_L &= \tfrac{1}{\sqrt{2}}(\ket{0101}+\ket{1010}), &
\ket{11}_L &= \tfrac{1}{\sqrt{2}}(\ket{0110}+\ket{1001}).
\end{align*}
There are $n-k=2$ syndrome bits, hence $4$ syndrome sectors. Choose representatives
$E_{00}=I$, $E_{10}=Z_1$, $E_{01}=X_1$, $E_{11}=Y_1$ and form the code-adapted basis
$\ket{s,\ell}:=E_s\ket{\ell}_L$.

\begin{proposition}[$S_4$-orbit lumping yields an exact $7\times 7$ quotient chain]
\label{prop:422_orbit_lumping}
Assume the implemented protocol/noise satisfy the symmetry conditions of
Proposition~\ref{prop:ancilla_clean_symmetry} for $G=S_4$ acting by data-wire permutations. Then the
reduced kernel $M_{\varepsilon,\eta}$ is $S_4$-invariant, hence the $S_4$-orbit partition of
$\overline{\mathcal{Q}}=\mathcal{B}_{\mathcal{C}}\cup\{R\}$ is strongly lumpable.

With the above basis, the $S_4$-orbits in the accepted space $\mathcal{B}_{\mathcal{C}}$ are exactly:
\begin{align*}
\mathcal{O}_1 &:= \{\ket{00}_L\}, \\
\mathcal{O}_2 &:= \{\ket{01}_L,\ket{10}_L,\ket{11}_L\}, \\
\mathcal{O}_3 &:= \{Z_1\ket{00}_L\}, \\
\mathcal{O}_4 &:= \{Z_1\ket{01}_L,\ Z_1\ket{10}_L,\ Z_1\ket{11}_L\}, \\
\mathcal{O}_5 &:= \{X_1\ket{\ell}_L : \ell\in\{00,01,10,11\}\}, \\
\mathcal{O}_6 &:= \{Y_1\ket{\ell}_L : \ell\in\{00,01,10,11\}\},
\end{align*}
together with $\mathcal{O}_7:=\{R\}$.
\end{proposition}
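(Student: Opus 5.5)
The plan has two parts. First, reduce the invariance and lumpability claims to Proposition~\ref{prop:ancilla_clean_symmetry}. Second, compute the $S_4$-orbits on the $16$ labels of $\mathcal{B}_{\mathcal{C}}$ by hand.

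\textbf{Step 1: hypotheses of Proposition~\ref{prop:ancilla_clean_symmetry}.} The implemented-symmetry and noise hypotheses are assumed in the statement, so only conditions (1) and (2) need checking. For data-wire permutations, $U_g$ acts trivially on the ancillas. Condition (2), preservation of $\mathcal{A}=\{(q,0^a)\}$, is therefore automatic once (1) holds on $\mathcal{B}_{\mathcal{C}}$. For (1), I will show directly that each $U_g$ sends every $\ket{s,\ell}$ to a phase times another $\ket{s',\ell'}$. Invariance of $M_{\varepsilon,\eta}$ and strong lumpability then follow from Proposition~\ref{prop:ancilla_clean_symmetry} and Theorem~\ref{thm:orbit_lumping_app}. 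This same computation also yields the orbits.

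\textbf{Step 2: action of $U_g$ on the code-adapted basis.} Wire permutations commute with $X^{\otimes 4}$ and $Z^{\otimes 4}$, so each syndrome sector is preserved. It suffices to track $U_g E_s\ket{\ell}_L = (U_gE_sU_g^\dagger)\,U_g\ket{\ell}_L$, using $U_gE_sU_g^\dagger\in\{Z_{g(1)},X_{g(1)},Y_{g(1)}\}$ for $s\neq 00$.

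\emph{Code space.} The codeword $\ket{00}_L$ is permutation-invariant. The other three codewords are labelled by the three perfect matchings of $\{1,2,3,4\}$:
\begin{itemize}
\item $\ket{01}_L$ corresponds to $\{12|34\}$,
\item $\ket{10}_L$ corresponds to $\{13|24\}$,
\item $\ket{11}_L$ corresponds to $\{14|23\}$.
\end{itemize}
$U_g$ permutes these codewords exactly as $g$ permutes the matchings. Since $S_4\to S_3$ acts transitively on the matchings, this gives the orbits $\mathcal{O}_1$ and $\mathcal{O}_2$.

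\emph{Nontrivial sectors.} Write $P_{g(1)}=P_1(P_1P_{g(1)})$ for $P\in\{X,Y,Z\}$. Each weight-2 operator $P_1P_j$ commutes with both stabilizers, so it lies in the normalizer and acts on $\{\ket{\ell}_L\}$ as a signed permutation. This establishes monomiality and thus condition (1).

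\textbf{Step 3: reading off the orbits.} I compute the signed permutations induced by $Z_1Z_j$, $X_1X_j$, and $Y_1Y_j$ on the codewords.

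\emph{$Z$-sector.} $Z_1Z_j$ is diagonal in the computational basis. Both components of each codeword have the same parity on qubits $\{1,j\}$, so $Z_1Z_j\ket{\ell}_L=\pm\ket{\ell}_L$. Hence $U_gZ_1\ket{\ell}_L=\pm Z_1\ket{g\cdot\ell}_L$, and the $Z$-sector orbits mirror the code-space orbits. This gives $\mathcal{O}_3$ and $\mathcal{O}_4$.

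\emph{$X$-sector.} $X_1X_j$ flips bits $1$ and $j$. Concretely, $X_1X_2$ sends $0000\mapsto 1100$, so $\ket{00}_L\mapsto\ket{01}_L$. Likewise $X_1X_3$ and $X_1X_4$ send $\ket{00}_L$ to $\ket{10}_L$ and $\ket{11}_L$. The transposition $(1\,j)$ therefore maps $X_1\ket{00}_L$ to $X_1\ket{\ell}_L$ for every $\ell\neq 00$, and the whole $X$-sector forms one orbit $\mathcal{O}_5$.

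\emph{$Y$-sector.} The same argument applies with $Y_1Y_j=-X_1X_jZ_1Z_j$; the extra $Z_1Z_j$ only contributes signs, so this gives $\mathcal{O}_6$.

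A count check confirms the partition: $1+3+1+3+4+4=16$, plus the singleton $\{R\}$.

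\textbf{Main obstacle.} The delicate point is bookkeeping: confirming that every nontrivial-sector image is exactly a basis vector up to phase with the claimed label. This needs the identity $U_gE_sU_g^\dagger=E_s\cdot(\text{normalizer element})$ together with the parity argument for $Z_1Z_j$. It also requires checking that no permutation merges the $Z$-sector orbits. That holds because the map $\ell\mapsto g\cdot\ell$ in the $Z$-sector coincides with the code-space action, which fixes $00$.
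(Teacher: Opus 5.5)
Your proposal is correct and follows essentially the same route as the paper. Invariance and strong lumpability are inherited from Proposition~\ref{prop:ancilla_clean_symmetry} and Theorem~\ref{thm:orbit_lumping_app}, and the orbits come from the permutation-invariance of $\ket{00}_L$, transitivity of $S_4$ on the three perfect matchings, the phase-independence of $Z_j\ket{00}_L$ (your $Z_1Z_j$ parity argument), and transpositions moving $X_1$ and $Y_1$ across wires to fill the size-four $X$- and $Y$-sectors. You additionally spell out bookkeeping the paper leaves implicit, namely conditions (1)--(2) and the signed action $U_gZ_1\ket{\ell}_L=\pm Z_1\ket{g\cdot\ell}_L$.
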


\begin{proof}
$S_4$-invariance and strong lumpability follow from Proposition~\ref{prop:ancilla_clean_symmetry} and
Theorem~\ref{thm:orbit_lumping_app}. It remains to identify the orbits.

$\ket{00}_L$ is invariant under all qubit permutations (it is GHZ$^+$), so it is a singleton orbit.
The three other logical codewords correspond to the three perfect matchings of $\{1,2,3,4\}$, and
$S_4$ acts transitively on perfect matchings, giving $\mathcal{O}_2$.

For $\ket{00}_L$, $Z_j\ket{00}_L$ is independent of $j$ up to phase, giving singleton orbit $\mathcal{O}_3$.
Applying $Z_1$ to $\{\ket{01}_L,\ket{10}_L,\ket{11}_L\}$ yields a set permuted transitively by $S_4$,
giving $\mathcal{O}_4$.

For the $X$-syndrome sector, transpositions in $S_4$ move $X_1$ across data wires, so the orbit of
$X_1\ket{00}_L$ contains all four states $\{X_1\ket{\ell}_L\}_\ell$, and cannot be larger because the
sector has size four. Thus $\mathcal{O}_5$ is the entire $X$-sector. The same argument applies to the
$Y$-sector, giving $\mathcal{O}_6$. Finally, $R$ is absorbing and fixed, so $\mathcal{O}_7=\{R\}$.
\end{proof}

\subsection{Reachability reductions under strict bias}
\label{app:reachability}

\begin{definition}[Reachable set and invariant restriction]
\label{def:reachable_set}
Let $M$ be a nonnegative matrix on a finite state set $\Omega$. For an initial population vector
$v^{(0)}\ge 0$, define the reachable set
\[
\Omega_{\mathrm{reach}}(v^{(0)}) := \bigl\{ i\in\Omega : (M^t v^{(0)})_i > 0 \text{ for some } t\ge 0 \bigr\}.
\]
A subset $S\subseteq\Omega$ is \emph{$M$-invariant} if $M_{i,j}=0$ for all $j\in S$ and $i\notin S$.
\end{definition}

\begin{lemma}[Exact restriction to an invariant reachable set]
\label{lem:reachable_restriction}
Let $M$ be nonnegative and let $S\subseteq\Omega$ be $M$-invariant and contain the support of
$v^{(0)}$. Let $M|_S$ denote the restriction of $M$ to rows/columns indexed by $S$. Then for all
$t\ge 0$, the restriction of $M^t v^{(0)}$ to $S$ equals $(M|_S)^t (v^{(0)}|_S)$.
\end{lemma}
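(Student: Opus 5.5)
The plan is an induction on $t$ that establishes the stronger statement that $M^t v^{(0)}$ is supported in $S$ and agrees there with $(M|_S)^t(v^{(0)}|_S)$. Write $u^{(t)} := M^t v^{(0)}$ and $w^{(t)} := (M|_S)^t(v^{(0)}|_S)$. The inductive claim will be: (a) $u^{(t)}_i = 0$ for all $i \notin S$, and (b) $u^{(t)}|_S = w^{(t)}$.

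The base case $t=0$ is immediate. Since $S$ contains the support of $v^{(0)}$, the vector $v^{(0)}$ vanishes off $S$, and $u^{(0)}|_S = v^{(0)}|_S = w^{(0)}$.

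For the inductive step I will write, for each $i\in\Omega$,
\[
u^{(t+1)}_i = \sum_{j\in\Omega} M_{i,j}u^{(t)}_j = \sum_{j\in S} M_{i,j}u^{(t)}_j ,
\]
where the second equality uses hypothesis (a) to drop the terms with $j\notin S$.

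If $i\notin S$, then $M$-invariance (Definition~\ref{def:reachable_set}) gives $M_{i,j}=0$ for every $j\in S$. Hence $u^{(t+1)}_i=0$, which is (a) at step $t+1$.

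If $i\in S$, the remaining sum is exactly $\bigl(M|_S\,(u^{(t)}|_S)\bigr)_i$. By hypothesis (b), $u^{(t)}|_S = w^{(t)}$, so this equals $\bigl(M|_S\, w^{(t)}\bigr)_i = w^{(t+1)}_i$, which is (b) at step $t+1$.

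No real obstacle is expected. The one point needing care is that (a) must be carried along in the induction: without the vanishing of $u^{(t)}$ off $S$, the terms with $j\notin S$ could contribute to rows in $S$, since invariance only constrains the columns indexed by $S$.

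I also note that nonnegativity of $M$ is not needed for this argument; it only makes $\Omega_{\mathrm{reach}}$ meaningful. That set is itself invariant, since any $j$ with positive mass at some time and $M_{i,j}>0$ gives $i$ positive mass at the next time, because no cancellation can occur among nonnegative terms. So the lemma applies with $S=\Omega_{\mathrm{reach}}$.
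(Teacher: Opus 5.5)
Your proof is correct and takes the same approach as the paper: induction on $t$, using $M$-invariance to show that $M^t v^{(0)}$ never acquires support outside $S$. Your version makes explicit the support-preservation half of the inductive claim, which the paper's one-line argument leaves implicit.
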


\begin{proof}
Because $S$ is invariant, multiplying by $M$ cannot create support outside $S$ when the input is
supported on $S$. The identity follows by induction on $t$.
\end{proof}

\subsection{Approximate lumping under approximate symmetry}
\label{app:approx_lumping}

\begin{lemma}[A perturbative bound for approximate lumping]
\label{lem:approx_lumping_bound}
Let $M\in\mathbb{R}^{N\times N}$ be nonnegative and column-substochastic and let $L$ be the lumping
map for some partition $\mathcal{P}$. Fix any nonnegative column-substochastic
$\overline{M}\in\mathbb{R}^{m\times m}$ and define the \emph{lumping defect}
\[
\delta := \|LM-\overline{M}L\|_1.
\]
Then for every $t\ge 0$,
\[
\|L M^t - \overline{M}^{\,t} L\|_1 \;\le\; t\,\delta.
\]
Consequently, for any $v^{(0)}\ge 0$,
\[
\|L M^t v^{(0)} - \overline{M}^{\,t} L v^{(0)}\|_1 \le t\,\delta\,\|v^{(0)}\|_1.
\]
\end{lemma}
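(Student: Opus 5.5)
The plan is a telescoping argument on the lumping defect. First I will record two norm facts. Since $L$ is a $0$--$1$ matrix with exactly one $1$ in each column, $\|L\|_1=1$. Since $M$ and $\overline{M}$ are nonnegative and column-substochastic, their induced $1$-norms (maximum column sum) are at most $1$. It follows that $\|M^j\|_1\le 1$ and $\|\overline{M}^{\,j}\|_1\le 1$ for every $j\ge 0$.

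Next, I will write the defect at time $t$ as a telescoping sum that peels off one step at a time:
\begin{equation*}
LM^t-\overline{M}^{\,t}L=\sum_{j=0}^{t-1}\overline{M}^{\,j}\bigl(LM-\overline{M}L\bigr)M^{t-1-j}.
\end{equation*}
To check this identity, expand each summand as $\overline{M}^{\,j}LM^{t-j}-\overline{M}^{\,j+1}LM^{t-1-j}$. Consecutive terms cancel, leaving $LM^t-\overline{M}^{\,t}L$. For $t=0$ both sides are zero and the sum is empty, so the identity holds there too.

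I then apply submultiplicativity of the induced $1$-norm to each summand:
\begin{equation*}
\bigl\|\overline{M}^{\,j}(LM-\overline{M}L)M^{t-1-j}\bigr\|_1\le \|\overline{M}^{\,j}\|_1\,\delta\,\|M^{t-1-j}\|_1\le\delta.
\end{equation*}
Summing these $t$ bounds gives $\|LM^t-\overline{M}^{\,t}L\|_1\le t\delta$. The vector statement follows immediately from the definition of the induced norm: $\|(LM^t-\overline{M}^{\,t}L)v^{(0)}\|_1\le t\delta\|v^{(0)}\|_1$. Nonnegativity of $v^{(0)}$ is not actually needed for this step.

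No step here is a real obstacle. The only points needing care are the index bookkeeping in the telescoping identity and confirming that substochasticity really does bound the induced $1$-norms by $1$. That bound is exactly where the hypothesis that $\overline{M}$ is nonnegative and column-substochastic enters.
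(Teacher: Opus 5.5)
Your proposal is correct and is essentially the paper's argument: the paper sets $E_t := LM^t-\overline{M}^{\,t}L$, uses the one-step recursion $E_{t+1}=(LM-\overline{M}L)M^t+\overline{M}E_t$ with $\|M\|_1,\|\overline{M}\|_1\le 1$ to get $\|E_{t+1}\|_1\le\delta+\|E_t\|_1$, and inducts. Your telescoping sum is exactly that recursion unrolled. (The fact $\|L\|_1=1$ that you record is never used, since $\delta$ already includes $L$.)
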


\begin{proof}
Let $E_t:=LM^t-\overline{M}^{\,t}L$. Then $E_0=0$ and
\[
E_{t+1}=(LM-\overline{M}L)M^t+\overline{M}E_t.
\]
Taking induced $1$-norms and using $\|M\|_1\le 1$ and $\|\overline{M}\|_1\le 1$ for column-substochastic
nonnegative matrices yields $\|E_{t+1}\|_1\le \delta+\|E_t\|_1$, hence $\|E_t\|_1\le t\delta$ by induction.
\end{proof}

\begin{remark}[A canonical ``best symmetric'' coarse model via group averaging]
\label{rem:symmetrize_then_lump}
Suppose a target symmetry group $G$ acts on $\Omega$ via permutation matrices $\{P_g\}_{g\in G}$, but
the true kernel $M$ is only approximately $G$-invariant. A canonical exactly $G$-invariant
approximation is the group-averaged (symmetrized) kernel
\[
M^{(G)} \;:=\; \frac{1}{|G|}\sum_{g\in G} P_g\,M\,P_g^{-1},
\]
which satisfies $P_h M^{(G)} P_h^{-1}=M^{(G)}$ for all $h\in G$.

Let $L$ be the orbit-lumping map for the $G$-orbit partition and let $\overline{M}$ be the exact
orbit quotient of $M^{(G)}$, so that $L M^{(G)}=\overline{M}L$. Then the lumping defect of $M$
relative to this coarse model obeys
\[
\delta \;=\;\|LM-\overline{M}L\|_1
\;=\;\|L(M-M^{(G)})\|_1
\;\le\;\|M-M^{(G)}\|_1,
\]
since $\|L\|_1=1$ for an orbit-sum map. Combining this estimate with Lemma~\ref{lem:approx_lumping_bound}
gives a concrete way to translate ``how badly symmetry is broken'' (measured by $\|M-M^{(G)}\|_1$)
into an explicit bound on coarse-grained prediction error over $t$ cycles.
\end{remark}

\section{Example: the $[[7,1,3]]$ Steane code orbit quotient under $\mathrm{GL}(3,2)$}
\label{app:steane_orbit_lumping}

This appendix gives a second concrete symmetry-based lumping example (beyond
Example~\ref{ex:422_quotient}) to demonstrate that the permutation/orbit reduction is not special to
the $[[4,2,2]]$ code.  We use the $[[7,1,3]]$ Steane code, which has a large data-wire permutation
automorphism group (order $168$).  The reduced QED state space has $2^7+1=129$ states, and orbit
lumping under this symmetry group yields an \emph{exact} $11$-state quotient chain.

\subsection{Steane code presentation and its data-wire permutation symmetries}

The Steane code is a self-dual CSS code derived from the classical $[7,4,3]$ Hamming code.  Fix the
$3\times 7$ parity-check matrix
\begin{equation}
H \;=\;
\begin{bmatrix}
1&1&1&1&0&0&0\\
1&1&0&0&1&1&0\\
1&0&1&0&1&0&1
\end{bmatrix},
\label{eq:steane_H}
\end{equation}
whose columns are precisely the seven nonzero vectors in $\mathbb{F}_2^3$.

Define the Steane stabilizer generators (three $X$-type and three $Z$-type) by
\begin{equation}
S_i^X \;:=\; \prod_{j=1}^7 X_j^{H_{ij}},
\qquad
S_i^Z \;:=\; \prod_{j=1}^7 Z_j^{H_{ij}},
\qquad i\in\{1,2,3\}.
\label{eq:steane_generators}
\end{equation}
A standard choice of logical Paulis is
\begin{equation}
\overline{X} := X^{\otimes 7}, \qquad \overline{Z} := Z^{\otimes 7},
\label{eq:steane_logicals}
\end{equation}
which commute with all stabilizers and anticommute with each other.

\paragraph{Logical basis.}
Let $C=\ker H$ be the $[7,4,3]$ Hamming code and let
$C^\perp=\operatorname{rowspan}(H)$.
Define the logical basis states in the usual CSS way:
\begin{equation}
\ket{0}_L := \frac{1}{\sqrt{8}}\sum_{c\in C^\perp}\ket{c},
\qquad
\ket{1}_L := \overline{X}\ket{0}_L.
\label{eq:steane_logical_states}
\end{equation}
Then $\overline{Z}\ket{\ell}_L = (-1)^{\ell}\ket{\ell}_L$ for $\ell\in\{0,1\}$.

\paragraph{The wire-permutation symmetry group.}
Index the seven data qubits by the seven nonzero vectors in $\mathbb{F}_2^3$ (i.e., by the columns
of $H$).  For any $A\in \mathrm{GL}(3,2)$, define a permutation $\pi_A$ of qubits by the rule
\begin{equation}
\text{``the column label } h_j \mapsto A h_j \text{''} \quad\Longleftrightarrow\quad
\pi_A(j)=j' \text{ such that } h_{j'} = A h_j,
\label{eq:steane_piA_def}
\end{equation}
where $h_j$ denotes the $j$th column of $H$.  This is well-defined because the columns are distinct
and $A$ maps nonzero vectors to nonzero vectors.  The resulting set
$G:=\{\pi_A : A\in \mathrm{GL}(3,2)\}$ is a subgroup of $S_7$ of size $|\mathrm{GL}(3,2)|=168$.

Permuting columns of $H$ by $\pi_A$ produces $AH$, which differs from $H$
only by an invertible row transformation. Hence conjugation by $U_{\pi_A}$
preserves the stabilizer group, though not necessarily the chosen generating
set. Moreover,
$\overline{X}=X^{\otimes 7}$ and $\overline{Z}=Z^{\otimes 7}$ are invariant under any wire
permutation, so $U_{\pi_A}$ preserves the logical basis~\eqref{eq:steane_logical_states} up to phase.
Consequently, each $U_{\pi_A}$ is monomial in the code-adapted basis $\mathcal{B}_{\mathcal{C}}$ of
Section~\ref{sec:code_specific_basis}.

\subsection{Induced action on Steane syndromes}

Write a Steane syndrome as a pair
\begin{equation}
s=(s_Z,s_X)\in\mathbb{F}_2^3\times \mathbb{F}_2^3,
\label{eq:steane_syndrome_split}
\end{equation}
where $s_Z$ records eigenvalues of the $Z$-type checks $\{S_i^Z\}$ (detecting $X$ components) and
$s_X$ records eigenvalues of the $X$-type checks $\{S_i^X\}$ (detecting $Z$ components).
For a Pauli error $E=X^a Z^b$ with $a,b\in\mathbb{F}_2^7$ (ignoring phases), the CSS commutation rules give
\begin{equation}
s_Z(E) = H a^T,\qquad s_X(E) = H b^T.
\label{eq:steane_syndrome_map}
\end{equation}

\begin{lemma}[Syndrome covariance under $\mathrm{GL}(3,2)$]
\label{lem:steane_syndrome_covariance}
Let $A\in\mathrm{GL}(3,2)$ and let $\pi_A$ be the induced qubit permutation~\eqref{eq:steane_piA_def}.
Then for every Pauli error $E$,
\begin{equation}
(s_Z(U_{\pi_A} E U_{\pi_A}^\dagger),\, s_X(U_{\pi_A} E U_{\pi_A}^\dagger))
=
(A\,s_Z(E),\, A\,s_X(E)).
\label{eq:steane_syndrome_action}
\end{equation}
\end{lemma}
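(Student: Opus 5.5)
The argument is a direct computation with the binary-symplectic representation of Paulis and the syndrome map~\eqref{eq:steane_syndrome_map}. First fix conventions. For a Pauli $E = X^a Z^b$ (phases ignored), conjugation by the wire-permutation unitary $U_{\pi_A}$ moves the operator on qubit $j$ to qubit $\pi_A(j)$. So $U_{\pi_A} E U_{\pi_A}^\dagger = X^{a'} Z^{b'}$ with $a'_{\pi_A(j)} = a_j$ and $b'_{\pi_A(j)} = b_j$. Equivalently, $a' = P a$ and $b' = P b$, where $P$ is the $7\times 7$ permutation matrix with $P e_j = e_{\pi_A(j)}$. Phases do not affect syndromes, so it suffices to track $(a,b)$.

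The key step is the column identity $H P = A H$, which holds as an identity of $3\times 7$ matrices over $\mathbb{F}_2$. To prove it, compare columns. The $j$th column of $HP$ is $H P e_j = H e_{\pi_A(j)} = h_{\pi_A(j)}$. By definition~\eqref{eq:steane_piA_def}, this equals $A h_j$, which is the $j$th column of $AH$.

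With this identity, the syndrome transforms as $s_Z(U_{\pi_A} E U_{\pi_A}^\dagger) = H a'^T = H P a^T = A H a^T = A\, s_Z(E)$. The same computation with $b$ in place of $a$ gives $s_X(U_{\pi_A}EU_{\pi_A}^\dagger) = A\, s_X(E)$, since the same matrix $H$ defines both the $X$-type and $Z$-type checks in~\eqref{eq:steane_generators}. This establishes~\eqref{eq:steane_syndrome_action}.

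The only real obstacle is getting the direction of the permutation consistent, i.e.\ whether conjugation pushes qubit $j$ to $\pi_A(j)$ or to $\pi_A^{-1}(j)$. I would fix $U_{\pi}$ by $U_\pi \ket{x_1\cdots x_7} = \ket{x_{\pi^{-1}(1)}\cdots x_{\pi^{-1}(7)}}$, so that $U_\pi O_j U_\pi^\dagger = O_{\pi(j)}$ for any single-qubit operator $O_j$ on qubit $j$. I would state this convention explicitly. With the opposite convention, the same argument yields $A^{-1}$ in place of $A$. That is also an element of $\mathrm{GL}(3,2)$, so the group-level conclusions are unchanged.

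A second point to address is that $s_Z,s_X$ are computed with respect to the fixed generating set given by the rows of $H$, even though $U_{\pi_A}$ maps that set to the generators indexed by the rows of $AH$. Because~\eqref{eq:steane_syndrome_map} already expresses the syndrome in terms of the fixed generators, the computation above needs no reference to how the generators themselves transform.
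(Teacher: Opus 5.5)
Your proof is correct and follows essentially the same route as the paper: conjugation permutes the support vectors $a,b$, the definition $h_{\pi_A(j)} = Ah_j$ gives the column identity $HP = AH$ (the paper's ``$H\cdot(\pi_A^{-1}$ acting on columns$) = AH$''), and the syndrome map $Ha^T$, $Hb^T$ then picks up the factor $A$. Your explicit choice of convention for $U_\pi$, and your remark that syndromes are taken with respect to the fixed rows of $H$, spell out details the paper leaves implicit.
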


\begin{proof}
Conjugating by $U_{\pi_A}$ permutes the $a,b\in\mathbb{F}_2^7$ support vectors by the same coordinate
permutation: $a\mapsto a\circ \pi_A^{-1}$ and $b\mapsto b\circ \pi_A^{-1}$.  Since $\pi_A$ permutes
the columns of $H$ according to $h_j\mapsto Ah_j$, we have the matrix identity
$H\cdot (\pi_A^{-1}\text{ acting on columns}) = A H$ over $\mathbb{F}_2$.  Therefore,
\[
H(a\circ\pi_A^{-1})^T = A(Ha^T),\qquad H(b\circ\pi_A^{-1})^T = A(Hb^T),
\]
which is exactly~\eqref{eq:steane_syndrome_action}.
\end{proof}

For $u\neq0$, let $j(u)$ be the qubit whose column of $H$ is $u$, and write
$X(u):=X_{j(u)}$, $Z(u):=Z_{j(u)}$, with $X(0)=Z(0)=I$.
Choose the syndrome representatives
\[
E_{(s_Z,s_X)}:=X(s_Z)Z(s_X),
\qquad
\ket{(s_Z,s_X),\ell}:=E_{(s_Z,s_X)}\ket{\ell}_L.
\]
These representatives are equivariant:
$U_{\pi_A}E_{(s_Z,s_X)}U_{\pi_A}^\dagger=E_{(As_Z,As_X)}$.
Since the logical codewords are invariant, the induced basis action is
$\ket{(s_Z,s_X),\ell}\mapsto\ket{(As_Z,As_X),\ell}$.

\subsection{Orbit partition and exact quotient dimension}

For the reduced QED model, the boundary state space is
\[
\overline{\mathcal{Q}} \;=\; \mathcal{B}_{\mathcal{C}} \cup \{R\},
\qquad
|\overline{\mathcal{Q}}| = 2^7+1 = 129,
\]
where $\mathcal{B}_{\mathcal{C}}=\{\ket{s,\ell}: s\in\mathbb{F}_2^3\times\mathbb{F}_2^3,\ \ell\in\{0,1\}\}$
is the code-adapted basis and $R$ is the absorbing rejection state (Section~\ref{sec:qed_lumping}).

For this example, assume that the reduced QED and gate kernels are
invariant under the induced $G\cong\mathrm{GL}(3,2)$ data-wire action.
A simple symmetric realization is a complete ideal check bracketed by
homogeneous i.i.d.\ data-Pauli noise layers,
$T_\varepsilon=N_{\mathrm{post}}T_0N_{\mathrm{pre}}$,
together with a homogeneous data-Pauli gate-noise layer, with all noise
kernels fixing $R$. Each factor respects the data-wire symmetry.
By orbit lumping (Theorem~\ref{thm:orbit_lumping_app}), the $G$-orbit
partition of $\overline{\mathcal{Q}}$ is strongly lumpable and yields an
\emph{exact} quotient chain whose dimension equals the number of orbits.

It remains to identify the orbits.  Lemma~\ref{lem:steane_syndrome_covariance} shows that $G$
acts on syndromes by the diagonal linear action
\[
(s_Z,s_X)\mapsto (A s_Z, A s_X).
\]
Since $\mathrm{GL}(3,2)$ is transitive on nonzero vectors in $\mathbb{F}_2^3$ and transitive on
ordered pairs of \emph{distinct} nonzero vectors, the orbit of a syndrome pair $(s_Z,s_X)$ is
determined entirely by the pattern
\[
(s_Z,s_X)\in\{(0,0),\ (u,0),\ (0,u),\ (u,u),\ (u,v)\ \text{with }u\neq v\},
\]
where $u,v\in\mathbb{F}_2^3\setminus\{0\}$.
With the equivariant representative choice above, the logical label $\ell$
is fixed by the action, so each syndrome orbit splits into two orbits
corresponding to $\ell\in\{0,1\}$.

\begin{proposition}[$\mathrm{GL}(3,2)$-orbit decomposition for the Steane reduced state space]
\label{prop:steane_orbit_decomp}
Let $G\cong \mathrm{GL}(3,2)$ act on $\overline{\mathcal{Q}}=\mathcal{B}_{\mathcal{C}}\cup\{R\}$ via the
data-wire permutations $\{\pi_A\}$ described above, fixing $R$.  Then the $G$-orbits in the accepted
space $\mathcal{B}_{\mathcal{C}}$ are exactly the following ten sets (five syndrome-types, each with
$\ell\in\{0,1\}$):
\begin{align*}
\mathcal{O}_{00}^{(\ell)} &:= \{\ket{(0,0),\ell}\}, \\
\mathcal{O}_{X}^{(\ell)} &:= \{\ket{(u,0),\ell} : u\in\mathbb{F}_2^3\setminus\{0\}\}, \\
\mathcal{O}_{Z}^{(\ell)} &:= \{\ket{(0,u),\ell} : u\in\mathbb{F}_2^3\setminus\{0\}\}, \\
\mathcal{O}_{Y}^{(\ell)} &:= \{\ket{(u,u),\ell} : u\in\mathbb{F}_2^3\setminus\{0\}\}, \\
\mathcal{O}_{XZ}^{(\ell)} &:= \{\ket{(u,v),\ell} : u,v\in\mathbb{F}_2^3\setminus\{0\},\ u\neq v\}.
\end{align*}
Together with the singleton orbit $\mathcal{O}_R:=\{R\}$, the orbit partition of
$\overline{\mathcal{Q}}$ has $11$ blocks.  In particular, the exact orbit-quotient chain has
dimension $11$ (down from $|\overline{\mathcal{Q}}|=129$).
\end{proposition}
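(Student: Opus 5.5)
The plan is to compute the orbits directly from the explicit basis action derived just above: $U_{\pi_A}$ sends $\ket{(s_Z,s_X),\ell}$ to $\ket{(As_Z,As_X),\ell}$ up to phase, and $R$ is fixed. The $G$-orbit of a basis label $((s_Z,s_X),\ell)$ is therefore exactly $\{((As_Z,As_X),\ell): A\in\mathrm{GL}(3,2)\}$. The problem reduces to classifying orbits of the diagonal action of $\mathrm{GL}(3,2)$ on $\mathbb{F}_2^3\times\mathbb{F}_2^3$, and then multiplying by the two values of $\ell$, which the action never changes.

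First I will partition $\mathbb{F}_2^3\times\mathbb{F}_2^3$ into the five pattern classes $(0,0)$, $(u,0)$, $(0,u)$, $(u,u)$, and $(u,v)$ with $u\neq v$, where $u,v\neq0$. These classes have sizes $1,7,7,7,42$, summing to $64$. Each class is $G$-invariant, because $A$ is a linear bijection: it fixes $0$, sends nonzero vectors to nonzero vectors, preserves equality $s_Z=s_X$, and preserves distinctness.

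Next I will show that each class is a single orbit.
\begin{itemize}
\item For the classes $(u,0)$, $(0,u)$ and $(u,u)$, transitivity of $\mathrm{GL}(3,2)$ on nonzero vectors suffices. Given $u,u'\neq0$, extend each to a basis of $\mathbb{F}_2^3$ and let $A$ map one basis to the other.
\item For the class $(u,v)$ with $u\neq v$ both nonzero, the key point is that over $\mathbb{F}_2$ two distinct nonzero vectors are automatically linearly independent. The only scalars are $0$ and $1$, so $v=cu$ forces $v=u$. Any such pair therefore extends to a basis $(u,v,w)$, and given another pair $(u',v')$ with basis $(u',v',w')$, the matrix $A$ sending $u\mapsto u'$, $v\mapsto v'$, $w\mapsto w'$ is invertible and maps $(u,v)$ to $(u',v')$.
\end{itemize}
This gives exactly the five syndrome orbits. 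With $\ell\in\{0,1\}$ preserved, they yield the ten orbits $\mathcal{O}^{(\ell)}$ listed in the proposition, and adjoining the fixed singleton $\{R\}$ gives $11$ blocks. As a sanity check, the block sizes $2(1+7+7+7+42)+1=129$ match $|\overline{\mathcal{Q}}|$.

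The exactness of the $11$-state quotient then follows from Theorem~\ref{thm:orbit_lumping_app}, applied under the stated invariance of the reduced kernels.

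The main obstacle is not the group theory but justifying the basis action with equivariant representatives. One must check that $U_{\pi_A}E_{(s_Z,s_X)}U_{\pi_A}^\dagger=E_{(As_Z,As_X)}$, which holds because conjugation maps $X_{j(u)}$ to $X_{\pi_A(j(u))}=X_{j(Au)}$. One must also check that the logical codewords are fixed up to phase: $U_{\pi_A}$ preserves $C^\perp=\mathrm{rowspan}(H)$ as a set, since the columns are permuted into $AH$, whose row span equals that of $H$. Phases are irrelevant, because population-level permutations ignore them. I will state this step carefully, and otherwise take the earlier-derived action as given.
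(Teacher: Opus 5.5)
Your proposal is correct and follows essentially the same route as the paper. You reduce to the diagonal $\mathrm{GL}(3,2)$ action on syndrome pairs, use transitivity on nonzero vectors for the $(u,0)$, $(0,u)$, $(u,u)$ classes, and use the linear independence of distinct nonzero vectors over $\mathbb{F}_2$ for the $(u,v)$ class; you then split each class by the invariant label $\ell$ and adjoin $R$, as the paper does. Your explicit checks of representative equivariance and codeword invariance match what the paper establishes in the text just before the proposition.
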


\begin{proof}
The set $\{0\}$ is fixed by the linear action, and $\mathrm{GL}(3,2)$ is transitive on
$\mathbb{F}_2^3\setminus\{0\}$, so there is one orbit each for $(u,0)$ and $(0,u)$ with $u\neq 0$.
For $(u,u)$ with $u\neq 0$, transitivity again gives a single orbit.  For ordered pairs
$(u,v)$ with $u,v\neq 0$ and $u\neq v$, any such pair is linearly independent over $\mathbb{F}_2$
(because the only nonzero scalar is $1$), so there exists an invertible linear map sending
$(u,v)$ to any other ordered independent pair; hence there is a single orbit of size $7\cdot 6=42$.
Finally, the equivariant representative choice makes $\ell$ invariant, so
each syndrome-orbit splits into two disjoint orbits indexed by $\ell\in\{0,1\}$.  The rejection state
$R$ is fixed by definition of the reduced space and is a singleton orbit.  Counting orbits gives
$2\cdot 5 + 1 = 11$.
\end{proof}

\paragraph{Orbit sizes and dimension reduction.}
There are $7$ nonzero vectors in $\mathbb{F}_2^3$, so
\[
|\mathcal{O}_{00}^{(\ell)}|=1,\quad
|\mathcal{O}_{X}^{(\ell)}|=|\mathcal{O}_{Z}^{(\ell)}|=|\mathcal{O}_{Y}^{(\ell)}|=7,\quad
|\mathcal{O}_{XZ}^{(\ell)}|=7\cdot 6=42.
\]
Summing over $\ell\in\{0,1\}$ gives $2(1+7+7+7+42)=128$ accepted states, and adjoining $R$ gives
$128+1=129$ total.  Thus the symmetry/orbit quotient reduces the exact Markov model from dimension
$2^7+1=129$ down to $11$ while preserving all orbit-constant observables (Lemma~\ref{lem:observable_preserved})
and remaining exact under the symmetry assumptions (Theorem~\ref{thm:orbit_lumping_app}).

\begin{center}
\begin{tabular}{c c c}
\hline
Orbit & Interpretation (accepted-space macrostate) & Size \\
\hline
$\mathcal{O}_{00}^{(0)}$ & target logical basis state ($\ket{0}_L$) & $1$ \\
$\mathcal{O}_{00}^{(1)}$ & other logical basis state ($\ket{1}_L$) & $1$ \\
$\mathcal{O}_{X}^{(\ell)}$ & $X$-only syndrome ($s_Z\neq 0,\ s_X=0$) & $7$ \\
$\mathcal{O}_{Z}^{(\ell)}$ & $Z$-only syndrome ($s_Z=0,\ s_X\neq 0$) & $7$ \\
$\mathcal{O}_{Y}^{(\ell)}$ & matched syndrome ($s_Z=s_X\neq 0$; $Y$-type) & $7$ \\
$\mathcal{O}_{XZ}^{(\ell)}$ & mixed syndrome ($s_Z,s_X\neq 0$, $s_Z\neq s_X$) & $42$ \\
$\mathcal{O}_{R}$ & rejection $R$ (absorbing) & $1$ \\
\hline
\end{tabular}
\end{center}

\end{document}